\DeclareRobustCommand{\SkipTocEntry}[5]{}
\setlist{itemsep=.5\baselineskip,topsep=.5\baselineskip}
\numberwithin{equation}{section}
\theoremstyle{plain}
\newtheorem{theorem}{Theorem}[section]
\newtheorem{thm}[theorem]{Theorem}
\newtheorem{lemma}[theorem]{Lemma}
\newtheorem{prop}[theorem]{Proposition}
\newtheorem{defn}[theorem]{Definition}
\newtheorem{cor}[theorem]{Corollary}
\newtheorem{rmk}[theorem]{Remark}
\newcommand{\iso}{\cong}
\newcommand{\arr}{\rightarrow}
\newcommand{\R}{\mathbb{R}}
\newcommand{\C}{\mathbb{C}}
\newcommand{\Z}{\mathbb{Z}}
\newcommand{\eps}{\epsilon}
\newcommand{\Id}{\mathbbm{1}}
\newcommand{\mcB}{\mathcal{B}}
\newcommand{\mcC}{\mathcal{C}}
\newcommand{\mcF}{\mathcal{F}}
\newcommand{\mcG}{\mathcal{G}}
\newcommand{\mcI}{\mathcal{I}}
\newcommand{\mcN}{\mathcal{N}}
\newcommand{\mcO}{\mathcal{O}}
\newcommand{\mcU}{\mathcal{U}}
\newcommand{\mbN}{\mathbb{N}}
\newcommand{\norm}[1]{\left\lVert#1\right\rVert}
\renewcommand{\Re}{\operatorname{Re}}
\renewcommand{\Im}{\operatorname{Im}}
\DeclareMathOperator{\tr}{tr}
\DeclareMathOperator{\sgn}{sgn}
\DeclareMathOperator{\HLPC}{\Gamma_0}
\DeclareMathOperator{\EHLPC}{E\Gamma_0}
\title{The set of quantum correlations is not closed}
\author{William Slofstra}
\begin{document}

\begin{abstract}
    We construct a linear system non-local game which can be played perfectly
    using a limit of finite-dimensional quantum strategies, but which cannot be
    played perfectly on any finite-dimensional Hilbert space, or even with any
    tensor-product strategy. In particular, this shows that the set of
    (tensor-product) quantum correlations is not closed. The constructed
    non-local game provides another counterexample to the ``middle'' Tsirelson
    problem, with a shorter proof than our previous paper (though at the loss
    of the universal embedding theorem). We also show that it is undecidable to
    determine if a linear system game can be played perfectly with a
    finite-dimensional strategy, or a limit of finite-dimensional quantum
    strategies. 
\end{abstract}

\maketitle

\section{Introduction}

A two-player non-local game $\mcG$ consists of finite question sets $\mcI_A$
and $\mcI_B$, finite output sets $\mcO_A$ and $\mcO_B$, and a function $V :
\mcO_A \times \mcO_B \times \mcI_A \times \mcI_B \arr \{0,1\}$.  During the
game, the two players, commonly called Alice and Bob, are given inputs $x \in
\mcI_A$ and $y \in \mcI_B$ respectively, and return outputs $a \in \mcO_A$ and
$b \in \mcO_B$ respectively. The players win if $V(a,b|x,y) = 1$, and lose if
$V(a,b|x,y)=0$. The players know the rules of the game, and can decide ahead of
time on their strategy. However, once the game is in progress, they are unable
to communicate, meaning they do not know each others inputs or subsequent
choices. This can make it impossible for the players to win with certainty.

Imagine that the game is played repeatedly. To an outside observer, Alice and
Bob's actions during the game are described by the probability $p(a,b|x,y)$
that Alice and Bob output $a \in \mcO_A$ and $b \in \mcO_B$ on inputs $x \in
\mcI_A$ and $y \in \mcI_B$. The collection $\{p(a,b|x,y)\} \subset \R^{\mcO_A
\times \mcO_B \times \mcI_A \times \mcI_B}$ is called a \emph{correlation
matrix} (or a \emph{behaviour}).  Which correlation matrices can be achieved
depends on the physical model. For instance, a correlation matrix
$\{p(a,b|x,y)\}$ is said to be \emph{classical} if it can be achieved using
classical shared randomness. Formally, this means that there must be some
integer $k \geq 1$, a probability distribution
$\{\lambda_i \}$ on $\{1,\ldots,k\}$, probability distributions $\{p^{ix}_a\}$
on $\mcO_A$ for each $1 \leq i \leq k$ and $x \in \mcI_A$, and probability
distributions $\{q^{iy}_b\}$ on $\mcO_B$ for each $1 \leq i \leq k$ and $y \in
\mcI_B$, such that 
\begin{equation*}
    p(a,b|x,y) = \sum_{i=1}^k \lambda_i p^{ix}_a q^{iy}_b \text{ for all }
        (a,b,x,y) \in \mcO_A \times \mcO_B \times \mcI_A \times \mcI_B.
\end{equation*}
The set of classical correlation matrices is denoted by
$C_c(\mcO_A,\mcO_B,\mcI_A,\mcI_B)$, although we typically write $C_c$ when
the output and input sets are clear. 

In quantum information, we are interested in what correlations can be achieved
with a shared quantum state. Accordingly, a correlation matrix is said to be
\emph{quantum} if there are finite-dimensional Hilbert spaces $H_A$ and $H_B$,
a quantum state $\ket{\psi} \in H_A \otimes H_B$, projective
measurements\footnote{A projective measurement on a Hilbert space $H$ is a
collection $\{P_x\}_{x \in X}$ of self-adjoint operators on $H$, such that
$P_x^2 = P_x$ for all $x \in X$, and $\sum_{x \in X} P_x = \Id$. The set
$X$ is interpreted as the set of measurement outcomes.} $\{M^{x}_a\}_{a \in
\mcO_A}$ on $H_A$ for every $x \in \mcI_A$, and projective measurements
$\{N^{y}_b\}_{b \in \mcO_B}$ on $H_B$ for every $y \in \mcI_B$, such that
\begin{equation*}
    p(a,b|x,y) = \bra{\psi} M^x_{a} \otimes N^y_{b} \ket{\psi} \text{ for all } 
        (a,b,x,y) \in \mcO_A \times \mcO_B \times \mcI_A \times \mcI_B.
\end{equation*}
The set of quantum correlation matrices is denoted by $C_q \cong
C_q(\mcO_A,\mcO_B,\mcI_A,\mcI_B)$. There are two natural variations on this
definition. We can drop the requirement that $H_A$ and $H_B$ be
finite-dimensional, in which case we get another set of correlations often
denoted by $C_{qs}$. We can also look at correlations which can be realized as
limits of finite-dimensional quantum correlations; the corresponding
correlation set is the closure of $C_q$, and is typically denoted by $C_{qa}$.
It is well-known that $C_{qs} \subseteq C_{qa}$, and consequently $C_{qa}$ is
also the closure of $C_{qs}$ \cite{SW08}.

Since $C_{qs} \subseteq C_{qa}$, we get a hierarchy of correlation sets
\begin{equation*}
    C_{c} \subseteq C_{q} \subseteq C_{qs} \subseteq C_{qa}.
\end{equation*}
All the sets involved are convex, and $C_c$ and $C_{qa}$ are both closed. 
Bell's celebrated theorem \cite{Be64} states that $C_c \neq C_q$, and
furthermore that the two sets can be separated by a hyperplane. It has been a
longstanding open problem to determine the relationship between the 
quantum correlation sets, and in particular to determine whether $C_q$ and
$C_{qs}$ are closed (see, i.e., \cite{Ts06,WCD08,Fr12,BLP17}). Part of the
interest in this latter question comes from the resource theory of non-local
games: $C_q \neq C_{qa}$ if and only if there is a non-local game which can be
played optimally (with respect to some payoff function) using a limit of
finite-dimensional quantum strategies, but cannot be played optimally using any
fixed dimension. Numerical evidence has suggested that even very simple
non-local games might have this property \cite{PV10, LW}. For variants of non-local
games (for instance, with quantum questions, or infinite output sets), there
are several examples of games with this property \cite{LTW13,MV14,RV15}.

The purpose of this paper is to show that there are indeed non-local games
(with finite classical input and output sets) that cannot be played optimally
using any fixed dimension. A \emph{perfect strategy} for a non-local game
$\mcG$ is a correlation matrix $\{p(a,b|x,y)\}$ such that Alice and Bob win
with probability one on every pair of inputs $x$ and $y$. Formally, this means
that for all $(a,b,x,y) \in \mcO_A \times \mcO_B \times \mcI_A \times \mcI_B$,
if $V(a,b|x,y) = 0$, then $p(a,b|x,y)=0$. 
\begin{theorem}\label{T:main}
    There is a non-local game with a perfect strategy in $C_{qa}$, but no
    perfect strategy in $C_{qs}$. 
\end{theorem}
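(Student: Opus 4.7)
The plan is to work within the framework of linear system non-local games. For a linear system $Mx=b$ over $\Z_2$, the associated solution group $\Gamma$ is presented with generators $x_1,\ldots,x_n$ (one per variable), relations $x_i^2 = e$, commutation $[x_i,x_j]=e$ for variables appearing in a common equation, and, for each equation $k$, a product relation saying that the product of the variables in equation $k$ equals $J^{b_k}$, where $J$ is a distinguished central element with $J^2=e$. The standard dictionary, going back to Cleve and Mittal, matches perfect strategies for the associated game $\mcG$ with representations of $\Gamma$ sending $J$ to $-\Id$: a perfect $C_q$ strategy corresponds to a finite-dimensional such representation, while a perfect $C_{qs}$ strategy corresponds to one on any Hilbert space. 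In particular, the latter exists if and only if $J \neq e$ in $\Gamma$, since whenever $J \neq e$ the left regular representation has a nonzero $-1$-eigenspace for $J$ and restricts there to a representation with $\pi(J)=-\Id$.

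A perfect $C_{qa}$ strategy, by contrast, should correspond to an approximate version of this picture: a sequence of finite-dimensional maps satisfying the defining relations of $\Gamma$ only asymptotically (in Hilbert--Schmidt or tracial norm, say) while sending $J$ to approximately $-\Id$. Crucially, such approximate representations can exist even when $J = e$ in $\Gamma$: this happens precisely when the relation $J=e$ is \emph{unstable} with respect to the other relations, meaning $J=e$ is a consequence of the other relations but not robustly so under small perturbations. Theorem~\ref{T:main} thus reduces to producing a linear system whose solution group $\Gamma$ satisfies (i) $J=e$ in $\Gamma$, ruling out any $C_{qs}$-perfect strategy, and (ii) there is a sequence of approximate finite-dimensional representations in which $J$ is approximately $-\Id$, yielding a $C_{qa}$-perfect strategy.

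To construct such a $\Gamma$, I would use an embedding-type result asserting that a wide class of finitely presented groups can be realized as subgroups of solution groups in a manner that preserves the structure of $J$. The group $G$ to embed should have $J$ trivial in $G$ but possess finite-dimensional approximate representations in which $J$ is nontrivial. Natural candidates are central extensions (by $\langle J \rangle$) of non-residually-finite but soficity-having groups, or HNN extensions built from groups with well-understood approximate representations, arranged so that the solution-group relations force $J=e$ exactly while still allowing $J \approx -\Id$ in approximation.

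The main obstacle is achieving this balance. Proving $J=e$ exactly in $\Gamma$ requires a combinatorial group-theoretic argument showing that the defining relations actually imply $J=e$, while proving the approximate representations exist requires exhibiting explicit finite-dimensional operators that satisfy the relations approximately in trace norm and have $J$ close to $-\Id$, typically leveraging approximation properties of the embedded group. Meeting both demands simultaneously against the rigidity of the solution-group presentation is the crux of the argument, and is what has kept this result open despite its being widely conjectured.
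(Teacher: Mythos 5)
Your proposal contains a fatal confusion between $C_{qs}$ and $C_{qc}$. You claim that a perfect $C_{qs}$ strategy exists if and only if $J \neq e$ in $\Gamma$, using the left regular representation of $\Gamma$ restricted to the $-1$-eigenspace of $J$. That argument actually characterizes perfect $C_{qc}$ (commuting-operator) strategies, not $C_{qs}$ strategies: a single representation $\pi$ of $\Gamma$ with $\pi(J)=-\Id$ on an infinite-dimensional Hilbert space does not carry the required tensor-product structure $H_A \otimes H_B$ underlying $C_{qs}$. The actual characterization, due to Cleve--Mittal and Cleve--Liu--Slofstra (Theorem~\ref{T:solutiongroup}), is that a linear system game has a perfect strategy in $C_{qs}$ if and only if it has one in $C_q$, if and only if $J_\Gamma$ is non-trivial in some \emph{finite-dimensional} representation of $\Gamma$. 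The collapse of $C_{qs}$ to $C_q$ for perfect strategies of linear system games is itself non-trivial and is exactly the reason the theorem is attackable by this route.

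This mistake then makes your reduction internally inconsistent. You reduce the theorem to producing $\Gamma$ with (i) $J=e$ in $\Gamma$, and (ii) finite-dimensional approximate representations with $J$ close to $-\Id$. But these are incompatible: if $J = e$ in $\Gamma = \langle S : R \rangle$, then there is a fixed word $w = \prod_{i=1}^L g_i r_i^{\pm 1} g_i^{-1}$ in $\mcF(S)$ representing $J$, and for any $\eps$-representation $\phi$ one gets $\norm{\phi(w) - \Id} \leq L\eps$ by the triangle inequality and conjugation-invariance of the normalized Hilbert--Schmidt norm, so $\ell^{fa}(J) = 0$. Thus $J=e$ forces $J$ to be trivial in approximate representations. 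The correct target is instead: $J_\Gamma$ trivial in all \emph{finite-dimensional representations} but non-trivial in approximate representations (hence also non-trivial in $\Gamma$ itself). The remainder of your sketch — embedding a suitable hyperlinear, non-residually-finite group into a solution group while controlling the fate of $J$ — is aimed in the right direction, but the stated reduction needs to be corrected before the approach can succeed.
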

In particular, neither $C_q$ or $C_{qs}$ are closed. The proof is constructive,
with the game in question having input sets of size $184$ and
$235$, and output sets of size $8$ and $2$. 

The set $C_q$ is related to the cone of completely positive-semidefinite (cpsd)
matrices defined in \cite{LP15}. An $n \times n$ matrix $M$ is said to be cpsd
if there are non-negative operators $P_1,\ldots,P_n$ on some finite-dimensional
Hilbert space with $M_{ij} = \tr(P_i P_j)$ for all $1 \leq i,j \leq n$. By a
theorem of Sikora and Varvitsiotis \cite{SV16}, the set $C_{q}$ is an affine
slice of the cone of cpsd matrices, so the cone of cpsd matrices is not
closed as a consequence of Theorem \ref{T:main}. 

The fact that $C_{qs} \neq C_{qa}$ also has an interesting reformulation. Let
$G_i$ be the $n$-fold free product $\Z_{m} * \cdots * \Z_{m}$, where
$n=|\mcI_i|$ and $m = |\mcO_i|$, for $i=A,B$. Let $M^{x}_a$ denote the $a$th
spectral projector of the $x$th factor of $G_A$ in the full group $C^*$-algebra
$C^*(G_A)$ of $G_A$, and define $M^y_b$ similarly for $C^*(G_B)$. For each
$i=A,B$, find a faithful representation $\nu_i$ of $C^*(G_i)$ on some Hilbert
space $H_i$. The minimal (or spatial) tensor product $C^*(G_A) \otimes_{s}
C^*(G_B)$ is the norm-closure of the image $\nu_A(C^*(G_A)) \otimes
\nu_B(C^*(G_B))$ in the $C^*$-algebra $\mcB(H_A \otimes H_B)$. A correlation
matrix $\{p(a,b|x,y)\}$ belongs to $C_{qa}$ if and only if there is a state
$\omega$ on the $C^*$-algebra $C^*(G_A) \otimes_{s} C^*(G_B)$ with 
\begin{equation*}
    p(a,b|x,y) = \omega(M^x_a \otimes N^y_b)
\end{equation*}
for all $(a,b,x,y)\in \mcO_A \times \mcO_B \times \mcI_A \times \mcI_B$
\cite{SW08,Fr12}. On the other hand, the correlation matrix belongs to $C_{qs}$ if and
only if there are representations $\phi_i$ of $G_i$ on $H_i$, $i=A,B$, and a
vector state $\ket{\psi} \in H_A \otimes H_B$, with
\begin{equation*}
    p(a,b|x,y) = \bra{\psi} \phi_A(M^{x}_a) \otimes \phi_B(N^y_b) \ket{\psi}
\end{equation*}
for all $(a,b,x,y) \in \mcO_A \times \mcO_B \times \mcI_A \times \mcI_B$. Since
$C_{qs} \neq C_{qa}$, there can be states on the minimal tensor product
$C^*(G_A) \otimes_{s} C^*(G_B)$ which do not come from vector states on some
tensor-product $\phi_A \otimes \phi_B$ of representations $\phi_A$ and
$\phi_B$.

There is another candidate set of quantum correlations, the commuting-operator
correlations $C_{qc}$, which contains $C_{qa}$. Determining whether $C_{qc}$ is
known to be equal to $C_{t}$ for any $t \in {q,qs,qa}$ is known as Tsirelson's
problem \cite{Ts06,DP16}. In a previous paper \cite{Sl16}, we showed that
$C_{qs} \neq C_{qc}$. By showing that $C_{qs} \neq C_{qa}$, we provide another
proof of this fact. The proof that $C_{qs} \neq C_{qc}$ in \cite{Sl16} uses a
universal embedding theorem, which states that every finitely-presented group
embeds in the solution group of a linear system game. In this paper, we follow
a similar line, proving a restricted embedding theorem for a subclass of
finitely-presented groups which we call linear-plus-conjugacy groups. For the
proof of this restricted embedding theorem, we use a completely different
method from \cite{Sl16}, with the result that the proof is much shorter.
However, it remains an open problem to prove the universal embedding theorem 
via the new approach.

An easy consequence of the universal embedding theorem is that it is
undecidable to determine if a linear system game has a perfect strategy in
$C_{qc}$. In this paper we prove a stronger result by applying our restricted
embedding theorem to Kharlampovich's example \cite{Kha81} of a finitely
presented solvable group with an undecidable word problem. 
\begin{theorem}\label{T:main2}
    There is a (recursive) family of linear system games such that
    \begin{enumerate}[(a)]
        \item it is undecidable to determine if a game in the family has a 
            perfect strategy in $C_{qa}$, and
        \item every game in the family has a perfect strategy in $C_{qc}$ if and only
            if it has a perfect strategy in $C_{qa}$. 
    \end{enumerate}
\end{theorem}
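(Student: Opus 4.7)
The plan is to combine Kharlampovich's finitely-presented solvable group $K$ with undecidable word problem and the restricted embedding theorem for linear-plus-conjugacy groups developed elsewhere in the paper. Recall the standard correspondence for linear system games over $\mbF_2$: the game $\mcG(A,b)$ has a perfect $C_{qc}$ strategy if and only if the distinguished central element $J$ is non-trivial in the solution group $\Gamma(A,b)$.

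The first step is to check that for each word $w$ in the generators of $K$, the pair $(K,w)$ admits a linear-plus-conjugacy presentation produced uniformly and recursively from $w$. Kharlampovich's relations are of commutator and power-type, which should fit the linear-plus-conjugacy framework, possibly after a mild preprocessing adding auxiliary generators. Applying the restricted embedding theorem then gives, uniformly in $w$, a linear system $(A_w,b_w)$ whose solution group $\Gamma_w$ comes with a homomorphism from $K$ sending $w$ to $J$; hence $J \ne e$ in $\Gamma_w$ if and only if $w \ne e$ in $K$. This yields a recursive family of games $\mcG_w := \mcG(A_w,b_w)$ for which perfect $C_{qc}$ play is equivalent to the word problem in $K$.

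For part (b), I would use that $K$ is solvable, hence amenable, and that the embedding theorem should output $\Gamma_w$ as a small (e.g., central $\Z/2$-by-$K$-quotient) extension of $K$, so that $\Gamma_w$ itself remains amenable. Amenable groups are hyperlinear, so every tracial state on $C^*(\Gamma_w)$ is a weak-$*$ limit of traces pulled back along finite-dimensional unitary representations. A perfect $C_{qc}$ strategy for $\mcG_w$ corresponds precisely to such a tracial state sending $J$ to $-1$; its finite-dimensional approximants give $C_q$ correlations that are approximately perfect, whose limit is a perfect $C_{qa}$ strategy. Combined with the previous step, this yields the equivalence: $\mcG_w$ has a perfect $C_{qa}$ strategy if and only if $w \ne e$ in $K$. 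Kharlampovich's theorem then immediately gives (a), and the chain of equivalences gives (b).

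The main obstacle will be packaging Kharlampovich's relations as a linear-plus-conjugacy presentation and controlling the output of the embedding theorem enough to preserve amenability of $\Gamma_w$. A secondary technical point is that hyperlinear approximations of the GNS trace must be converted into honest $C_q$ strategies that are approximately perfect; this is standard, but requires that the winning condition be faithfully encoded by the defining relations of $\Gamma_w$ together with $J \ne e$, which is exactly what the linear-system-to-solution-group dictionary provides.
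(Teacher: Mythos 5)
Your high-level strategy --- combine Kharlampovich's solvable group with the restricted embedding theorem, and then reduce $C_{qc}$ to $C_{qa}$ using hyperlinearity --- is the same as the paper's. But there are two genuine problems with how you execute the second half.

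First, the claim that ``the embedding theorem should output $\Gamma_w$ as a small (e.g., central $\Z/2$-by-$K$-quotient) extension of $K$, so that $\Gamma_w$ itself remains amenable'' is false. The embedding produced by Propositions~\ref{P:linearplusconjugacy} and~\ref{P:extended} is \emph{not} a quotient or small extension: the solution group $\Gamma_w$ is generated by many involutions with only selective commutativity, and such groups are typically far from amenable (for instance, the proof of Lemma~\ref{L:nice} freely adjoins new involutions $f,w_j,y_j,z_j$ with few relations among them). There is no reason for $\Gamma_w$ to be amenable or even hyperlinear, so you cannot apply ``amenable $\Rightarrow$ hyperlinear $\Rightarrow$ traces are weak-$*$ limits of finite-dimensional traces'' at the level of $\Gamma_w$. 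The paper's route sidesteps this entirely: it applies hyperlinearity only to $K_X$ itself (which \emph{is} solvable, hence hyperlinear), and then propagates ``non-trivial in approximate representations'' forward through the chain of embeddings because those embeddings are shown to be $fa^*$-embeddings (Lemma~\ref{L:trick}). The $fa$-embedding property is precisely the substitute for preserving amenability, and it is what the explicit block-matrix constructions in Propositions~\ref{P:linearplusconjugacy} and~\ref{P:extended} are designed to verify.

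Second, you say the embedding gives ``a homomorphism from $K$ sending $w$ to $J$,'' but the embedding theorems only send $w$ to one of the involutary \emph{generators} $x_i$ of the linear-plus-conjugacy group. To convert ``$x_i$ non-trivial'' into ``$J$ non-trivial,'' the paper inserts the $\Z_2$-HNN-type extension $\widehat{G} = \langle G, t : t^2 = e,\ t x_i t = J x_i\rangle_{\Z_2}$, so that $J = [t,x_i]$, and then proves (Lemma~\ref{L:homogextension}, using a tensor-power amplification) that non-triviality of $x_i$ in approximate representations passes to non-triviality of $J$. This step is not automatic and needs its own argument; your write-up silently assumes it. In short, the obstacle you flag at the end (``controlling the output of the embedding theorem enough to preserve amenability'') is the real issue, and the paper's answer is that amenability is \emph{not} preserved --- instead one preserves the weaker but sufficient property of being an $fa$-embedding.
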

Kharlampovich's construction has been extended by Kharlampovich, Myasnikov, and
Sapir to show that the word problem for finitely-presented residually-finite
groups can be as hard as any computable function \cite{KMS17}.\footnote{The
word problem for finitely-presented residually-finite groups is always
decidable, so this is the best possible lower bound.} Using this extension, we
can show:
\begin{theorem}\label{T:main3}
    Let $f : \mbN \arr \mbN$ be a computable function. Then there is a family
    of linear system games $\mcG_n$, $n \in \mbN$, such that
    \begin{enumerate}[(a)]
        \item the games $\mcG_n$ have input and output sets of size $\exp(O(n))$,
            and the function $n \mapsto \mcG_n$ is computable in $\exp(O(n))$-time;
        \item for any algorithm accepting the language 
            \begin{equation*}
                \{ n \in \mbN : \mcG_n \text{ has a perfect strategy in } C_{q} \},
            \end{equation*}
            the maximum running time over inputs $n \leq N$ is at least $f(N)$
            when $N$ is sufficiently large;
        \item $\mcG_n$ has a perfect strategy in $C_{qc}$ if and only if it has
            a perfect strategy in $C_{q}$. 
    \end{enumerate}
\end{theorem}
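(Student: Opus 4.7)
The plan is to follow the same reduction scheme used for Theorem~\ref{T:main2}: I reduce the perfect-strategy question for a linear system game to a word-problem question in a linear-plus-conjugacy group via the restricted embedding theorem, and then instantiate the construction with a finitely-presented group whose word problem is as hard as $f$. The new ingredient, relative to Theorem~\ref{T:main2}, is to substitute Kharlampovich's example by the residually-finite construction of \cite{KMS17} and to propagate complexity bounds through the reduction.

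More concretely, fix the computable function $f$ and let $G = \langle S \mid R \rangle$ be a finitely-presented residually-finite group supplied by \cite{KMS17} whose word problem has time complexity at least $f$. Enumerate words $w_n$ over $S \cup S^{-1}$ of length $O(n)$ in a standard recursive way. For each $n$, I build a finitely-presented linear-plus-conjugacy group $\Gamma_n$ containing $G$ together with a designated element $g_n$ such that $g_n = 1$ in $\Gamma_n$ iff $w_n = 1$ in $G$; the presentation of $\Gamma_n$ should be computable from $n$ in polynomial time. Applying the restricted embedding theorem to $\Gamma_n$ then yields a linear system game $\mcG_n$ whose solution group $\widetilde{\Gamma}_n$ contains $\Gamma_n$, with $g_n$ mapping to a witness element of $\widetilde{\Gamma}_n$. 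The game $\mcG_n$ has a perfect strategy in $C_{qc}$ (resp.\ $C_q$) if and only if this witness is nontrivial in $\widetilde{\Gamma}_n$ (resp.\ can be separated from the identity by a finite-dimensional representation). For part~(a), the embedding theorem blows up the presentation by at worst an exponential factor, so the games $\mcG_n$ have size $\exp(O(n))$ and are computable in $\exp(O(n))$-time. For part~(b), any algorithm accepting $\{n : \mcG_n \text{ has a perfect strategy in } C_q\}$ composes with this reduction to accept the language $\{n : w_n = 1 \text{ in } G\}$, so its maximum running time over $n \leq N$ must eventually be at least $f(N)$. For part~(c), when $w_n \neq 1$ in $G$, residual finiteness of $G$ provides a finite quotient separating $w_n$ from the identity; provided the construction is arranged to preserve residual finiteness through both $G \hookrightarrow \Gamma_n$ and $\Gamma_n \hookrightarrow \widetilde{\Gamma}_n$, this quotient lifts to a finite-dimensional representation of $\widetilde{\Gamma}_n$ separating the witness from $1$, so $C_{qc}$-perfection of $\mcG_n$ forces $C_q$-perfection.

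The hard part will be verifying that residual finiteness survives both stages of the reduction. The first stage, embedding $G$ into a linear-plus-conjugacy group, can be handled by restricting to amalgamated products and HNN extensions over finite or otherwise controlled subgroups, since these operations preserve residual finiteness. The second stage --- the restricted embedding theorem --- must be inspected to verify that any finite-dimensional representation of $\Gamma_n$ separating $g_n$ from $1$ extends to a finite-dimensional representation of $\widetilde{\Gamma}_n$ separating the witness from $1$; equivalently, that approximability of $\Gamma_n$ by finite-dimensional matrix algebras transfers to approximability of the solution group. Once this extension property is in hand, the complexity accounting needed for (a) and (b) is routine polynomial-/exponential-time bookkeeping on the presentation sizes, and (c) follows automatically from residual finiteness.
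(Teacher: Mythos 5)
Your high-level plan is aligned with the paper's approach --- reduce perfect-strategy questions to a word problem, use the embedding theorems to land in a solution group, use \cite{KMS17} for the hard residually finite group, and use residual finiteness for part (c). However, two crucial pieces of the reduction are missing or misstated.

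First, you write that ``the game $\mcG_n$ has a perfect strategy in $C_{qc}$ (resp.\ $C_q$) if and only if this witness is nontrivial,'' but Theorems~\ref{T:solutiongroup} and \ref{T:solutiongroup2} only give a criterion in terms of the distinguished central involution $J_{\Gamma}$, not in terms of an arbitrary generator. The embedding $\Gamma_n \hookrightarrow \widetilde{\Gamma}_n$ alone does not tie the triviality of $g_n$ to the triviality of $J$. The paper closes this gap by forming the $\Z_2$-HNN extension $\widehat{G} = \langle G, t : t^2 = e, t a t = J a\rangle_{\Z_2}$, which makes $J = [t,a]$ a commutator of the witness element $a$ with the new generator; Lemma~\ref{L:homogextension} (for approximate representations) and Lemma~\ref{L:finitefreeprod} (for exact finite-dimensional representations, using Baumslag's amalgamated-product theorem) then transfer (non)-triviality of $a$ to (non)-triviality of $J$. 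Without this step your reduction does not produce the correct equivalence and part (c) does not follow.

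Second, the complexity accounting is off. You claim the embedding theorem ``blows up the presentation by at worst an exponential factor,'' but Remarks~\ref{R:embeddingsize} and \ref{R:embeddingsize2} show it is polynomial-time. The $\exp(O(n))$ game size in part (a), and the hardness in part (b), both come from the structure of the Kharlampovich--Myasnikov--Sapir construction itself: there $n \in X$ iff $c(n) = e$ where $c(n)$ involves the recursively defined word $w(2^n)$, and compressing $w(2^n)$ into an involutary generator via ancillas costs $O(2^n)$ new variables (see the proof of Proposition~\ref{P:kha}). Enumerating generic words of length $O(n)$ would give polynomially-sized games but would \emph{not} inherit hardness $f(N)$ on indices $n \leq N$: the KMS hardness lower bound is calibrated against the index $n$ appearing in $c(n)$, whose fully expanded word length is doubly exponential in $n$, not against arbitrary short words. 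You need the specific family $c(n)$ and the ancilla-compression trick of Lemma~\ref{L:kha}, not a generic enumeration of words.
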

Theorem \ref{T:main3} has the following corollary.
\begin{cor}\label{C:main3}
    It is undecidable to determine if a linear system game has a
    perfect strategy in $C_q$. 
\end{cor}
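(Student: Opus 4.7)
The plan is to derive this corollary directly from Theorem \ref{T:main3} by a standard diagonal computability argument: decidability of the general problem would give a computable upper bound on the time to decide the language in part (b), contradicting its super-computable lower bound.

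Suppose, for contradiction, that there is an algorithm $A$ which decides, for every linear system game $\mcG$, whether $\mcG$ has a perfect strategy in $C_q$. Since $A$ halts on every input, its worst-case running time on inputs of description size at most $m$ is bounded by some computable function $T(m)$. Now given any computable $f$, I would apply Theorem \ref{T:main3} to obtain the corresponding family $\{\mcG_n\}_{n \in \mbN}$, and consider the algorithm $B$ which, on input $n$, first writes down $\mcG_n$ (which by part (a) takes time $\exp(O(n))$ and produces output of size $\exp(O(n))$) and then runs $A$ on $\mcG_n$. Then $B$ decides
\begin{equation*}
    L := \{ n \in \mbN : \mcG_n \text{ has a perfect strategy in } C_q \},
\end{equation*}
and its running time on any input $n \le N$ is bounded by the computable function $g(N) := \exp(O(N)) + T(\exp(O(N)))$. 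Choosing the hypothesis $f(N) := g(N) + 1$ in Theorem \ref{T:main3} then contradicts part (b), since $B$ is in particular an accepting algorithm for $L$ whose maximum running time on inputs $n \le N$ is at most $g(N) < f(N)$ for every $N$.

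The only technical point to verify is the computability bookkeeping, namely that because $A$ is a decider rather than merely an accepter, the bound $T$ is genuinely total and computable, and hence so is $g$; this is what makes $f = g+1$ a legitimate input to Theorem \ref{T:main3}. Beyond this, the corollary contains no new quantum-information content and I foresee no serious obstacle; the entire difficulty of Corollary \ref{C:main3} is already absorbed into the construction of the family in Theorem \ref{T:main3}.
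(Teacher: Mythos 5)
Your approach is the right one in spirit, but there is a genuine circularity in the way you instantiate $f$. In Theorem~\ref{T:main3}, the function $f$ must be specified \emph{first}; only then does the theorem hand you the family $\{\mcG_n\}$, and the hidden constants in the $\exp(O(n))$ bounds of part~(a) are allowed to depend on $f$ (indeed they do: the family is built from Kharlampovich's group $K_X$, and $X$ is chosen based on $f$). Your $g(N) := \exp(O(N)) + T(\exp(O(N)))$ therefore depends on the family, which depends on $f$, so setting $f(N) := g(N)+1$ defines $f$ in terms of itself. This is not a fixable bookkeeping detail in the form you've stated it; the quantifier order is wrong.

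The paper avoids this by defining a bound that refers only to the hypothetical decider $A$ and not to the family: let $g(m)$ be the running time of $A$ on linear system games of size at most $m$ (this is total and computable, independent of everything else, exactly as you observe), and then choose $f$ to dominate $g$ composed with something that beats \emph{every} exponential, namely $f(n) > g(2^{n^2}) + 2^{n^2}$. Now apply the theorem to this $f$; whatever constant $C$ the theorem produces in its $\exp(O(n)) = 2^{Cn}$ bounds, you get an algorithm for $L$ with running time at most $g(2^{CN})+2^{CN}$ on inputs $n\le N$, and since $N^2 > CN$ for large $N$, this is eventually below $f(N)$, contradicting part~(b). The essential repair is to replace $f = g+1$ by a choice of $f$ that is made \emph{before} seeing the family and that beats $g\circ\exp$ uniformly over the unknown constant $C$ (a superlinear exponent suffices). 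With that change, the rest of your argument goes through as written.
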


\subsection{Acknowledgements}

I thank Jason Crann, Richard Cleve, Tobias Fritz, Li Liu, Martino Lupini,
Narutaka Ozawa, Vern Paulsen, Mark Sapir, Jamie Sikora, and Thomas Vidick for helpful
comments and conversations. 

\section{Group theory preliminaries}

\subsection{Group presentations}

Given a set $S$, let $\mcF(S)$ denote the free group generated by $S$. If $H$
is a group, then homomorphisms $\mcF(S) \arr H$ can be identified with
functions $S \arr H$, and we use these two types of objects interchangeably.
If $R$ is a subset of $\mcF(S)$, then the quotient of $\mcF(S)$ by the normal
subgroup generated by $R$ is denoted by $\langle S : R \rangle$.  If $G =
\langle S : R \rangle$ and $R' \subset \mcF(S \cup S')$,
then we write $\langle G, S' : R' \rangle$ to mean $\langle S \cup S' : R \cup
R' \rangle$. 

A group $G$ is said to be \emph{finitely presentable} if $G = \langle S : R
\rangle$ for some finite sets $S$ and $R$. A \emph{finitely presented group} is
a tuple $(G,S,R)$, where $G = \langle S : R \rangle$. In other words, a
finitely presented group is a finitely presentable group along with a choice of
finite presentation. 

\subsection{Approximate representations}

Let $\norm{\cdot}$ be the normalized Hilbert-Schmidt norm, i.e. if $T$ is an
endomorphism of a finite-dimensional Hilbert space $H$, then $\norm{T} =
\sqrt{\tr(T^* T)} / \sqrt{\dim H}$. 
\begin{defn}\label{D:approx}
    Let $G = \langle S : R \rangle$ be a finitely presented group. A
    \emph{finite-dimensional $\eps$-approximate representation} (or
    \emph{$\eps$-representation} for short) is a homomorphism $\phi : \mcF(S)
    \arr \mcU(H)$ from $\mcF(S)$ to the unitary group $\mcU(H)$ of some
    finite-dimensional Hilbert space $H$, such that
    \begin{equation*}
        \norm{\phi(r) - \Id} \leq \eps
    \end{equation*}
    for all $r \in R$. 
\end{defn}
Note that the normalized Hilbert-Schmidt norm is invariant under conjugation by
unitaries, so the set of $\eps$-representations is independent of the cyclic
order of the relations $r \in R$. That means that, for instance, we can write
the relation $x = y$ without worrying about whether we mean $xy^{-1} = e$ or
$y^{-1} x = e$. 

There are several different notions of approximate representations in the
literature. The notion we are using comes from the study of stable relations of
$C^*$-algebras (see, for instance, Section 4.1 of \cite{Lo97}). For the
purposes of this paper, we could also use the closely related notion of
approximate homomorphisms as in \cite[Section II]{CL15}. However, Definition
\ref{D:approx} is very convenient for working with examples, as we frequently
do in this paper. The main disadvantage of this definition is that it depends
on the choice of presentation. We can work around this using the following
easy lemma.
\begin{lemma}\label{L:pullback}
    Let $\psi : G \arr H$ be a homomorphism, where $G = \langle S : R\rangle$
    and $H = \langle S' : R' \rangle$ are finitely presented groups. If
    $\Psi : \mcF(S) \arr \mcF(S')$ is a lift of $\psi$, then there
    is a constant $C > 0$ such that if $\phi$ is an $\eps$-representation of
    $H$, then $\phi \circ \Psi$ is a $C \eps$-representation of $G$.
\end{lemma}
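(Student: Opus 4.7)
The plan is to unwind the definition of $\eps$-representation and use three elementary properties of the normalized Hilbert--Schmidt norm: unitary invariance, inversion invariance $\norm{V^{-1} - \Id} = \norm{V - \Id}$, and the submultiplicative-style triangle inequality $\norm{UV - \Id} \leq \norm{U - \Id} + \norm{V - \Id}$ (obtained by writing $UV - \Id = U(V - \Id) + (U - \Id)$ and invoking unitary invariance).

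First, I would extract the structural input from the hypothesis. Since $\psi : G \arr H$ is a homomorphism and $\Psi : \mcF(S) \arr \mcF(S')$ lifts it, every defining relator $r \in R$ satisfies $r = 1$ in $G$, hence $\Psi(r)$ maps to $1$ in $H$. Therefore $\Psi(r)$ lies in the kernel of the quotient $\mcF(S') \arr H$, which is the normal closure of $R'$ in $\mcF(S')$. For each $r \in R$, I fix (once and for all) an expression
\begin{equation*}
    \Psi(r) = \prod_{i=1}^{k_r} w_{r,i}\, r_{r,i}^{\eps_{r,i}}\, w_{r,i}^{-1},
\end{equation*}
with $w_{r,i} \in \mcF(S')$, $r_{r,i} \in R'$, and $\eps_{r,i} \in \{\pm 1\}$. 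Finiteness of $R$ then allows me to set $C := \max_{r \in R} k_r$.

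The bound itself is then immediate. For any $\eps$-representation $\phi : \mcF(S') \arr \mcU(H)$ and any $r \in R$, applying the triangle inequality to the product above gives
\begin{equation*}
    \norm{\phi(\Psi(r)) - \Id}
      \leq \sum_{i=1}^{k_r} \norm{\phi\bigl(w_{r,i}\, r_{r,i}^{\eps_{r,i}}\, w_{r,i}^{-1}\bigr) - \Id}.
\end{equation*}
Unitary invariance collapses each summand to $\norm{\phi(r_{r,i})^{\eps_{r,i}} - \Id}$, inversion invariance reduces this to $\norm{\phi(r_{r,i}) - \Id}$, and the $\eps$-representation hypothesis bounds it by $\eps$. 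Summing gives $\norm{\phi(\Psi(r)) - \Id} \leq k_r \eps \leq C \eps$, which is exactly the conclusion that $\phi \circ \Psi$ is a $C\eps$-representation of $G$.

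There is no serious obstacle; the only conceptual point is to commit to a fixed decomposition of each $\Psi(r)$ into conjugates of elements of $R'$ \emph{before} quantifying over $\eps$-representations, so that the constant $C$ depends only on $\psi$ and $\Psi$ (and the chosen decompositions) and not on $\phi$ or $\eps$. Finiteness of $R$ is what makes this possible.
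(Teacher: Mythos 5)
Your proof is correct, and it is the standard argument one would give: express each $\Psi(r)$ as a product of conjugates of elements of $R'^{\pm 1}$, fix such decompositions once and for all (possible by finiteness of $R$), and then propagate the bound through the product using unitary invariance and the inequality $\norm{UV - \Id} \leq \norm{U - \Id} + \norm{V - \Id}$. The paper states this lemma without proof as an ``easy lemma,'' and your argument is precisely the expected one; in particular you correctly identify that the whole content of the lemma is the uniformity of the constant $C$ over all $\eps$-representations, which your a-priori choice of decompositions secures.
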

We record two other simple lemmas for later use.
\begin{lemma}\label{L:smallchanges}
    Let $G = \langle S : R \rangle$, and let $M$ be the length of the longest
    relation in $R$. If $\phi$ is an $\eps$-representation of $G$, and $\psi$
    is an approximate representation of $G$ with
    \begin{equation*}
        \norm{\psi(x) - \phi(x) } \leq \delta
    \end{equation*}
    for all $x \in S$, then $\psi$ is an $(M\delta+\eps)$-representation.
\end{lemma}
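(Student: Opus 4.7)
The plan is to reduce the claim to a telescoping bound on $\norm{\psi(r) - \phi(r)}$ for each $r \in R$, using only two formal properties of the normalized Hilbert--Schmidt norm on unitaries: it is invariant under left and right multiplication by unitaries, and it is invariant under taking adjoints. Once $\norm{\psi(r) - \phi(r)} \leq M\delta$ is in hand, the lemma follows from the triangle inequality $\norm{\psi(r) - \Id} \leq \norm{\psi(r) - \phi(r)} + \norm{\phi(r) - \Id} \leq M\delta + \eps$.

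The main step is the telescoping bound. Fix $r \in R$ and write it as a reduced word $s_1 s_2 \cdots s_k$ with $k \leq M$ and each $s_i \in S \cup S^{-1}$. Then expand
\begin{equation*}
    \psi(r) - \phi(r) \;=\; \sum_{i=1}^{k} \psi(s_1) \cdots \psi(s_{i-1})\bigl(\psi(s_i) - \phi(s_i)\bigr)\phi(s_{i+1}) \cdots \phi(s_k),
\end{equation*}
and bound each summand in normalized Hilbert--Schmidt norm by $\norm{\psi(s_i) - \phi(s_i)}$, using unitary invariance on both sides to discard the flanking products of $\psi(s_j)$'s and $\phi(s_j)$'s.

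The only subtlety is that $s_i$ might be $x^{-1}$ for some generator $x \in S$ rather than a generator itself, in which case the hypothesis gives us a bound on $\norm{\psi(x) - \phi(x)}$ rather than directly on $\norm{\psi(x^{-1}) - \phi(x^{-1})}$. Since $\psi(x)$ and $\phi(x)$ are unitary, we have $\psi(x^{-1}) - \phi(x^{-1}) = \psi(x)^* - \phi(x)^* = (\psi(x) - \phi(x))^*$, and the normalized Hilbert--Schmidt norm is invariant under the adjoint, so $\norm{\psi(s_i) - \phi(s_i)} \leq \delta$ in either case. Summing the $k \leq M$ terms via the triangle inequality gives $\norm{\psi(r) - \phi(r)} \leq M\delta$, and combining with $\norm{\phi(r) - \Id} \leq \eps$ yields the claim.

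There is no real obstacle here; the statement is a routine stability bound, and the only thing to be careful about is the handling of inverse letters, which is dispatched by the adjoint invariance of the norm.
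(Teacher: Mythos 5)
Your proof is correct. The paper states Lemma \ref{L:smallchanges} without proof (it is one of the ``simple lemmas'' recorded for later use), so there is no paper argument to compare against, but your telescoping decomposition together with bi-unitary invariance and adjoint invariance of the normalized Hilbert--Schmidt norm is exactly the standard argument one would expect, and the handling of inverse letters via $\psi(x^{-1}) - \phi(x^{-1}) = (\psi(x) - \phi(x))^*$ is the right way to close the one small gap.
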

Given approximate representations $\phi : \mcF(S) \arr \mcU(H)$ and $\psi :
\mcF(S) \arr \mcU(H')$ of $G = \langle S : R \rangle$, we can form new
approximate representations $\phi \oplus \psi : \mcF(S) \arr \mcU(H\oplus H')$
and $\phi \otimes \psi : \mcF(S) \arr \mcU(H \otimes H')$. 
\begin{lemma}\label{L:sumandproduct}
    Suppose $\phi$ and $\psi$ are $\eps$- and $\eps'$-representations of $G$
    respectively. Then $\phi \oplus \psi$ is a $\max(\eps,\eps')$-representation,
    and $\phi \otimes \psi$ is an $(\eps + \eps')$-representation.
\end{lemma}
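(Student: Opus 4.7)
The plan is to compute the normalized Hilbert--Schmidt norm of $(\phi \oplus \psi)(r) - \Id$ and $(\phi \otimes \psi)(r) - \Id$ directly for each relation $r \in R$, exploiting how the trace and dimension behave under direct sums and tensor products. Both assertions should fall out of essentially one-line calculations; the only subtlety is keeping track of the normalization in the Hilbert--Schmidt norm.

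For the direct sum, I would write $H_\phi, H_\psi$ for the two underlying Hilbert spaces and $d_\phi, d_\psi$ for their dimensions, and note that $(\phi \oplus \psi)(r) - \Id_{H_\phi \oplus H_\psi}$ is block-diagonal with blocks $\phi(r) - \Id_{H_\phi}$ and $\psi(r) - \Id_{H_\psi}$. Since the trace of a block-diagonal operator is the sum of the traces of its blocks, one obtains
\begin{equation*}
    \norm{(\phi \oplus \psi)(r) - \Id}^2
    = \frac{d_\phi \, \norm{\phi(r) - \Id}^2 + d_\psi \, \norm{\psi(r) - \Id}^2}{d_\phi + d_\psi},
\end{equation*}
a convex combination of two numbers bounded by $\eps^2$ and ${\eps'}^2$, hence at most $\max(\eps,\eps')^2$. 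This gives the first claim.

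For the tensor product, the key move is the Leibniz-style identity
\begin{equation*}
    \phi(r) \otimes \psi(r) - \Id \otimes \Id
    = (\phi(r) - \Id) \otimes \psi(r) + \Id \otimes (\psi(r) - \Id).
\end{equation*}
Because $\tr(A \otimes B) = \tr(A)\tr(B)$ and $\dim(H_\phi \otimes H_\psi) = d_\phi d_\psi$, the normalized Hilbert--Schmidt norm is multiplicative on tensor products, i.e.\ $\norm{A \otimes B} = \norm{A} \cdot \norm{B}$. Since $\psi(r)$ and $\Id$ are unitary, both have normalized Hilbert--Schmidt norm equal to $1$, so the triangle inequality yields
\begin{equation*}
    \norm{(\phi \otimes \psi)(r) - \Id}
    \leq \norm{\phi(r) - \Id} + \norm{\psi(r) - \Id}
    \leq \eps + \eps',
\end{equation*}
proving the second claim. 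There is no real obstacle here: the statement is essentially a structural observation about how the normalization is chosen precisely so that direct sums and tensor products behave well.
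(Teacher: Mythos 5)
The paper states Lemma \ref{L:sumandproduct} without proof, remarking only that it is a simple lemma. Your proof is correct and is the standard argument one would give: the convex-combination identity for direct sums, and the telescoping decomposition $\phi(r) \otimes \psi(r) - \Id = (\phi(r) - \Id) \otimes \psi(r) + \Id \otimes (\psi(r) - \Id)$ together with multiplicativity of the normalized Hilbert--Schmidt norm on tensors and the fact that unitaries have norm $1$. Nothing to correct.
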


A group $G$ is said to be residually finite-dimensional if every non-trivial
element of $G$ is non-trivial in some finite-dimensional representation. More
generally, the set of elements which are trivial in finite-dimensional
representations forms a normal subgroup of $G$. We let $G^{fin}$ denote the
quotient of $G$ by this normal subgroup (alternatively, $G^{fin}$ is the image
of $G$ in its profinite completion). Any homomorphism $\phi : G \arr H$
descends to a homomorphism $G^{fin} \arr H^{fin}$. 
\begin{defn}
    A homomorphism $\phi : G \arr H$ is a \emph{$fin$-embedding} if
    the induced map $G^{fin} \arr H^{fin}$ is injective, and a \emph{$fin^*$-embedding}
    if $\phi$ is both injective and a $fin$-embedding.
\end{defn}
Equivalently, $\phi$ is a $fin$-embedding if $\phi(g)$ is non-trivial in
finite-dimensional representations whenever $g \in G$ is non-trivial in
finite-dimensional representations. 

We can similarly look at elements which are non-trivial in approximate
representations:
\begin{defn}\label{D:nontrivial}
    Let $G$ be a finitely presentable group. An element $g \in G$ is
    \emph{non-trivial in (finite-dimensional) approximate representations}
    if there is a finite presentation $G = \langle S : R\rangle$, a
    representative $w \in \mcF(S)$ for $g$, and some constant $\delta > 0$ such
    that, for all $\eps > 0$, there is an $\eps$-representation $\phi$ of $G$
    with $\norm{\phi(w) - \Id} > \delta$.
\end{defn}
Alternatively, if $g \in G = \langle S : R \rangle$, let 
\begin{equation*}
    \ell^{fa}(g) := \lim_{\eps \arr 0^+} \sup_{\phi} \norm{\phi(w) - \Id},
\end{equation*}
where $w$ is a representative for $g$, and the supremum is across
$\eps$-representations $\phi$ of $G$. It is easy to see that the right-hand
side is independent of the choice of representative $w$. By Lemma
\ref{L:pullback}, if $\psi : G \arr H$ is a homomorphism, then $\ell^{fa}(g)
\geq \ell^{fa}(\psi(g))$. Consequently, $\ell^{fa}(g)$ is independent of the
chosen presentation $\langle S : R \rangle$, and $g$ is non-trivial in
approximate representations if and only if $\ell^{fa}(g) > 0$. This makes
it apparent that the choice of presentation $\langle S : R\rangle$ and
representative $w$ in Definition \ref{D:nontrivial} is arbitrary. 

Standard amplification arguments show that the constant $\delta$ in Definition
\ref{D:nontrivial} is also somewhat arbitrary; in fact, $\ell^{fa}(g)$ never
takes values in $(0,\sqrt{2})$. The same amplification arguments can be used to
show that a finitely-presented group $G$ is \emph{hyperlinear} if and only if
every non-trivial element of $G$ is non-trivial in approximate representations,
and this can be used as the definition of hyperlinearity for finitely-presented
groups. We refer Section II.2 of \cite{CL15} for the standard definition of
hyperlinearity, along with the amplification arguments needed to prove the
equivalence. 

Clearly $\ell^{fa}(g) \geq 0$ for all $g \in G$, and it is easy to see that
$\ell^{fa}(gh) \leq \ell^{fa}(g) + \ell^{fa}(h)$ and $\ell^{fa}(hgh^{-1}) =
\ell^{fa}(g)$ for all $g,h \in G$. Thus the set of elements of $G$ which are
trivial in approximate representations (i.e. for which $\ell^{fa}(g) = 0$)
forms a normal subgroup of $G$. Let $G^{fa}$ be the quotient of $G$ by this
normal subgroup. Because $\ell^{fa}$ is decreasing via homomorphisms, any
homomorphism $\phi : G \arr H$ between finitely presentable groups descends to
a homomorphism $G^{fa} \arr H^{fa}$. 
\begin{defn}
    A homomorphism $\phi : G \arr H$ is an \emph{$fa$-embedding} if the induced
    map $G^{fa} \arr H^{fa}$ is injective, and an \emph{$fa^*$-embedding} if
    $\phi$ is injective, a $fin$-embedding, and an $fa$-embedding. 
\end{defn}
Equivalently, $\phi$ is an $fa$-embedding if $\phi(g)$ is non-trivial in
approximate representations whenever $g \in G$ is non-trivial in approximate
representations. 

If $\phi$ and $\psi$ are approximate representations, then we say that $\phi$
is a \emph{direct summand} of $\psi$ if $\psi = \phi \oplus \phi'$ for some other
approximate representation $\phi'$. We use the following simple trick to
construct $fa^*$-embeddings. 
\begin{lemma}\label{L:trick}
    Let $G = \langle S : R \rangle$ and $H = \langle S' : R' \rangle$ be
    two finitely presented groups, and let $\Psi : \mcF(S) \arr
    \mcF(S')$ be a lift of a homomorphism $\psi : G \arr H$. 
    \begin{enumerate}[(a)]
        \item Suppose that for every representation (resp. finite-dimensional
            representation) $\phi$ of $G$, there is a representation (resp.
            finite-dimensional representation) $\gamma$ of $H$ such that $\phi$
            is a direct summand of $\gamma \circ \psi$. Then $\psi$ is injective
            (resp. a $fin$-embedding).
        \item Suppose that there is an integer $N>0$ and a real number $C>0$ such that
            for every $d$-dimensional $\eps$-representation $\phi$ of $G$, where
            $\eps > 0$, there is an $Nd$-dimensional $C \eps$-representation
            $\gamma$ of $H$ such that $\phi$ is a direct summand of $\gamma
            \circ \Psi$. Then $\psi$ is an $fa$-embedding.
    \end{enumerate}
\end{lemma}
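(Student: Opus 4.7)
The plan is to prove both parts by contrapositive, exploiting a single observation: a direct summand of a representation inherits triviality from the whole. If $\rho = \phi \oplus \phi'$ and $\rho(g) = \Id$ then $\phi(g) = \Id$, and in the approximate setting $\norm{\phi(w) - \Id}$ is controlled by $\norm{\rho(w) - \Id}$ up to a factor depending only on the ratio of dimensions. The only step requiring any care is the bookkeeping for the normalized Hilbert-Schmidt norm in part (b).

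For part (a), suppose $g \in G$ is non-trivial and consider the left regular representation $\lambda$ of $G$ on $\ell^2(G)$, which satisfies $\lambda(g) \neq \Id$. By hypothesis there is a representation $\gamma$ of $H$ such that $\lambda$ is a direct summand of $\gamma \circ \psi$. If $g$ were in $\ker \psi$, then $\gamma(\psi(g)) = \Id$, forcing the direct summand $\lambda(g) = \Id$, a contradiction; hence $\psi$ is injective. The $fin$-embedding case is identical, replacing $\lambda$ with a finite-dimensional representation $\phi$ of $G$ satisfying $\phi(g) \neq \Id$ and insisting that $\gamma$ be finite-dimensional: if $\psi(g)$ were trivial in every finite-dimensional representation of $H$, then $\gamma(\psi(g)) = \Id$ and hence $\phi(g) = \Id$, again a contradiction.

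For part (b), the key computation is the elementary norm estimate
\begin{equation*}
    \norm{\rho(w) - \Id}^2 = \frac{\norm{\phi(w) - \Id}^2 + (N-1)\norm{\phi'(w) - \Id}^2}{N} \geq \frac{1}{N}\,\norm{\phi(w) - \Id}^2,
\end{equation*}
valid whenever $\rho = \phi \oplus \phi'$ acts on an $Nd$-dimensional space with $\phi$ the $d$-dimensional summand; this follows by unpacking the definition of the normalized Hilbert-Schmidt norm on the block-diagonal matrix. Given this, fix $g \in G$ with $\ell^{fa}(g) > 0$ and pick any representative $w \in \mcF(S)$. For each $\eta < \ell^{fa}(g)$ and each $\eps > 0$, I can choose a $d$-dimensional $\eps$-representation $\phi$ of $G$ with $\norm{\phi(w) - \Id} > \eta$. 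The hypothesis then produces an $Nd$-dimensional $C\eps$-representation $\gamma$ of $H$ with $\phi$ a direct summand of $\gamma \circ \Psi$, and the estimate yields $\norm{\gamma(\Psi(w)) - \Id} > \eta/\sqrt{N}$. Since $\Psi(w)$ is a representative of $\psi(g)$ in $\mcF(S')$, letting $\eps \to 0^+$ and then $\eta \nearrow \ell^{fa}(g)$ gives $\ell^{fa}(\psi(g)) \geq \ell^{fa}(g)/\sqrt{N} > 0$, so $\psi$ is an $fa$-embedding. The $\sqrt{N}$ loss in $\ell^{fa}$ is the price paid for the dimension blow-up and is the only quantitative subtlety in the argument.
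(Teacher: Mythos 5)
Your proof is correct and takes essentially the same route as the paper: the paper reduces part (b) to the inequality $\norm{\phi(w)\oplus\phi'(w) - \Id} \geq \frac{1}{\sqrt{N}}\norm{\phi(w)-\Id}$ and concludes $\ell^{fa}(\psi(g)) \geq \ell^{fa}(g)/\sqrt{N}$, which is exactly your block-diagonal norm computation and $\eta$/$\eps$ bookkeeping spelled out. The paper dismisses part (a) as clear; your explicit argument (left regular representation for injectivity, arbitrary faithful finite-dimensional $\phi$ for the $fin$-embedding case) is a correct instantiation of the same observation that a direct summand inherits triviality.
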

\begin{proof}
    Part (a) is clear, so we prove (b).
    Suppose $\phi$ is an $\eps$-representation of $G$, where $\eps > 0$.
    If $\gamma \circ \Psi = \phi \oplus \phi'$, where $\phi$ is $d$-dimensional
    and $\phi'$ is $(N-1)d$-dimensional, then
    \begin{equation*}
        \norm{\gamma(\Psi(w)) - \Id} = \norm{\phi(w)\oplus\phi'(w)
            - \Id} \geq \frac{1}{\sqrt{N}} \norm{\phi(w)-\Id}
    \end{equation*} 
    for all $w \in \mcF(S)$. So $\ell^{fa}(\psi(g)) \geq \ell^{fa}(g)/\sqrt{N}$,
    and $\psi$ is an $fa$-embedding. 
\end{proof}
In our applications it will be possible to check parts (a) and
(b) of Lemma \ref{L:trick} simultaneously, in which case $\psi$ will be an
$fa^*$-embedding.

\subsection{Groups over $\Z_2$}

For convenience, we use the following definition from from \cite{Sl16}: A
\emph{group over $\Z_2$} is a pair $(G,J)$, where $J$ is a central element of
$G$ of order two. Note that $J$ is allowed to be the identity element.
Typically we drop the pair notation, and just use the symbol $J$ (or $J_G$
where necessary) to refer to the special element of a group $G$ over $\Z_2$, in
the same way that we use $e$ to refer to the identity element. If $G$ and $H$
are groups over $\Z_2$, then a \emph{morphism $G \arr H$ over $\Z_2$} is a
group homomorphism $G \arr H$ which sends $J_G \mapsto J_H$. 

If a group $G$ over $\Z_2$ is finitely presentable, then it has a finite
presentation $\langle S : R \rangle$ where $J \in S$, and $R$ includes the
relations $J^2 = e$ and $[J,s]=e$ for every $s \in S \setminus \{J\}$. We use
presentations of this form often enough that it is helpful to have some
notation for them.  Suppose that $S_0$ is a set of indeterminates, and $R_0
\subset \mcF(S_0 \cup \{J\})$.  Then we set
\begin{equation*}
    \langle S_0 : R_0 \rangle_{\Z_2} := \left\langle S_0 \cup \{J\} : R_0 \cup
        \left\{[J,s] = e \; : \; s\in S_0\right\} \cup \{J^2 = e\} \right\rangle,
\end{equation*}
and call $\langle S_0 : R_0 \rangle_{\Z_2}$ a \emph{presentation over $\Z_2$}.
As with ordinary presentations, if $G = \langle S : R \rangle$ or $\langle S : R
\rangle_{\Z_2}$, then $\langle G, S' : R' \rangle_{\Z_2} := \langle S \cup S' :
R \cup R' \rangle_{\Z_2}$. 

\section{Linear system games and solution groups}

Let $Ax=b$ be an $m \times n$ linear system over $\Z_2$. To the system $Ax=b$,
we can associate a non-local game, called a \emph{linear system game}, as
follows. For each $1 \leq i \leq m$, let $V_i = \{j : A_{ij} \neq 0\}$ be the
set of indices of variables appearing in the $i$th equation. Let $S_i \subset
\Z_{2}^{V_i}$ be the set of assignments to variables $x_j$, $j \in V_j$
satisfying the $i$th equation, i.e.  $\underline{a} \in \Z_{2}^{V_i}$ belongs
to $S_i$ if and only if $\sum_{j \in V_j} a_j = b_i$. Then Alice receives an
equation as input, represented by an integer $1 \leq i \leq m$, and must output
an element $\underline{a} \in S_i$. Bob receives a variable, represented by an
integer $1 \leq j \leq n$, and must output an assignment $b$ for $x_j$.  The
players win if either $j \not\in V_i$, or $j \in V_i$ and $a_j = b$, i.e.
Alice's and Bob's outputs are consistent.

A \emph{quantum strategy (presented in terms of measurements)} for a linear
system game consists of 
\begin{enumerate}[(1)]
    \item a pair of Hilbert spaces $H_A$ and $H_B$, 
    \item a projective measurement $\{N^j_{b}\}_{b\in \Z_2}$ on $H_B$ for every integer $1
        \leq j \leq n$,
    \item a projective measurement $\{M^i_{\underline{a}}\}_{\underline{a} \in
        S_i}$ on $H_A$ for every integer $1 \leq i \leq m$, and
    \item a quantum state $\ket{\psi} \in H_A \otimes H_B$.
\end{enumerate}
The strategy is \emph{finite-dimensional} if $H_A$ and $H_B$ are finite-dimensional.
The associated quantum correlation matrix $\{p(\underline{a},b|i,j)\}$ is
defined by 
\begin{equation*}
    p(\underline{a},b|i,j) = \bra{\psi} M^{i}_{\underline{a}} \otimes N^j_{b} \ket{\psi}, \quad
        1 \leq i \leq m,\ 1 \leq j \leq n,\ \underline{a} \in S_i,\ b \in \Z_2.
\end{equation*}
As in the introduction, we also use the term \emph{strategy} to refer to the
correlation matrix $\{p(\underline{a},b|i,j)\}$. If $j \in V_i$, then the
probability that Alice and Bob win on inputs $i$ and $j$ is 
\begin{equation*}
    p_{ij} := \sum_{\underline{a},b : a_j = b} p(\underline{a},b|i,j).
\end{equation*}
A strategy is \emph{perfect} if and only if $p_{ij} = 1$ for all $1 \leq i \leq
m$ and $j \in V_i$.

For linear system games, it is often convenient to work with strategies
presented in terms of $\pm 1$-valued observables---self-adjoint operators
which square to the identity---rather than measurement operators. 
A \emph{quantum strategy (presented in terms of observables)} consists of 
\begin{enumerate}[(a)]
    \item a pair of Hilbert spaces $H_A$ and $H_B$; 
    \item a collection of self-adjoint operators $X_j$, $1 \leq j \leq n$, on $H_B$ such
        that $X_j^2 = \Id$ for every $1 \leq j \leq n$; 
    \item a collection of self-adjoint operators $Y_{ij}$, $1 \leq i \leq m$, $j \in V_i$ on $H_A$ 
        such that
        \begin{enumerate}[(i)]
            \item $Y_{ij}^2 = \Id$ for every $1 \leq i \leq m$ and $j \in V_i$, 
            \item $\prod_{j \in V_i} Y_{ij} = (-\Id)^{b_i}$ for every $1 \leq i \leq m$, and
            \item $Y_{ij} Y_{il} = Y_{il} Y_{ij}$ for every $1 \leq i \leq m$ and $j,l \in V_i$; 
        \end{enumerate}
        and
    \item a quantum state $\ket{\psi} \in H_A \otimes H_B$. 
\end{enumerate}
Given a quantum strategy presented in terms of measurements, we can get a
quantum strategy presented in terms of observables by setting $X_j = N^j_0 -
N^j_1$ for every $1 \leq j \leq n$, and 
\begin{equation*}
    Y_{ij} = \sum_{\underline{a} \in S_i} (-1)^{a_j} M^{i}_{\underline{a}}
\end{equation*}
for $1 \leq i \leq m$ and $j \in V_i$. Conversely, given a
quantum strategy in terms of observables, we can recover the measurement
presentation using the spectral decomposition of the observables. So the two
notions of strategy are equivalent. Note that if $j \in V_i$, then
\begin{equation}\label{E:winningbias}
\begin{split}
    \bra{\psi} Y_{ij} \otimes X_j \ket{\psi} & = \bra{\psi} \sum_{\underline{a} \in S_i}
        (-1)^{a_j} M^i_{\underline{a}} \otimes \sum_{b \in \Z_2} (-1)^b N^j_b \ket{\psi} \\
        & = \sum_{\underline{a} \in S_i, b \in \Z_2} (-1)^{a_j + b} p(\underline{a},b|i,j) \\
        & = 2 \left[ \sum_{\underline{a},b : a_j = b} p(\underline{a},b|i,j) \right] - 1 = 2p_{ij} - 1,
\end{split}
\end{equation}
where $p_{ij}$ is, again, the probability that Alice
and Bob win on inputs $i$ and $j$. The quantity $2p_{ij}-1$ is called the
\emph{winning bias} on inputs $i$ and $j$.

To every linear system, we can also associate a finitely presented group over
$\Z_2$, as follows. 
\begin{defn}\label{D:solutiongroup}
    Let $Ax=b$ be an $m \times n$ linear system. The \emph{solution group} 
    of this system is the group
    \begin{align*}
        \Gamma(A,b) : = \Big\langle x_1,\ldots,x_n :\; & x_n^2 = e \text{ for all } 1 \leq j \leq n, \\
                                               & \prod_{j=1}^n x_j^{A_{ij}} = J^{b_i} \text{ for all } 
                                                    1 \leq i \leq m, \text{ and } \\
                                               & x_j x_k = x_k x_j \text{ if } j,k \in V_i
                                            \text{ for some } 1 \leq i \leq m \; \Big\rangle_{\Z_2}
    \end{align*}
    We say that a group over $\Z_2$ is a \emph{solution group} if it has a
    presentation over $\Z_2$ of this form. 
\end{defn}
Solution groups and linear system games are related as follows.
\begin{thm}[\cite{CM14}, see also \cite{CLS16}]\label{T:solutiongroup} 
    Let $\mcG$ be the linear system game associated to a system $Ax=b$.
    Then the following are equivalent:
    \begin{enumerate}[(a)]
        \item $\mcG$ has a perfect strategy in $C_{qs}$.
        \item $\mcG$ has a perfect strategy in $C_{q}$.
        \item $J_{\Gamma}$ is non-trivial in some finite-dimensional
            representation of $\Gamma = \Gamma(A,b)$. 
    \end{enumerate}
\end{thm}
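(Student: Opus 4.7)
The plan is to establish the equivalences via $(c) \Rightarrow (b) \Rightarrow (a) \Rightarrow (b) \Rightarrow (c)$, where $(b) \Rightarrow (a)$ is immediate from $C_q \subseteq C_{qs}$ and $(b) \Leftrightarrow (c)$ proceeds by direct constructions tying quantum strategies to the solution group. The reduction from a (possibly infinite-dimensional) $C_{qs}$ strategy to a finite-dimensional one in $(a) \Rightarrow (b)$ will be the main obstacle.

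For $(c) \Rightarrow (b)$, given a finite-dimensional representation $\rho \colon \Gamma \arr \mcU(V)$ with $\rho(J) \neq \Id$, the centrality and order-two of $J$ give a nonzero $(-1)$-eigenspace $V_- = \ker(\rho(J)+\Id)$ preserved by every $\rho(x_j)$. On $V_-$ the $\rho(x_j)$ are self-adjoint involutions satisfying all the defining relations of $\Gamma$ with $J$ replaced by $-\Id$. I would work on $V_- \otimes V_-$ with the maximally entangled state $\ket{\psi}$, Alice's observables $Y_{ij} := \rho(x_j)|_{V_-}$, and Bob's observables $X_j := (\rho(x_j)|_{V_-})^{\top}$. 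Conditions (i)--(iii) for Alice follow directly from the group relations, and using $\bra{\psi} A \otimes B^{\top} \ket{\psi} = d^{-1}\tr(AB)$ (with $d = \dim V_-$) together with equation~\eqref{E:winningbias}, the winning bias on each input $(i,j)$ with $j \in V_i$ equals $d^{-1}\tr(\rho(x_j)^2) = 1$, yielding a perfect $C_q$ strategy.

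For $(b) \Rightarrow (c)$, perfectness combined with equation~\eqref{E:winningbias} forces $(Y_{ij} \otimes X_j)\ket{\psi} = \ket{\psi}$ for every $i$ and every $j \in V_i$. Applying two of these identities in opposite orders and using the commutativity of Alice's $Y_{ij}, Y_{ik}$ shows that $X_j X_k = X_k X_j$ on $\operatorname{supp}(\rho_B)$ whenever $j, k \in V_i$; similarly, composing along an entire row and using $\prod_{j \in V_i} Y_{ij} = (-\Id)^{b_i}$ shows that $\prod_{j \in V_i} X_j^{A_{ij}} = (-\Id)^{b_i}$ on $\operatorname{supp}(\rho_B)$. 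Since Bob's operators already satisfy $X_j^2 = \Id$ and the support is finite-dimensional in the $C_q$ case, the $X_j$ define a finite-dimensional representation of $\Gamma$ with $J \mapsto -\Id$ provided some $b_i = 1$; if $b \equiv 0$, the quotient $\Gamma \arr \Z_2$ sending every $x_j \mapsto 0$ and $J \mapsto -1$ is a well-defined one-dimensional representation with $J$ nontrivial.

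The hardest direction is $(a) \Rightarrow (b)$. The same Schmidt-transport argument produces a representation $\pi$ of $\Gamma$ on $\operatorname{supp}(\rho_B)$ with $\pi(J) = -\Id$ in the $C_{qs}$ setting, but this support may now be infinite-dimensional. My plan is to exploit the Schmidt symmetry of $\ket{\psi}$, which makes $\omega(T) := \bra{\psi}\Id \otimes T \ket{\psi}$ a faithful tracial state on the von Neumann algebra generated by the $X_j$, turning $\pi$ into a tracial representation of $\Gamma$ with $\omega(J) = -1$. Using this tracial data together with the rigid algebraic structure forced by the solution-group relations---commuting involutions multiplying to $\pm\Id$ within each row $V_i$---I would try to extract a finite-dimensional quotient on which $\pi(J)$ remains nontrivial, and then apply $(c) \Rightarrow (b)$ to manufacture a genuinely finite-dimensional perfect strategy. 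The essential difficulty is precisely this reduction from a potentially type~II$_1$ tracial representation to a finite-dimensional one, and I expect to need the technical machinery of \cite{CM14, CLS16} for this step.
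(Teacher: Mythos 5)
This theorem is imported in the paper by citation to \cite{CM14} and \cite{CLS16}; the paper gives no proof of its own, so there is nothing internal to compare against. What can be assessed is whether your sketch is essentially the standard argument from those references, and whether the gaps you flag are genuine.

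Your constructions for $(c)\Rightarrow(b)$ and $(b)\Rightarrow(c)$ are correct and are the standard ones: restricting a finite-dimensional representation to the $(-1)$-eigenspace of $\rho(J)$ and using the maximally entangled state gives a perfect $C_q$ strategy; conversely, perfectness forces $(Y_{ij}\otimes X_j)\ket{\psi}=\ket{\psi}$, from which the commutation and row-product relations for the $X_j$ follow on $\operatorname{supp}(\rho_B)$, and one obtains a finite-dimensional representation with $J\mapsto -\Id$. (The case split on whether some $b_i=1$ is unnecessary: the assignment $x_j\mapsto X_j|_{\operatorname{supp}\rho_B}$, $J\mapsto -\Id$ is a well-defined representation regardless of $b$, since $J$ is central of order two and $-\Id$ satisfies whatever constraints the linear relations impose.)

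Where you are off is in your diagnosis of $(a)\Rightarrow(b)$ as requiring a reduction of a type $\mathrm{II}_1$ tracial representation to a finite-dimensional one. The actual mechanism is more concrete and rules out the type $\mathrm{II}_1$ scenario altogether. From $(Y_{ij}\otimes X_j)\ket{\psi}=\ket{\psi}$ and unitarity one gets $(\Id\otimes X_j)\ket{\psi}=(Y_{ij}\otimes\Id)\ket{\psi}$, and taking the partial trace over $H_A$ yields $X_j\rho_B X_j=\rho_B$, i.e.\ every $X_j$ commutes with the reduced density operator $\rho_B$. Since $\ket{\psi}$ lives in the Hilbert-space tensor product, $\rho_B$ is trace-class, hence compact, and each of its nonzero eigenvalues has a finite-dimensional eigenspace; any such eigenspace is invariant under all $X_j$ and carries a finite-dimensional representation of $\Gamma$ with $J\mapsto -\Id$. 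In particular the von Neumann algebra generated by the $X_j$ on $\operatorname{supp}(\rho_B)$ has a compact operator in its commutant, so it decomposes as a direct sum of finite-dimensional blocks --- there is no $\mathrm{II}_1$ obstruction to worry about. This is the content of the argument in \cite{CM14,CLS16}, and once you have it, $(a)\Rightarrow(c)$ (and hence $(a)\Rightarrow(b)$ via $(c)\Rightarrow(b)$) is immediate.
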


Although we haven't defined the set of commuting-operator correlations
$C_{qc}$, we can work with $C_{qc}$ through the following result.
\begin{thm}[\cite{CLS16}]\label{T:solutiongroup2}
    The linear system game associated to a system $Ax=b$ has a perfect
    strategy in $C_{qc}$ if and only if $J_{\Gamma}$ is non-trivial
    in $\Gamma = \Gamma(A,b)$. 
\end{thm}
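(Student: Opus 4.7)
The plan is to prove both directions using the standard dictionary between perfect strategies and representations of the solution group, adapted to the commuting-operator setting.

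For the forward direction, assume a perfect commuting-operator strategy with observables $Y_{ij}, X_j$ on a Hilbert space $H$, state $\ket{\psi}$, and $[Y_{ij}, X_k] = 0$ for all indices. Equation \eqref{E:winningbias} combined with $p_{ij} = 1$ forces $Y_{ij} X_j \ket{\psi} = \ket{\psi}$, so $X_j \ket{\psi} = Y_{ij} \ket{\psi}$ whenever $j \in V_i$. Let $\mcA$ and $\mcB$ be the $C^*$-algebras generated by Alice's and Bob's observables, and let $K$ be the closure of $\mcA\mcB\ket{\psi}$; both algebras preserve $K$. I claim the map $x_j \mapsto X_j|_K$, $J \mapsto -\Id|_K$ extends to a representation of $\Gamma(A,b)$. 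The relations $X_j^2 = \Id$, $J^2 = \Id$, and $[J, X_j] = 0$ are immediate; the content is to verify $[X_j, X_k] = 0$ and $\prod_{j \in V_i} X_j = (-\Id)^{b_i}$ as operators on $K$ whenever $j, k \in V_i$. For each such identity $T = 0$ with $T \in \mcB$, it suffices to show $T A B \ket{\psi} = 0$ for all $A \in \mcA$, $B \in \mcB$. The key observation is that $[T, A] = 0$ lets us move $T$ past $A$; repeated application of $X_j \ket{\psi} = Y_{ij} \ket{\psi}$ (together with $[X_j, Y_{ik}] = 0$) then reduces the expression to an identity purely on Alice's side, which holds globally in $\mcA$ because Alice's observables satisfy the solution-group relations by construction. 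This yields a representation of $\Gamma$ with $J_\Gamma \mapsto -\Id|_K \neq \Id|_K$, so $J_\Gamma$ is non-trivial in $\Gamma$.

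For the backward direction, assume $J_\Gamma \neq e$ in $\Gamma$. Since $J_\Gamma$ is central of order two, there is a unitary representation $\pi : \Gamma \arr \mcU(H)$ with $\pi(J_\Gamma) = -\Id$. Form the unital $C^*$-algebra $B := C^*(\Gamma)/\langle J_\Gamma + \Id\rangle$, which is nonzero by hypothesis, and consider the maximal tensor product $B \otimes_{\max} B^{\mathrm{op}}$. Pick any state whose GNS representation carries a cyclic vector $\ket{\psi}$ satisfying $(x_j \otimes 1)\ket{\psi} = (1 \otimes x_j^{\mathrm{op}})\ket{\psi}$ (such states exist because one can extend the obvious linear functional on the $*$-subalgebra where this holds; equivalently, one uses any faithful trace-like state on a quotient, or in finite dimensions, the maximally entangled vector on $H \otimes H$ with $Y_{ij} = \pi(x_j) \otimes \Id$, $X_j = \Id \otimes \pi(x_j)^T$, invoking $(A \otimes \Id)\ket{\Omega} = (\Id \otimes A^T)\ket{\Omega}$). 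Setting $Y_{ij} := x_j \otimes 1$ and $X_j := 1 \otimes x_j^{\mathrm{op}}$ gives observables that are self-adjoint unitaries, satisfy Alice's row relations in $B$ and Bob's in $B^{\mathrm{op}}$, and commute across the two factors, with winning bias $1$ on every valid input pair.

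The main obstacle is the forward direction: promoting the state-level identity $X_j\ket{\psi} = Y_{ij}\ket{\psi}$ to operator-level relations on the cyclic subspace $K$. This is delicate because $\mcB$ is noncommutative and Alice has different $Y_{ij}$ for different rows $i$ containing the same variable; the resolution is to exploit $[\mcA, \mcB] = 0$ systematically to transport identities from $\ket{\psi}$ across all of $\mcA\mcB\ket{\psi}$. For the backward direction, the minor subtlety is constructing the commuting-operator state without assuming a tracial state on $B$, which the maximal tensor product with its opposite handles uniformly in finite and infinite dimensions.
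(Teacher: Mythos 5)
The paper attributes this theorem to \cite{CLS16} and gives no proof of its own, so I evaluate your argument on its merits.

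Your forward direction is correct. The chain $T A B\ket{\psi} = A\, T B\ket{\psi}$, followed by pushing each $X_{k}$ in the word $B = X_{k_1}\cdots X_{k_m}$ rightward onto $\ket{\psi}$, converting it to some $Y_{i_\cdot k}$ via $X_k\ket{\psi}=Y_{i_\cdot k}\ket{\psi}$, and sliding that $Y$ back past the remaining $X$'s (which it commutes with), does reduce $\bigl(\prod_{j\in V_i}X_j\bigr)B\ket{\psi}$ and $B\ket{\psi}$ to the same vector up to the scalar $(-1)^{b_i}$, and similarly kills $[X_j,X_k]B\ket{\psi}$; the argument is terse but sound. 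One remark: the same computation shows that $\overline{\mcA\mcB\ket{\psi}}=\overline{\mcA\ket{\psi}}=\overline{\mcB\ket{\psi}}$ under the standing assumption that every variable occurs in some equation, so your choice of cyclic subspace is not actually more general than the usual one. (Variables occurring in no equation contribute only free $\Z_2$ factors and can be set to $\Id$.)

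The backward direction has a genuine gap. You need a state $\phi$ on $B\otimes_{\max}B^{\mathrm{op}}$ with $\phi(x_j\otimes x_j^{\mathrm{op}})=1$ for all $j$ (equivalently, a GNS cyclic vector $\ket{\psi}$ with $(x_j\otimes 1)\ket{\psi}=(1\otimes x_j^{\mathrm{op}})\ket{\psi}$), and you offer only hand-waving justifications: ``extend the obvious linear functional,'' ``any faithful trace-like state on a quotient,'' and the finite-dimensional maximally entangled trick (which does not cover the infinite-dimensional case that is the whole point of $C_{qc}$). For a general $C^*$-algebra such a state need not exist; what makes it work here is that $B$ has a tracial state. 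Concretely, let $\tau$ be the canonical trace on $C^*(\Gamma)$ (i.e.\ $\tau(g)=\delta_{g,e}$) and $p_-=(1-J)/2$; then $\tau_-(\cdot):=2\tau(\cdot\, p_-)$ is a tracial state on $C^*(\Gamma)$ with $\tau_-(J)=-1$, hence descends to a tracial state on $B$, and the associated state $\phi(a\otimes b^{\mathrm{op}}):=\tau_-(ab)$ on $B\otimes_{\max}B^{\mathrm{op}}$ does the job. Equivalently and more directly, one can work on $\ell^2(\Gamma)$: the projection $\lambda(p_-)$ commutes with both regular representations $\lambda$ and $\rho$, the restricted subspace $H_-$ has $\lambda(J)|_{H_-}=\rho(J)|_{H_-}=-\Id$, the vector $\ket{\psi}:=(\xi_e-\xi_J)/\sqrt2$ is nonzero precisely because $J\neq e$ in $\Gamma$, and $\lambda(x_j)\ket{\psi}=\rho(x_j)\ket{\psi}$, giving the perfect commuting-operator strategy. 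You should supply one of these arguments; without it, the claimed state is not known to exist.
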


The main point of this section is to prove an analog of one direction of
Theorem \ref{T:solutiongroup} for approximate representations.
\begin{prop}\label{P:reptostrat}
    Let $\Gamma = \Gamma(A,b)$ be a solution group. If $J_{\Gamma}$ is
    non-trivial in finite-dimensional approximate representations of $\Gamma$
    then the linear system game associated to $Ax=b$ has a perfect strategy in
    $C_{qa}$. 
\end{prop}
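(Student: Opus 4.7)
The plan is to convert the hypothesis --- a sequence of $\eps_n$-representations $\phi_n \colon \mcF(S) \to \mcU(H_n)$ with $\eps_n \to 0$ and $\|\phi_n(J_\Gamma) - \Id\| \geq \delta > 0$ --- into a sequence of finite-dimensional quantum strategies $\sigma_n$ whose winning probabilities on every valid input pair tend to $1$. Since the set of correlation matrices sits in a compact product of simplices, a convergent subsequence of $\sigma_n$ has its limit in $\overline{C_q} = C_{qa}$, and the limit is a perfect strategy.

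The first step is to reduce to the case $\psi_n(J_\Gamma) = -\Id$. Since $J^2 = e$ and $[J, s] = e$ in $\Gamma$, $\phi_n(J_\Gamma)$ is approximately a central self-adjoint involution, and its nearest self-adjoint unitary involution $\hat J_n$ satisfies $\|\hat J_n - \phi_n(J_\Gamma)\| = O(\eps_n)$. The identity $\|\phi_n(J_\Gamma) - \Id\|^2 = 2 - (2/\dim H_n)\Re\tr\phi_n(J_\Gamma) \geq \delta^2$ forces the $(-1)$-eigenspace $K_n$ of $\hat J_n$ to have dimension $\geq (\delta^2/8)\dim H_n$ for $n$ large. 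Block-diagonalizing each $\phi_n(x_j)$ against $H_n = K_n^\perp \oplus K_n$, the off-diagonal blocks have normalized Hilbert--Schmidt norm $O(\eps_n)$ by $[\hat J_n, \phi_n(x_j)] = O(\eps_n)$, and replacing the lower-right block with its polar-decomposed unitary gives $\psi_n(x_j) \in \mcU(K_n)$; setting $\psi_n(J_\Gamma) := -\Id$, one verifies by direct block computation that $\psi_n$ is an $O(\eps_n)$-representation of $\Gamma$ on $K_n$.

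Next, I build valid observables. Bob's observable $X_j$ on $H_B := K_n$ is the nearest self-adjoint unitary involution to $\psi_n(x_j)^T$, satisfying $\|X_j - \psi_n(x_j)^T\| = O(\eps_n)$ by functional calculus. For Alice, for each equation $i$ the family $(\psi_n(x_j))_{j \in V_i}$ approximately represents the \emph{finite} abelian $2$-group $G_i := \langle \{x_j : j \in V_i\}, J_\Gamma\rangle \leq \Gamma$, and a standard Gowers--Hatami rounding of approximate representations of finite groups yields an isometric enlargement $V_i \colon K_n \hookrightarrow K_n^{(i)}$ and an exact representation $\pi^{(i)}$ of $G_i$ on $K_n^{(i)}$ with $\pi^{(i)}(J_\Gamma) = -\Id$ and $\|V_i^* \pi^{(i)}(g) V_i - \psi_n(g)\| = O(\eps_n)$. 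Taking $H_A$ to be a large enough common ambient space with isometries $W_i \colon K_n^{(i)} \hookrightarrow H_A$ such that $W_i V_i$ is independent of $i$, and extending each $W_i \pi^{(i)}(x_j) W_i^*$ to $H_A$ by an arbitrary commuting family of involutions with product $(-\Id)^{b_i}$ on the orthogonal complement, produces Alice's observables $Y_{ij}$; the shared state $\ket{\chi_n}$ is the maximally entangled state on $K_n \otimes K_n$, embedded into $H_A \otimes H_B$ via the common inclusion. Applying the identity $\langle \chi_n | T \otimes S^T | \chi_n\rangle = \tr(TS)/\dim K_n$ together with \eqref{E:winningbias} gives winning bias $\tr(\psi_n(x_j)^2)/\dim K_n - O(\eps_n) \geq 1 - O(\eps_n)$, so $p_{ij}^{(n)} \to 1$ uniformly.

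The main technical obstacle is the first step --- controlling the block-plus-polar reduction so that $\psi_n$ remains an $O(\eps_n)$-representation rather than, say, an $O(\sqrt{\eps_n})$-representation. This hinges on two observations: the off-diagonal blocks are $O(\eps_n)$ by approximate centrality of $\phi_n(J_\Gamma)$, and the lower-right block is automatically within $O(\eps_n^2)$ of being unitary since $\phi_n(x_j)$ is unitary (so polar decomposition costs only $O(\eps_n^2)$). The observable construction is routine precisely because each $G_i$ is \emph{finite} --- a consequence of every $x_j$ being an involution --- and because Alice's measurements for different inputs are algebraically independent, which allows the per-equation enlargements to be combined inside a single $H_A$.
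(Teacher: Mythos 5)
Your overall strategy is the same as the paper's: normalize the approximate representation so that $J \mapsto -\Id$, round the per-equation families of generators to exact commuting involutions for Alice, use the raw (or rounded) approximate representation for Bob, and pair them via the maximally entangled state using the identity $\bra{v}A\otimes B\ket{v}=\tr(A^T B)/d$ together with \eqref{E:winningbias}. Your first step is the content of Lemma~\ref{L:stability4}. The difference, and the place where your proposal has a gap, is the rounding step.

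You invoke a Gowers--Hatami dilation to produce an exact representation $\pi^{(i)}$ of the subgroup $G_i=\langle\{x_j:j\in V_i\},J\rangle$ with $\pi^{(i)}(J)=-\Id$. But Gowers--Hatami gives $\|V_i^*\pi^{(i)}(J)V_i+\Id\|=O(\eps_n)$, not $\pi^{(i)}(J)=-\Id$ on all of the dilation $K_n^{(i)}$. This matters because, for Alice's observables to satisfy the constraint $\prod_{j\in V_i}Y_{ij}=(-\Id)^{b_i}$ on \emph{all} of $H_A$ --- which is a hard constraint of the strategy, not merely an approximate one --- you need $\prod_{j\in V_i}\pi^{(i)}(x_j)=\pi^{(i)}(J)^{b_i}$ to equal $(-\Id)^{b_i}$, i.e.\ you need $\pi^{(i)}(J)=-\Id$ \emph{exactly} whenever $b_i=1$. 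One could try to fix this by restricting $\pi^{(i)}$ to the $(-1)$-eigenspace of $\pi^{(i)}(J)$ (which is invariant since $J$ is central) and verifying that $V_i$ is still an approximate isometry into that eigenspace, but you do not do this, and the claim is not automatic. The paper avoids the issue entirely with a lighter tool: it rounds only the subfamily $\{x_j:j\in W_i\}$ with $W_i=V_i\setminus\{j_i\}$ to an exact representation of $\Z_2^{W_i}$ \emph{on the same space} (Lemma~\ref{L:stability3}, no dilation), and then \emph{defines} $Y_{ij_i}:=(-1)^{b_i}\prod_{j\in W_i}Y_{ij}$, so the product constraint holds by construction; one then separately checks $Y_{ij_i}$ is still close to $\phi(x_{j_i})^T$ using the approximate linear relation. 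This makes the exact-sign issue disappear and also removes the need for the ``common ambient $H_A$ with $W_iV_i$ independent of $i$'' bookkeeping. Minor further point: your quantitative claim that the polar correction of the lower-right block costs only $O(\eps_n^2)$ is not right --- from $D^*D=\Id-B^*B$ one gets $\|D-U\|=O(\eps_n)$, not $O(\eps_n^2)$ --- but $O(\eps_n)$ is all that is needed, so this does not affect the argument.
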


The proof of Proposition \ref{P:reptostrat} is a straightforward application of
a number of easy stability lemmas. We start by pinning down what we want to
prove.  
\begin{lemma}\label{L:perfect}
    The linear system game associated to $Ax=b$ has a perfect strategy
    in $C_{qa}$ if and only if, for all $\eps > 0$, there is a
    finite-dimensional quantum strategy (presented in terms of observables)
    $\{Y_{ij}\}$, $X_j$, $\ket{\psi}$ such that
    \begin{equation*}
        \bra{\psi} Y_{ij} \otimes X_j \ket{\psi} \geq 1 - \eps \text{ for all }
            1 \leq i \leq n, j \in V_i.
    \end{equation*}
\end{lemma}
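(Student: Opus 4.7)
The plan is to combine the definition $C_{qa} = \overline{C_q}$ with two easy observations: first, that a correlation is perfect for the linear system game if and only if $p_{ij} = 1$ for every $1 \leq i \leq m$ and $j \in V_i$ (since $p(\underline{a},b|i,j) = 0$ whenever $V(\underline{a},b|i,j) = 0$ is equivalent to saying Alice and Bob's answers on the shared variable always agree); and second, that the observable and measurement presentations of a finite-dimensional quantum strategy are interchangeable via spectral decomposition, with the relationship between winning probability and winning bias supplied by equation \eqref{E:winningbias}. Given these, the lemma reduces to an elementary compactness argument on the (finite-dimensional) space of correlation matrices.

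For the forward direction, suppose a perfect strategy $p \in C_{qa}$ exists. Since $C_{qa}$ is the closure of $C_q$, there is a sequence $p_n \in C_q$ converging pointwise to $p$. Each $p_n$ arises from a finite-dimensional quantum strategy in the measurement presentation, which I convert to the observable presentation using $X_j = N^j_0 - N^j_1$ and $Y_{ij} = \sum_{\underline{a}\in S_i}(-1)^{a_j} M^i_{\underline{a}}$. Perfection of $p$ forces $p_{ij} = 1$ for all $j \in V_i$, so $p_n^{(ij)} \to 1$, and by \eqref{E:winningbias} the corresponding biases satisfy $\bra{\psi_n} Y_{ij}^{(n)} \otimes X_j^{(n)} \ket{\psi_n} = 2p_n^{(ij)} - 1 \to 1$. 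As the set of pairs $(i,j)$ with $j \in V_i$ is finite, given $\eps > 0$ it suffices to choose $n$ large enough that every such bias exceeds $1-\eps$.

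For the reverse direction, for each integer $k \geq 1$ let $\{Y_{ij}^{(k)}\}, X_j^{(k)}, \ket{\psi_k}$ be a finite-dimensional observable strategy with $\bra{\psi_k} Y_{ij}^{(k)} \otimes X_j^{(k)} \ket{\psi_k} \geq 1 - 1/k$ for every $j \in V_i$. Using the spectral projectors of the $X_j^{(k)}$ and $Y_{ij}^{(k)}$ to recover a measurement presentation, compute the associated correlation matrix $p_k \in C_q$; by \eqref{E:winningbias} again, $p_k^{(ij)} \geq 1 - 1/(2k)$. The matrices $p_k$ all lie in the compact cube $[0,1]^{\mcO_A \times \mcO_B \times \mcI_A \times \mcI_B}$, so a subsequence $p_{k_\ell}$ converges to some correlation matrix $p$. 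Because every $p_{k_\ell}$ belongs to $C_q$, the limit $p$ belongs to $\overline{C_q} = C_{qa}$. Passing to the limit yields $p_{ij} = 1$ for all $j \in V_i$, so $p$ is the desired perfect strategy in $C_{qa}$.

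There is no real obstacle here; the argument is essentially bookkeeping. The only points requiring minor care are (i) invoking the equivalence of the measurement and observable pictures in both directions, and (ii) using finiteness of the index set $\{(i,j) : j \in V_i\}$ so that pointwise convergence of correlations becomes uniform convergence of the relevant entries, which is what allows a single $\eps$ to control all biases simultaneously.
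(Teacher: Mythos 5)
Your proof is correct and takes essentially the same route as the paper's: both directions rest on $C_{qa} = \overline{C_q}$, equation \eqref{E:winningbias} converting between winning probabilities and biases, and the equivalence of the measurement and observable presentations. You simply unfold the compactness/subsequence argument that the paper compresses into the single sentence asserting that a perfect $C_{qa}$-strategy exists iff for every $\eps>0$ there is a finite-dimensional strategy with all $p_{ij}\geq 1-\eps/2$.
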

\begin{proof}
    Since $C_{qa}$ is the closure of $C_q$, the linear system game associated
    to $Ax=b$ has a perfect strategy in $C_{qa}$ if and only if, for every
    $\eps > 0$, there is a finite-dimensional quantum strategy such that the
    winning probability $p_{ij} \geq 1 - \eps/2$ for every $1 \leq i \leq m$
    and $j \in V_i$. But $p_{ij} \geq 1-\eps/2$ if and only if the winning
    bias $2p_{ij} - 1 \geq 1 -\eps$, so the lemma follows from equation
    \eqref{E:winningbias}. 
\end{proof}

Next, we come to the stability lemmas, which will allow us to turn approximate
representations of the solution group $\Gamma$ into quantum strategies. The
following lemmas are all likely well-known to experts (see, for instance,
\cite{Gl10,FK10}); we include the proofs for completeness.
\begin{lemma}\label{L:stability1}
    For any diagonal matrix $X$, there is a diagonal matrix $D$ with $D^2 =
    \Id$ and 
    \begin{equation*}
        \norm{D-X} \leq \left(1 + \frac{1}{\sqrt{2}}\right) \norm{X^2 - \Id}.
    \end{equation*}
\end{lemma}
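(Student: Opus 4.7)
The plan is to work entrywise. Since $X$ is diagonal, write $X = \operatorname{diag}(\lambda_1,\ldots,\lambda_n)$ with $\lambda_i \in \mbC$. I would simply construct $D = \operatorname{diag}(d_1,\ldots,d_n)$ by choosing each $d_i \in \{+1,-1\}$ to be a nearest $\pm 1$ to $\lambda_i$, i.e.\ $d_i = 1$ when $\Re(\lambda_i) \geq 0$ and $d_i = -1$ otherwise. Then $D^2 = \Id$ automatically, and since both $D - X$ and $X^2 - \Id$ are diagonal,
\[
\norm{D - X}^2 \;=\; \frac{1}{n}\sum_i |d_i - \lambda_i|^2
\qquad\text{and}\qquad
\norm{X^2 - \Id}^2 \;=\; \frac{1}{n}\sum_i |\lambda_i^2 - 1|^2,
\]
so it suffices to establish the entrywise bound $|d_i - \lambda_i| \leq |\lambda_i^2 - 1|$ and then sum.

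For the entrywise estimate, I would factor $\lambda_i^2 - 1 = (\lambda_i - 1)(\lambda_i + 1)$ and use the symmetry $\lambda \mapsto -\lambda$ to reduce to the case $d_i = 1$, so that $\Re(\lambda_i) \geq 0$. In that case
\[
|\lambda_i + 1|^2 \;=\; (1+\Re\lambda_i)^2 + (\Im\lambda_i)^2 \;\geq\; 1,
\]
and therefore
\[
|\lambda_i^2 - 1| \;=\; |\lambda_i - 1|\cdot|\lambda_i + 1| \;\geq\; |\lambda_i - 1| \;=\; |d_i - \lambda_i|.
\]
Summing the squared entrywise inequalities and taking square roots yields $\norm{D - X} \leq \norm{X^2 - \Id}$, which is stronger than the stated bound with constant $1 + 1/\sqrt{2}$.

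There is essentially no obstacle here: the only choice to make is the nearest-point rounding of each $\lambda_i$, and the elementary inequality $|\lambda + 1| \geq 1$ whenever $\Re\lambda \geq 0$ takes care of everything. A more operator-theoretic route would be to note that $X$ and $D$ commute (both being diagonal) and hence $(X-D)(X+D) = X^2 - \Id$, and then to lower-bound $\norm{X+D}$ in a suitable sense; but the direct entrywise computation above is cleaner, yields the tightest possible constant, and generalizes immediately to any two commuting diagonal matrices. The constant $1 + 1/\sqrt{2}$ in the statement is presumably adequate slack for the applications in later stability lemmas, rather than a reflection of genuine loss.
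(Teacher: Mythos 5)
Your proof is correct, and it takes a genuinely different and cleaner route than the paper's. Both proofs choose $D$ the same way, by rounding each diagonal entry to the nearer of $\pm 1$ via $d_i = \sgn(\Re \lambda_i)$. The paper then passes through the real part of $X$: it writes $\norm{D - X} \leq \norm{D - \Re X} + \norm{\Re X - X}$, shows $\norm{\Re X - X} = \norm{\Im X} \leq \frac{1}{\sqrt{2}} \norm{X^2 - \Id}$ and $\norm{D - \Re X} \leq \norm{(\Re X)^2 - \Id} \leq \norm{X^2 - \Id}$ by entrywise estimates, and adds to get the stated constant $1 + \frac{1}{\sqrt 2}$. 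Your argument skips the real/imaginary split altogether: the factorization $\lambda^2 - 1 = (\lambda - 1)(\lambda + 1)$ together with the one-line observation that $|\lambda + 1| \geq 1$ whenever $\Re\lambda \geq 0$ gives the entrywise bound $|d_i - \lambda_i| \leq |\lambda_i^2 - 1|$ directly, and hence $\norm{D - X} \leq \norm{X^2 - \Id}$ after summing squares. This is shorter, avoids the triangle-inequality loss, and yields the sharper constant $1$ (which of course implies the stated inequality). The only thing to double-check is the case $\Re\lambda < 0$, where $d_i = -1$ and you need $|\lambda - 1| \geq 1$ instead; this holds by the same computation, so your appeal to the $\lambda \mapsto -\lambda$ symmetry is legitimate. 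The improvement in the constant does not matter downstream, since the later stability lemmas only track that the constants are finite, but your argument is preferable on its own merits.
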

\begin{proof}
    Suppose $X$ is a $d \times d$ matrix, and let $D_{ii} = \sgn \Re X_{ii}$ for
    all $1  \leq i \leq d$, where $\sgn x = 1$ if $x \geq 0$ and $-1$ if $x <
    0$. To show that the desired inequality holds, consider a complex number $\alpha
    = a + bi$. Then 
    \begin{align*}
        |\alpha^2 - 1|^2 & = |a^2-b^2-1 + 2abi|^2 = \left[\left(a^2-1\right)-b^2\right]^2 + 4 a^2 b^2 \\
            & = (a^2-1)^2 + 2 b^2 + 2 a^2 b^2 + b^4.
    \end{align*}
    In particular, this implies that $|\alpha^2-1|^2$ is greater than or equal to
    $(a^2-1)^2$ and $2 b^2$. Consequently,
    \begin{align*}
        \norm{ \left(\Re X\right)^2 - \Id } 
             = \sqrt{ \frac{1}{d} \sum_j \left[\left(\Re X_{jj}\right)^2 - 1\right]^2 } 
             \leq \sqrt{\frac{1}{d} \sum_j | X_{jj}^2 - 1 |^2 } = \norm{ X^2 - \Id},
    \end{align*}
    and 
    \begin{align*}
        \norm{\Re X - X} = \norm{\Im X} = \sqrt{ \frac{1}{d} \sum_j |\Im X_{jj} |^2 }
            \leq \sqrt{\frac{1}{2d} \sum_j | X_{jj}^2 - 1|^2 } = \frac{1}{\sqrt{2}} \norm{X^2 - \Id}.
    \end{align*}
    By considering the cases $a \geq 0$ and $a < 0$ separately, we see that
    \begin{equation*}
        |a^2 - 1| = |1 +a| |1-a| = (1 + |a|)|\sgn a - a| \geq |\sgn a - a|
    \end{equation*}
    for all $a \in \R$. 
    Thus, as above, $\norm{D - \Re X} \leq \norm{(\Re X)^2-\Id}$, and the lemma follows. 
\end{proof}

\begin{lemma}\label{L:stability2}
    Suppose $X_1,\ldots,X_n$ are commuting unitary matrices, with $X_i^2 = \Id$
    for all $1  \leq i \leq n$, and $Y$ is a unitary matrix such that $Y^2 =
    \Id$ and $Y$ commutes with $X_i$ for all $1 \leq i \leq n-1$. Then there is a
    unitary matrix $Z$ such that $Z^2 = \Id$, $Z$ commutes
    with $X_i$ for all $1 \leq i \leq n$, and
    \begin{equation*}
        \norm{Z-Y} \leq \left(1 + \frac{1}{2\sqrt{2}} \right) \norm{X_n Y - Y X_n}.
    \end{equation*}
\end{lemma}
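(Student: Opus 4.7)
The plan is to first symmetrize $Y$ so that it commutes with $X_n$ (while keeping commutation with the earlier $X_i$'s), and then round the resulting self-adjoint operator to a $\pm1$-valued involution using Lemma \ref{L:stability1}. The commuting relations with $X_1,\ldots,X_n$ will be automatic from the construction, so the quantitative content is entirely in a single norm bound.

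First I would set $\tilde Y := \tfrac{1}{2}(Y + X_n Y X_n)$, the natural projection of $Y$ onto the commutant of $X_n$. Since $Y^2 = X_n^2 = \Id$ forces $Y$ and $X_n$ to be self-adjoint, so is $\tilde Y$. A direct check shows $X_n \tilde Y = \tilde Y X_n$, and $\tilde Y$ still commutes with every $X_i$, $i<n$, because both $Y$ and $X_n$ do. Writing $C := X_n Y - Y X_n$ for the commutator appearing in the hypothesis, the unitary invariance of the normalized Hilbert--Schmidt norm (multiplying by $X_n$ on the right) gives
\begin{equation*}
    \norm{\tilde Y - Y} = \tfrac{1}{2}\norm{X_n Y X_n - Y} = \tfrac{1}{2}\norm{C}.
\end{equation*}

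The heart of the argument is the clean identity $\tilde Y^2 - \Id = -\tfrac{1}{4} C^2$, which I would verify by expanding the square of $\tilde Y$ and repeatedly using $Y^2 = X_n^2 = \Id$ together with $X_n Y = Y X_n - C$ to collapse the cross terms. Combined with the submultiplicative bound $\norm{C^2} \leq \norm{C}_{op}\norm{C} \leq 2\norm{C}$, valid because the operator norm of a difference of unitaries is at most $2$, this yields
\begin{equation*}
    \norm{\tilde Y^2 - \Id} \;\leq\; \tfrac{1}{2}\norm{C}.
\end{equation*}

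To finish, the commuting family $\{X_1,\ldots,X_n,\tilde Y\}$ of self-adjoint operators can be simultaneously diagonalized, so there is a basis in which $\tilde Y$ and every $X_i$ are diagonal. Applying Lemma \ref{L:stability1} in this basis produces a diagonal $Z$ with $Z^2 = \Id$ and $\norm{Z - \tilde Y} \leq (1 + \tfrac{1}{\sqrt{2}})\norm{\tilde Y^2 - \Id} \leq (\tfrac{1}{2} + \tfrac{1}{2\sqrt{2}})\norm{C}$; because $Z$ is diagonal in the same basis as the $X_i$, it automatically commutes with each of them. The triangle inequality then delivers $\norm{Z - Y} \leq (1 + \tfrac{1}{2\sqrt{2}})\norm{C}$, exactly matching the target constant. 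The only real step requiring care is the algebraic identity $\tilde Y^2 - \Id = -\tfrac{1}{4}C^2$; once in hand, everything else is bookkeeping and a direct invocation of the already-proved Lemma \ref{L:stability1}.
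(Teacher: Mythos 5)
Your proposal is correct and follows essentially the same path as the paper: you form the same symmetrization $\tilde Y = \tfrac{1}{2}(Y + X_n Y X_n)$, establish $\norm{\tilde Y^2 - \Id} \leq \tfrac{1}{2}\norm{C}$, apply Lemma \ref{L:stability1} after simultaneously diagonalizing $X_1,\ldots,X_n,\tilde Y$, and finish with the triangle inequality, obtaining the same constant. One small sign slip worth noting: a direct computation using $X_n^2 = Y^2 = \Id$ gives $\tilde Y^2 - \Id = +\tfrac{1}{4}C^2$, not $-\tfrac{1}{4}C^2$ (indeed both sides are negative semidefinite, since $C$ is skew-adjoint so $C^2 = -C^*C \le 0$); this has no effect on the norm bound, and your route to $\tfrac{1}{2}\norm{C}$ via $\norm{C^2} \leq \norm{C}_{op}\norm{C} \leq 2\norm{C}$ is a clean alternative to the paper's triangle-inequality-plus-unitary-invariance computation of the same quantity.
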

\begin{proof}
    Let $Z_0 = \frac{1}{2} (Y + X_n Y X_n)$. Clearly $Z_0$ commutes with $X_i$ for
    all $1 \leq i \leq  n-1$. Since $X_n^2 = \Id$, we also have that $X_n Z_0 =
    \frac{1}{2} (X_n Y + Y X_n) = Z_0 X_n$. Since $Y^2 = \Id = (X_n Y X_n)^2$ as
    well, we have that
    \begin{align*}
        \norm{Z_0^2 - \Id} & = \frac{1}{4} \norm{Y X_n Y X_n + X_n Y X_n Y - 2 \Id}   \\
            & \leq \frac{1}{4} \norm{Y X_n Y X_n - \Id} + \frac{1}{4} \norm{X_n Y X_n Y - \Id}
             = \frac{1}{2} \norm{X_n Y - Y X_n}.
    \end{align*}
    Since $X_n$ and $Y$ are self-adjoint, $Z_0$ is self-adjoint, so we can simultaneously
    diagonalize $X_1,\ldots,X_n$ and $Z_0$. Hence by Lemma \ref{L:stability1},
    there is a matrix $Z$ such that $Z^2 = \Id$, $Z$ commutes
    with $X_i$ for all $1 \leq i \leq n$, and
    \begin{align*}
        \norm{Z - Z_0} \leq \left(1 + \frac{1}{\sqrt{2}}\right) \norm{Z_0^2 - \Id}
            \leq \left(\frac{1}{2} + \frac{1}{2\sqrt{2}}\right) \norm{X_n Y - YX_n}.
    \end{align*}
    Finally,
    \begin{equation*}
        \norm{Y - Z_0} = \frac{1}{2} \norm{Y - X_n Y X_n} = \frac{1}{2} \norm{X_n Y - Y X_n},
    \end{equation*}
    so the lemma follows.  
\end{proof}

\begin{lemma}\label{L:stability3}
    Consider $\Z_2^k$ as a finitely-presented group with presentation 
    \begin{equation*}
        \langle x_1,\ldots,x_k : x_i^2 = e, [x_i,x_j] = e \text{ for all } i \neq j \rangle.
    \end{equation*}
    Then there is a constant $C>0$, depending on $k$, such that if $\phi$ is an
    $\eps$-representation of $\Z_2^k$ on a Hilbert space $H$, then there is a
    representation $\psi$ of $\Z_2^k$ on $H$ with 
    \begin{equation*}
        \norm{ \psi(x_i) - \phi(x_i) } \leq C \eps
    \end{equation*}
    for all $1 \leq i \leq k$.
\end{lemma}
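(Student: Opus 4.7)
The plan is to construct $\psi(x_1), \ldots, \psi(x_k)$ inductively, using Lemma \ref{L:stability1} to fix the involution relations and Lemma \ref{L:stability2} to enforce the commutation relations, one operator (and one new commutation condition) at a time. First I would note that although Lemma \ref{L:stability1} is stated for diagonal matrices, it extends to any unitary by simultaneous diagonalization in an eigenbasis, because the normalized Hilbert--Schmidt norm is invariant under unitary conjugation.

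Suppose inductively that $\psi(x_1), \ldots, \psi(x_{j-1})$ have already been constructed so that they pairwise commute, each squares to $\Id$, and each satisfies $\norm{\psi(x_i) - \phi(x_i)} \leq c_i \eps$ for constants $c_i$ depending only on $k$. To build $\psi(x_j)$, I would first apply the extended Lemma \ref{L:stability1} to the unitary $\phi(x_j)$ to obtain $Y^{(0)}$ with $(Y^{(0)})^2 = \Id$ and $\norm{Y^{(0)} - \phi(x_j)} \leq (1 + 1/\sqrt{2})\eps$. Then, for $i = 1, 2, \ldots, j-1$ in turn, I would apply Lemma \ref{L:stability2} with the commuting family $\psi(x_1), \ldots, \psi(x_i)$ and starting vector $Y = Y^{(i-1)}$ to produce $Y^{(i)}$ that squares to $\Id$, commutes with each of $\psi(x_1), \ldots, \psi(x_i)$, and satisfies
\[
    \norm{Y^{(i)} - Y^{(i-1)}} \leq \left(1 + \frac{1}{2\sqrt{2}}\right) \norm{\psi(x_i) Y^{(i-1)} - Y^{(i-1)} \psi(x_i)}.
\]
Finally, I would set $\psi(x_j) := Y^{(j-1)}$. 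Note that Lemma \ref{L:stability2} preserves commutation with the previously fixed $\psi(x_1), \ldots, \psi(x_{i-1})$, so this iteration is consistent.

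The estimate needed to close the induction is that each commutator $\norm{\psi(x_i) Y^{(i-1)} - Y^{(i-1)} \psi(x_i)}$ is $O(\eps)$. Since the normalized Hilbert--Schmidt norm satisfies $\norm{AB - CD} \leq \norm{A - C} + \norm{B - D}$ whenever $B$ and $C$ are unitary (because $\norm{\cdot}$ is bounded above by $\norm{\cdot}_{\mathrm{op}}$ times itself and unitaries have operator norm one), I can insert the pair $\phi(x_i) \phi(x_j)$ and $\phi(x_j) \phi(x_i)$ between the two terms of the commutator and bound it by
\[
    \norm{\phi(x_i) \phi(x_j) - \phi(x_j) \phi(x_i)} + 2 \norm{\psi(x_i) - \phi(x_i)} + 2 \norm{Y^{(i-1)} - \phi(x_j)}.
\]
The first summand is at most $\eps$ because $\phi$ is an $\eps$-representation, the second is $O(\eps)$ by the outer inductive hypothesis, and the third is $O(\eps)$ by telescoping the estimates from the inner loop. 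Summing the increments $\norm{Y^{(i)} - Y^{(i-1)}}$ then yields $\norm{\psi(x_j) - \phi(x_j)} \leq c_j \eps$ for a suitable constant $c_j$ depending only on $k$, completing the induction.

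The only real obstacle is the careful bookkeeping of this error accumulation through roughly $k^2$ nested adjustments; there is no conceptual difficulty beyond the two stability lemmas, and the final constant $C$ is an explicit (if ugly) function of $k$.
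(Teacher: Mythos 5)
Your proposal is correct and uses the same key tools (Lemmas \ref{L:stability1} and \ref{L:stability2}) in essentially the same way as the paper, differing only in iteration order: you fix each generator $\psi(x_j)$ completely (squaring, then commutation with each earlier generator via an inner loop) before moving on, whereas the paper first fixes all squares at once and then sweeps level-by-level, at stage $l$ simultaneously repairing the commutators $[\psi(x_l),\psi(x_j)]$ for every $j>l$. Both orderings yield a constant $C$ exponential in $k$ and the error bookkeeping is comparable, so this is a reorganization of the same argument rather than a genuinely different route.
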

\begin{proof}
    Suppose $\psi$ is an $\eps$-representation of $\Z_2^k$ such that the following
    properties hold for some $1 \leq l \leq k-1$: 
    \begin{enumerate}[(a)]
        \item $\psi(x_i)^2 = \Id$ for all $1 \leq i \leq k$, and
        \item $\psi(x_i)$ commutes with $\psi(x_j)$ for all $1 \leq i \leq l-1$ and $1 \leq j \leq k$.
    \end{enumerate}
    In particular, property (b) requires that $\psi(x_1),\ldots,\psi(x_l)$
    pairwise commute. Then by Lemma \ref{L:stability2}, for each $l < j \leq k$
    there is a unitary matrix $X_j$ such that $X_j^2 = \Id$, $X_j$ commutes
    with $\psi(x_i)$ for all $1 \leq i \leq l$, and 
    \begin{equation*}
        \norm{X_j - \psi(x_j)} \leq C_0 \norm{\psi(x_l)\psi(x_j) - \psi(x_j)\psi(x_l)} \leq C_0 \eps,
    \end{equation*}
    where $C_0 = 1 + \frac{1}{2\sqrt{2}}$. Define an approximate representation
    $\psi'$ of $G$ by $\psi'(x_i) = \psi(x_i)$ if $i \leq l$ and $\psi'(x_i) =
    X_i$ if $i > l$. Then $\psi'(x_i)^2 = \Id$ for all $1  \leq i \leq k$, and
    $\psi'(x_i)$ commutes with $\psi'(x_j)$ for all $1 \leq i \leq l$ and $1
    \leq j \leq k$. In other words, $\psi'$ satisfies properties (a) and (b)
    with $l$ replaced by $l+1$. Finally, $\norm{\psi'(x_i) - \psi(x_i)} \leq
    C_0 \eps$ for all $1 \leq i \leq k$, so $\psi'$ is a
    $(4C_0+1)\eps$-representation by Lemma \ref{L:smallchanges}. 

    Now suppose that $\phi$ is any $\eps$-representation of $\Z_2^k$. By Lemma
    \ref{L:stability1}, there is an approximate representation $\psi_1$ of
    $\Z_2^k$ with $\psi_1(x_i)^2 = \Id$ and $\norm{\psi_1(x_i) - \phi(x_i)}
    \leq C_1 \eps$ for all $1 \leq i \leq k$, where $C_1 = (1 +
    \frac{1}{\sqrt{2}})$. By Lemma \ref{L:smallchanges}, $\psi_1$ is a
    $(4C_1+1)\eps$-representation. Clearly, $\psi_1$ satisfies conditions (a)
    and (b) with $l=1$.  Using the argument in the previous paragraph, we can
    then iteratively define approximate representations
    $\psi_2,\ldots,\psi_{k-1}$, where $\psi_j$ satisfies conditions (a) and (b)
    with $l=j$ for all $1 \leq j \leq k-1$. Let $\eps_l = (4C_0+1)^{l-1} (4C_1+1)
    \eps$, so $\psi_1$ is an $\eps_1$-representation. It is not hard to check
    that $\psi_l$ is an $\eps_l$-representation, and furthermore that 
    \begin{equation*}
        \norm{\psi_l(x_i) - \psi_1(x_i)} \leq \frac{1}{4} \left( (4C_0+1)^{l-1} - 1\right) \eps_1
            = \frac{1}{4} \left( (4C_0+1)^{l-1} - 1 \right) (4C_1+1) \eps
    \end{equation*}
    for all $1 \leq i \leq k$. Since $\psi_{k-1}$ is an exact representation, 
    we can take 
    \begin{equation*} 
        C = \frac{1}{4} \left( (4C_0+1)^{k-2} - 1 \right) (4C_1+1) + C_1.
    \end{equation*}
\end{proof}
        
\begin{lemma}\label{L:stability4}
    Suppose $G = \langle S_0 : R_0 \rangle_{\Z_2}$, where $R_0$ includes
    the relations $s^2 = e$ for all $s \in S_0$. If $J_G$ is non-trivial 
    in finite-dimensional approximate representations of $G$, then for every
    $\eps > 0$ there is an $\eps$-representation $\phi$ of $G$ such that
    $\phi(J)=-\Id$, and $\phi(s)^2 = \Id$ for all $s \in S_0$.
\end{lemma}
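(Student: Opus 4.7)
The plan is, for each $\eps > 0$, to produce the desired $\eps$-representation $\phi$ by starting with some $\eps'$-representation $\psi$ (for much smaller $\eps' > 0$) in which $\psi(J)$ is far from the identity, and then cutting down to the $-1$ eigenspace of a self-adjoint unitary close to $\psi(J)$.

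Since $J$ is non-trivial in approximate representations, there exists $\delta > 0$ such that for every $\eps' > 0$ there is an $\eps'$-representation $\psi$ of $G$ with $\norm{\psi(J) - \Id} > \delta$. First I would apply Lemma \ref{L:stability1} (after diagonalizing each unitary, which is valid since the normalized Hilbert-Schmidt norm is unitarily invariant) to each of $\psi(s)$ for $s \in S_0 \cup \{J\}$, to obtain an approximate representation $\psi_1$ with $\psi_1(s)^2 = \Id$ exactly; by Lemma \ref{L:smallchanges}, this loses only a constant multiplicative factor in approximation quality. Then $\psi_1(J)$ is a self-adjoint unitary, so $H$ decomposes as $P_+ H \oplus P_- H$ along its $\pm 1$ eigenspaces.

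The key dimensional estimate $\norm{\psi_1(J) - \Id}^2 = 4 \dim P_- H / \dim H$, combined with $\norm{\psi_1(J) - \Id} > \delta/2$ (which holds once $\eps'$ is small enough), gives the crucial lower bound $\dim P_- H / \dim H \geq \delta^2/16$. Because $[J,s]$ is approximately satisfied by $\psi_1$, a direct calculation shows that the off-diagonal blocks of each $\psi_1(s)$ with respect to the $P_\pm$ decomposition have norm $O(\eps')$. Next I would discard these off-diagonal parts to obtain block-diagonal self-adjoint operators $\psi_2(s)$, and then apply Lemma \ref{L:stability1} once more to each block separately to promote them to self-adjoint unitaries $\psi_3(s)$, still block-diagonal and therefore commuting exactly with $\psi_1(J)$. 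By Lemma \ref{L:smallchanges}, $\psi_3$ remains a $C \eps'$-representation for some constant $C$ depending on $G$.

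Finally, I would define $\phi$ on $P_- H$ by $\phi(J) = -\Id$ and $\phi(s) = \psi_3(s)|_{P_-H}$; the relations $\phi(s)^2 = \Id$, $\phi(J)^2 = \Id$, and the commutativity of $\phi(J)$ with each $\phi(s)$ then hold exactly. For any other relation $r \in R_0$, the restriction inequality $\norm{\phi(r) - \Id}^2 \leq (\dim H / \dim P_- H) \norm{\psi_3(r) - \Id}^2$, together with the lower bound $\dim P_- H / \dim H \geq \delta^2/16$, yields $\norm{\phi(r) - \Id} \leq (4C/\delta) \eps'$, so picking $\eps'$ proportional to $\delta \eps$ finishes the argument. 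The main obstacle is precisely this restriction step: passing from $\psi_3$ on $H$ to $\phi$ on $P_-H$ can amplify the normalized Hilbert-Schmidt norm, and only the lower bound on $\dim P_- H / \dim H$ keeps the loss bounded --- this is exactly where the hypothesis that $J$ is non-trivial in approximate representations, quantified by $\delta$, enters essentially.
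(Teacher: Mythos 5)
Your proposal is correct and follows essentially the same argument as the paper: round the generators to exact involutions, arrange that each generator commutes exactly with the image of $J$, restrict to the $-1$-eigenspace of $\psi(J)$, and control the loss in the normalized Hilbert-Schmidt norm via the lower bound $\dim P_-H/\dim H \geq \delta^2/16$. The only cosmetic difference is that you carry out the commutation step by hand (zeroing off-diagonal blocks with respect to the eigenspaces of $\psi_1(J)$ and re-rounding), whereas the paper invokes Lemma~\ref{L:stability2}, whose construction $Z_0 = \tfrac{1}{2}(Y + JYJ)$ is precisely the block-diagonal projection you use.
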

\begin{proof}
    Suppose $A$ is an $m \times n$ matrix, and let $S = S_0 \cup \{J\}$.
    If $J$ is non-trivial in approximate representations, then there is a
    $\delta > 0$ such that for all $\eps > 0$, there is an
    $\eps$-representation $\phi$ with $\norm{\phi(J)-\Id} > \delta$. 

    By Lemmas \ref{L:smallchanges}, \ref{L:stability1}, and \ref{L:stability2},
    there are constants $C,C' >0$ such that if $\phi$ is an $\eps$-representation,
    then there is a $C'\eps$-representation $\psi$ such that
    \begin{enumerate}[(1)]
        \item $\psi(x)^2 = \Id$ for all $x \in S$, 
        \item $\psi(s)$ and $\psi(J)$ commute for all $s \in S_0$, and
        \item $\norm{\psi(J) - \phi(J)} \leq C \eps$.
    \end{enumerate}
    (We can take $C = (1+\frac{1}{\sqrt{2}})$, while $C'$ will depend on the 
    length of the longest defining relation of $G$.)
    If $\norm{\phi(J) - \Id} > \delta$, and $\eps < \delta / (2C)$, then 
    \begin{equation*}
        \delta < \norm{\phi(J) - \Id} \leq \norm{\phi(J) - \psi(J)} + \norm{\psi(J)-\Id} \leq \frac{\delta}{2}
            + \norm{\psi(J)-\Id},
    \end{equation*}
    so $\norm{\psi(J)-\Id} \geq \frac{\delta}{2}$. Thus we conclude that
    for all $\eps > 0$, there is an $\eps$-representation $\psi$ satisfying
    conditions (1) and (2), and with $\norm{\psi(J) - \Id} > \frac{\delta}{2}$. 

    Suppose $\psi$ is an $\eps$-representation satisfying conditions (1) and (2),
    and with $\norm{\psi(J) - \Id} > \frac{\delta}{2}$. Choose a basis
    with $\psi(J) = \Id_{d_0} \oplus (-\Id_{d_1})$. Since $\psi(s)$ commutes
    with $\psi(J)$ for all $s \in S_0$, we must have $\psi = \psi_0
    \oplus \psi_1$, where $\psi_a$ is an approximate representation of
    dimension $d_a$, and $\psi_a(J) = (-\Id)^a$, $a = 0,1$. Since $\psi(s)^2=\Id$,
    we also have $\psi_a(s)^2=\Id$ for all $s \in S_0$, $a=0,1$. To finish
    the proof, we just need to show that $\psi_1$ is a $C''\eps$-representation for
    some constant $C''$ independent of $\psi$.  If $w \in \mcF(S)$, then
    \begin{equation*}
        \norm{\psi(w) - \Id}^2 = \frac{d_0}{d_0+d_1} \norm{\psi_0(w)-\Id}^2
            + \frac{d_1}{d_0+d_1} \norm{\psi_1(w)-\Id}^2.
    \end{equation*}
    If $w = J$, then $\norm{\psi_0(w) - \Id}=0$ and $\norm{\psi_1(w)-\Id} =
    \norm{-2\Id} = 4$, so we conclude that
    \begin{equation*}
        \frac{\delta^2}{4} < \norm{\psi(J) - \Id}^2 = \frac{4 d_1}{d_0+d_1},
    \end{equation*}
    so $d_1/(d_0+d_1) > \delta^2 / 16$. On the other hand, if $w = r$ is 
    one of the defining relations of $G$, then
    \begin{equation*}
        \eps^2 \geq \norm{\psi(r) - \Id}^2 \geq \frac{d_1}{d_0+d_1} \norm{\psi_1(r)- \Id}^2 
            > \frac{\delta^2}{16} \norm{\psi_1(r) - \Id}^2.
    \end{equation*}
    Thus $\psi_1$ is a $4\eps/\delta$-representation with $\psi_1(J)=-\Id$ and
    $\psi_1(s)^2 = \Id$ for all $s \in S_0$. Since $\delta$ is a
    constant, the lemma follows. 
\end{proof}

\begin{proof}[Proof of Proposition \ref{P:reptostrat}]
    Suppose $J$ is non-trivial in finite-dimensional approximate
    representations of $\Gamma$. Given $\eps > 0$, let $\phi$ be an
    $\eps$-representation of $\Gamma$ with $\phi(J) = -\Id$ and $\phi(x_j)^2 =
    \Id$ for all $1 \leq j \leq n$, as in Lemma \ref{L:stability4}. Suppose
    $\phi$ has dimension $d$, and let $\ket{v}$ be the maximally entangled
    state on $\C^d \otimes \C^d$. For each $1 \leq j \leq n$, set $X_j =
    \phi(x_j)$. For each $1 \leq i \leq m$, let $j_i$ be the maximal element
    of $V_i$, and set $W_i := V_i \setminus \{j_i\}$. The restriction of $\phi$
    to the subgroup $\langle x_j : j \in W_i \rangle$ is an $\eps$-representation
    of $\Z_2^{W_i}$, and by Lemma \ref{L:stability3}, there is a representation
    $\psi_i$ of $\Z_2^{W_i}$ with $\norm{\psi_i(x_j) - \phi(x_j)} \leq O(\eps)$.\footnote{For this proof, we use the notation $O(\eps)$ to hide constants
    which are independent of $\eps$, $\phi$, and so on. The constants can still
    depend on the linear system $Ax=b$, however.}
    Set $Y_{ij} := \psi_i(x_j)^T$ (the transpose of $\psi_i(x_j)$ in a Schmidt
    basis for $\ket{v}$) for all $j \in W_i$, and set $Y_{i{j_i}} := (-1)^{b_i}
    \prod_{j \in W_i} Y_{ij}$. 

    Suppose $j \in W_i$ for some $1 \leq i \leq m$. Since $Y_{ij}$ and $X_j$
    are self-adjoint, we have that
    \begin{equation*}
        2 - \frac{2}{d} \tr(Y_{ij}^T X_j) = \norm{Y_{ij}^T - X_j}^2
            = \norm{\psi(x_j) - \phi(x_j)}^2 \leq O(\eps^2), 
    \end{equation*}
    so $\frac{1}{d} \tr(Y_{ij}^T X_j) \geq 1 - O(\eps^2)$. For the remaining
    variable in $V_i$, we have that
    \begin{align*}
        \norm{Y_{ij_i}^T - X_{j_i}} & = \norm{(-1)^{b_i} \prod_{j \in W_i} \psi_i(x_j) - \phi(x_{j_i})} \\
            & \leq \norm{(-1)^{b_i} \prod_{j \in W_i} \phi(x_{j}) - \phi(x_{j_i})} + |W_i|\eps \\
            & = \norm{(-1)^{b_i} \prod_{j \in V_i} \phi(x_j) - \Id} + |W_i|\eps \leq O(\eps),
    \end{align*}
    where the last equality uses the fact that $\phi(x_{j_i})^2 = \Id$.
    Because the $Y_{ij}$'s commute for all $j \in W_i$, $Y_{ij_i}$ is also
    self-adjoint, so once again we conclude that
    \begin{equation*}
        2 - \frac{2}{d} \tr(Y_{ij_i}^T X_{j_i}) = \norm{Y_{i{j_i}}^T - X_{j_i}}^2
            \leq O(\eps^2)
    \end{equation*}
    or in other words that $\frac{1}{d} \tr(Y_{ij_i}^T X_j) \geq 1 - O(\eps^2)$.  

    Now clearly $\{Y_{ij}\}$, $\{X_j\}$, $\ket{v}$ is a strategy for the 
    linear system game associated to $Ax=b$. If $A$ and $B$ are any two $d
    \times d$ matrices, it follows from the definition of maximally entangled
    states that
    \begin{equation*}
        \bra{v} A \otimes B \ket{v} = \frac{1}{d} \tr(A^T B).
    \end{equation*}
    We conclude that $\bra{v} Y_{ij} \otimes X_j \ket{v} = \frac{1}{d} \tr(Y_{ij}^T X_j)
    \geq 1 - O(\eps^2)$ for all $j \in V_i$, $1 \leq i \leq m$. The proposition follows
    from Lemma \ref{L:perfect}.
\end{proof}

\section{Linear-plus-conjugacy groups}

The goal of the next two sections is to show that there is a solution group
$\Gamma$ such that $J_{\Gamma}$ is trivial in finite-dimensional
representations, but non-trivial in approximate representations. In this
section, we start by showing that it suffices to construct more general types
of group with these properties. 

Given an $m \times n$ linear system $Ax=b$, we once again let $V_i = V_i(A) : =
\{1  \leq j \leq n : A_{ij} \neq 0\}$. 
\begin{defn}\label{D:linearplusconjugacy}
    Suppose $Ax =b$ is an $m \times n$ linear system over $\Z_2$, and $\mcC
    \subseteq [n] \times [n] \times [n]$, where $[n] = \{1,\ldots,n\}$. Let
    \begin{equation*}
        \Gamma(A,b,\mcC) : = \Big\langle \Gamma(A,b) :\; x_i x_j x_i = x_k \text{ for all } (i,j,k) \in \mcC
                                                \Big\rangle_{\Z_2}.
    \end{equation*}
    Lacking a better term, we say that a group over $\Z_2$ is a
    \emph{linear-plus-conjugacy group} if it has a presentation over $\Z_2$ of
    this form. 
\end{defn}
The conjugacy part of the name comes from the fact that since $x_i$ is an
involution, the relation $x_i x_j x_i = x_k$ is equivalent to the relation $x_i
x_j x_i^{-1} = x_k$, so $\Gamma(A,b,\mcC)$ can be thought of as a solution
group with additional conjugacy relations. In the context of
linear-plus-conjugacy and related groups, we use the term \emph{conjugacy
relations} as a convenient shorthand for relations of the form $xyx = z$. We
also use the term \emph{linear relation $x_1 \cdots x_n = e$} to refer to
the set of relations 
\begin{equation*}
    \{x_1 \cdots x_n = e\} \cup \{[x_i,x_j] = e : 1 \leq i \neq j \leq n \}.
\end{equation*}
Finally, observe that there are two ways to make generators $x_i$ and $x_j$
commute in a linear-plus-conjugacy group: we can add a conjugacy relation $x_i
x_j x_i = x_j$, or add an additional generator $x_{n+1}$ and a linear relation
$x_i x_j x_{n+1} = e$. We pick and choose from these two methods based on what
is convenient.

The main point of this section is to prove:
\begin{prop}\label{P:linearplusconjugacy}
    Let $G$ be a linear-plus-conjugacy group. Then there is an $fa^*$-embedding
    $G \arr \Gamma$ over $\Z_2$, where $\Gamma$ is a solution group.  
\end{prop}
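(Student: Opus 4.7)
The plan is to induct on the number of conjugacy relations in the given presentation of $G$. It suffices to show that if $G = \langle G_0 : xyx = z \rangle_{\Z_2}$, where $G_0$ is a linear-plus-conjugacy group and $x, y, z$ are generators of $G_0$, then there is an $fa^*$-embedding $G \arr G'$ over $\Z_2$, where $G'$ is a linear-plus-conjugacy group obtained from $G_0$ by adjoining new generators and new linear relations only (and no new conjugacy relations). Iterating then drives the number of conjugacy relations to zero, yielding the desired solution group $\Gamma$ as the terminal $G'$ and the $fa^*$-embedding $G \arr \Gamma$ as the composition of the intermediate embeddings, using that compositions of $fa^*$-embeddings are again $fa^*$-embeddings.

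The heart of the argument is the construction of a ``gadget'' replacing the single conjugacy relation $xyx = z$ by an enlarged collection of auxiliary involutions $S_1$ together with linear relations on $S_0 \cup S_1$ over $\Z_2$. The naive attempt---adjoining $y' := xyx$ as a new generator with the linear relation $y' = z$---fails because $xyx$ cannot be expressed as a product of pairwise commuting involutions. Instead, in the spirit of the magic-square-style constructions used elsewhere in the linear-system-game literature, I will introduce a larger set of auxiliary involutions with carefully chosen linear (pairwise-commuting) relations that \emph{collectively} force some designated auxiliary generator to coincide with $xyx$ in every representation; equating that generator with $z$ then yields the desired conjugacy relation.

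The embedding property will be verified via Lemma \ref{L:trick}. Given a (respectively finite-dimensional, or $\eps$-approximate) representation $\phi$ of $G_0$ on $H$, I will build an extension $\gamma$ to $G'$ by assigning each auxiliary generator in $S_1$ to an operator on $H \otimes \C^N$ constructed from $\phi$ and fixed Pauli-type matrices on $\C^N$. Because $\gamma$ takes the form $\phi \otimes \rho$ on the generators of $G_0$, the original representation $\phi$ appears as a direct summand of $\gamma \circ \Psi$ for a suitable lift $\Psi : \mcF(S_0) \arr \mcF(S_0 \cup S_1)$, and by Lemma \ref{L:sumandproduct} $\gamma$ will be a $C\eps$-representation of $G'$ for a constant $C$ depending only on the gadget. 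Both parts (a) and (b) of Lemma \ref{L:trick} then apply simultaneously, giving the $fa^*$-embedding.

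The main obstacle will be the design of the gadget itself: the linear relations on $S_1$ must be \emph{strong} enough that every representation of $G'$ forces $xyx = z$ on the image of $G_0$, yet \emph{weak} enough that any representation of $G_0$ extends to $G'$ with bounded-dimension overhead. Securing both conditions at the $\eps$-approximate level---so that the amplification constant $C$ in Lemma \ref{L:trick}(b) is independent of $\phi$---is the technical crux and will likely require a tailored magic-square-style construction exploiting the involutivity of $x$, $y$, $z$, and the central element $J$.
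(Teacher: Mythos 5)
Your high-level plan matches the paper's: replace each conjugacy relation $x_i x_j x_i = x_k$ by a magic-square-style gadget of new involutions and linear relations, and verify the $fa^*$-embedding via Lemma \ref{L:trick} by exhibiting the given representation as a direct summand of a bounded-dimension extension. Doing this one conjugacy relation at a time, rather than all at once as the paper does, is harmless (compositions of $fa^*$-embeddings are $fa^*$-embeddings, and adding new linear relations never destroys the property you need of the remaining conjugacy relations).

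However, there is a genuine gap: you cannot run the gadget construction directly on an arbitrary linear-plus-conjugacy group. The paper first proves Lemma \ref{L:nice}, $fa^*$-embedding $G$ into a \emph{nice} linear-plus-conjugacy group, namely one where for every conjugacy relation $x_i x_j x_i = x_k$ the generators $x_j$ and $x_k$ already commute (they appear together in some linear relation). This is not cosmetic. In the extension step, the new gadget generators must be assigned operators built out of words in $\phi(x_i)$, $\phi(x_j)$, $\phi(x_k)$, and in the paper's construction the block $\phi(x_j x_k)$ must square to (approximately) $\Id$. Since $\phi(x_j)$ and $\phi(x_k)$ are involutions, this amounts to requiring $\phi(x_j)$ and $\phi(x_k)$ to (approximately) commute, equivalently $(\phi(x_i)\phi(x_j))^4 \approx \Id$. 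Neither the relation $x_i x_j x_i = x_k$ nor the group structure of $G_0$ forces this; it is exactly what niceness supplies. Without the preprocessing step your extension $\gamma$ will fail to be a $C\eps$-representation, and Lemma \ref{L:trick}(b) does not apply. The fix is to first prove the analogue of Lemma \ref{L:nice}: split each $x_j$ as $x_j = y_j z_j = f w_j$ on a doubled space, so the conjugacy relations migrate to a setting where the relevant pair of generators commutes by construction.

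A smaller point: you describe the extension as assigning each auxiliary generator an operator of the form ``$\phi$-value tensored with a fixed Pauli-type matrix.'' The actual assignments are slightly more general block $2\times 2$ matrices whose two blocks are \emph{different} words in $\phi$ (e.g.\ $\gamma(y_{I4})$ has off-diagonal blocks $\phi(x_j x_i)$ and $\phi(x_i x_j)$), not a literal tensor product with a fixed $2\times 2$ matrix. This is easy to accommodate once you write out the gadget, but the tensor-product form as stated is too rigid to give you a valid assignment.
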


We prove Proposition \ref{P:linearplusconjugacy} by first showing that
linear-plus-conjugacy groups can be embedded in linear-plus-conjugacy
groups of a certain form. 
\begin{defn}\label{D:nice}
    A linear-plus-conjugacy group is \emph{nice} if it has a presentation
    of the form $\Gamma(A,b,\mcC)$, where $A$ is an $m \times n$ matrix over
    $\Z_2$, $b \in Z_2^m$, and $\mcC \subseteq [n] \times [n]\times [n]$
    is such that if $(i,j,k) \in \mcC$, then $j,k \in V_l$ for some
    $1 \leq l \leq m$.
\end{defn}
This means that if $x_i x_j x_i = x_k$ is a defining relation of a nice
linear-plus-conjugacy group, then $x_j x_k = x_k x_j$ will also be a defining
relation.
\begin{lemma}\label{L:nice}
    Let $G$ be a linear-plus-conjugacy group. Then there is an $fa^*$-embedding
    $G \arr K$ over $\Z_2$, where $K$ is a nice linear-plus-conjugacy group.
\end{lemma}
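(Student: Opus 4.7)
The plan is to prove Lemma \ref{L:nice} by constructing a nice linear-plus-conjugacy group $K$ together with an $fa^*$-embedding $G \arr K$ through a presentation-level modification of $G = \Gamma(A,b,\mcC)$, handled one non-nice conjugacy at a time. A triple $(i,j,k) \in \mcC$ is non-nice precisely when $x_j$ and $x_k$ do not co-occur in any linear equation of $(A,b)$. For each such triple, I would introduce fresh auxiliary generators together with new linear equations and nice conjugacy relations whose combined effect is to derive the original conjugacy $x_i x_j x_i = x_k$ while making every conjugacy in the new presentation nice. Iterating across all non-nice conjugacies (and using that the composition of $fa^*$-embeddings is an $fa^*$-embedding) produces $K$.

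The central difficulty lies in the local modification, since naive attempts overshoot and destroy injectivity. For example, the most natural first attempt---adding a single fresh generator $x_l$, a new linear equation $x_j x_l x_{l'} = J^\alpha$ (placing $x_j$ and $x_l$ in a common $V$-set), and the nice conjugacy $(i,j,l)$ in place of $(i,j,k)$, under the map $x_k \mapsto x_l$---forces $x_j$ to commute with $x_l = x_i x_j x_i$ and thereby injects the spurious relation $(x_i x_j)^4 = e$ in the new group. So the actual construction must instead introduce a more elaborate auxiliary family---for instance, a pair of fresh generators whose product (rather than either one individually) encodes the image of $x_k$, threaded through a chain of nice conjugacies and linear equations---arranged so that the original relation is recoverable in $K$ without any new commutation being forced among $G$'s original generators. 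The central role of the $\Z_2$-structure and of the element $J$ in linear equations should give enough flexibility to carry this out.

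To verify the $fa^*$-embedding, I would apply Lemma \ref{L:trick}(b): given a $d$-dimensional $\eps$-representation $\phi$ of $G$, construct an $Nd$-dimensional $O(\eps)$-representation $\gamma$ of $K$ having $\phi$ as a direct summand of $\gamma \circ \Psi$. The action of $\gamma$ on the fresh generators is to be built from $\phi$'s operators, tensored or direct-summed with an ancilla Hilbert space on which the auxiliaries act in a structured way, so that the new nice conjugacy and linear relations hold up to $O(\eps)$. Running the same construction with exact representations simultaneously yields injectivity and the $fin$-embedding property, giving the $fa^*$-embedding. The hardest part, and main obstacle, is designing the auxiliary scheme to satisfy two conflicting demands at once---faithfully encoding the non-nice conjugacy relation without collapsing $G$'s non-abelian structure, while still allowing every approximate representation of $G$ to extend to one of $K$ with only a bounded error blow-up---a balance which I expect to require a carefully calibrated small family of auxiliary generators per non-nice conjugacy rather than any single-generator fix.
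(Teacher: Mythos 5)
Your sketch correctly identifies both the strategy (split $x_k$ into a product of fresh involutions rather than replace it by a single new generator) and the verification method (Lemma \ref{L:trick}(b) via an explicit block construction, mirrored for exact representations to get $fin$-embedding and injectivity). But it stops exactly where the proof lives: you never write down the auxiliary generators, the relations, or the approximate representation $\gamma$, and you acknowledge this yourself as ``the hardest part.'' That is a genuine gap --- the sketch names the target but does not hit it.

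For the record, the paper fills the gap as follows. For every index $1 \leq j \leq n$ add three involutions $w_j, y_j, z_j$, plus a single global involution $f$, with relations $x_j = y_j z_j = f w_j$ and $f y_j f = z_j$, plus $y_j z_k = z_k y_j$ and $w_i y_j w_i = z_k$ for each $(i,j,k) \in \mcC$; the original conjugacy $x_i x_j x_i = x_k$ is then derivable. Niceness holds because $x_j = y_j z_j$ yields the linear relation $x_j y_j z_j = e$ (putting $y_j, z_j$ in a common $V$-set), and, after introducing ancilla $g_{jk}$, the commutation $y_j z_k = z_k y_j$ becomes the linear relation $g_{jk} y_j z_k = e$ (putting $y_j, z_k$ in a common $V$-set); thus both conjugacy types $f y_j f = z_j$ and $w_i y_j w_i = z_k$ satisfy Definition \ref{D:nice}. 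The $fa^*$-embedding uses a $2d$-dimensional $\gamma$ with $\gamma(y_j) = \phi(x_j) \oplus \Id$, $\gamma(z_j) = \Id \oplus \phi(x_j)$, $\gamma(w_j)$ the antidiagonal $2 \times 2$ block with entries $\phi(x_j)$, and $\gamma(f)$ the swap, giving $\gamma \circ \Psi = \phi \oplus \phi$. Two further points of divergence from your plan: the paper treats all conjugacies at once rather than iterating (a single global $f$ and a single pair $(y_j, z_j)$ per index serve every triple through $j$ simultaneously), and the element $J$ plays no role at all in the new relations --- the flexibility you hoped to extract from the $\Z_2$-structure is not in fact needed.
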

\begin{proof}
    Suppose $G = \Gamma(A,b,\mcC)$, where $A$ is an $m\times n$ matrix. Let
    \begin{align*}
        K := \Big\langle \Gamma(A,b)& , w_j, y_j, z_j \text{ for } 1 \leq j \leq n \text{ and } f \; : \\
                & f^2 = e,\; y_j^2 = z_j^2 = w_j^2 = e \text{ for all } 1 \leq j \leq n, \\
                & x_j = y_j z_j = f w_j \text{ and } f y_j f = z_j \text{ for all } 1 \leq j \leq n, \\
                & y_j z_k = z_k y_j \text{ for all } (i,j,k) \in \mcC, \text{ and}\\
                & w_i y_j w_i = z_k \text{ for all } (i,j,k) \in \mcC \Big\rangle_{\Z_2}.
    \end{align*}
    Since the generators are involutions, note that the relations imply that $f
    w_j = w_j f$, $y_j z_j = z_j y_j$, and $f z_j f = y_j$ for all $1 \leq j
    \leq n$. If $(i,j,k) \in \mcC$, then 
    \begin{equation*}
        w_i z_j w_i = w_i f y_j f w_i = f w_i y_j w_i f = f z_k f = y_k, \text{ so}
    \end{equation*}
    \begin{equation*}
        x_i x_j x_i = f w_i y_j z_j f w_i = (f w_i y_j w_i f)(f w_i z_j w_i f)
        = (f z_k f)(f y_k f) = y_k z_k = x_k
    \end{equation*}
    in $K$. Thus there is a homomorphism $\psi : G \arr K$ sending $x_i \mapsto x_i$.

    Suppose $\phi$ is an $\eps$-representation of $G$, where $\eps > 0$.
    Define an approximate representation $\gamma$ of $K$ by 
    \begin{equation*}
        \gamma(x_i) = \begin{pmatrix} \phi(x_i) & 0 \\ 0 & \phi(x_i) \end{pmatrix}, \quad
        \gamma(J) = \begin{pmatrix} \phi(J) & 0 \\ 0 & \phi(J) \end{pmatrix},
    \end{equation*}
    \begin{equation*}
        \gamma(y_i) = \begin{pmatrix} \phi(x_i) & 0 \\ 0 & \Id \end{pmatrix}, \quad
        \gamma(z_i) = \begin{pmatrix} \Id & 0 \\ 0 & \phi(x_i) \end{pmatrix}, 
    \end{equation*}
    \begin{equation*}
        \gamma(w_i) = \begin{pmatrix} 0 & \phi(x_i) \\ \phi(x_i) & 0 \end{pmatrix}, \text{ and }
        \gamma(f) = \begin{pmatrix} 0 & \Id \\ \Id & 0 \end{pmatrix}. 
    \end{equation*}
    It is straightforward to check that $\gamma$ is an $\eps$-representation of
    $K$. If $\Psi$ is the lift of $\psi$ sending $x_i \mapsto x_i$, then
    $\gamma \circ \Psi = \phi \oplus \phi$. When $\phi$ is an exact representation
    of dimension $d$ (possibly infinite), the same construction gives an exact
    representation $\gamma$ of dimension $2d$. By Lemma \ref{L:trick}, $\psi$
    is an $fa^*$-embedding.

    Finally, we observe that $K$ is a nice linear-plus-conjugacy group.
    Indeed, since the relation $x_i = y_i z_i$ forces $y_i$ and $z_i$ to
    commute, this relation is equivalent to the relations
    \begin{equation*}
        x_i y_i z_i = e = [x_i,y_i] = [x_i,z_i] = [y_i,z_i],
    \end{equation*}
    which means that we can make $x_i = y_i z_i$, and similarly $x_i = f w_i$, 
    part of the ``linear'' relations. By adding ancilla variables $g_{jk}$, 
    the commuting relations $y_j z_k = z_k y_j$ can also be replaced with equivalent
    linear relations $g_{jk} y_j z_k = e$. The conjugacy relations
    $f y_j f = z_j$ and $w_i y_j w_i = z_k$ will then satisfy the requirements of
    Definition \ref{D:nice}.
\end{proof}

\begin{proof}[Proof of Proposition \ref{P:linearplusconjugacy}]
    By Lemma \ref{L:nice}, we can assume that $G$ is a nice
    linear-plus-conjugacy group. Let $G = \Gamma(A,b,\mcC)$ be a presentation
    satisfying the conditions of Definition \ref{D:nice}. Augment the linear system
    $Ax=b$ by adding additional variables $y_{Ij}$ for each $I \in \mcC$ and $1
    \leq j \leq 7$, and additional relations 
    \begin{align*}
        & x_i + y_{I1} + y_{I2} = 0, \quad x_j + y_{I2} + y_{I3} = 0, \quad y_{I3} + y_{I4} + y_{I5} = 0 \\
        & x_i + y_{I5} + y_{I6} = 0, \quad x_k + y_{I6} + y_{I7} = 0, \quad y_{I1} + y_{I4} + y_{I7} = 0
    \end{align*}
    for every $I = (i,j,k) \in \mcC$.  Let $\Gamma$ be solution group of this
    augmented linear system, so 
    \begin{equation*}
        \Gamma = \langle\; \Gamma(A,b),\; y_{Ij} \text{ for } I \in \mcC, 1 \leq j \leq 7 \; : \; R\; \rangle_{\Z_2},
    \end{equation*}
    where $R$ consists of the new relations (now written in multiplicative form)
    \begin{align}\label{E:additionalrelations}
        & x_i y_{I1} y_{I2} = x_j y_{I2} y_{I3} = y_{I3} y_{I4} y_{I5} = x_i y_{I5} y_{I6} = x_k y_{I6} y_{I7} = y_{I1} y_{I4} y_{I7} = e
    \end{align}
    for every $I = (i,j,k) \in \mcC$, as well as the corresponding commutation relations.
    In $\Gamma$, we have that
    \begin{align*}
        x_i x_j x_i & = \left(y_{I1} y_{I2}\right) \left(y_{I2} y_{I3}\right) \left(y_{I5} y_{I6}\right) =
            y_{I1} \left(y_{I3} y_{I5} \right) y_{I6} 
        = y_{I1} y_{I4} y_{I6} = y_{I7} y_{I6} = x_k
    \end{align*}
    for every $I = (i,j,k) \in \mcC$.  So once again we get a homomorphism $\psi: G
    \arr \Gamma$ sending $x_i \mapsto x_i$.

    Suppose $\phi$ is an $\eps$-representation of $G$. Define an approximate
    representation $\gamma$ of $\Gamma$ by 
    \begin{align*}
        & \gamma(x_i) = \begin{pmatrix} \phi(x_i) & 0 \\ 0 & \phi(x_i) \end{pmatrix}, 
        & \gamma\left(y_{I1}\right) = \begin{pmatrix} 0 & \phi(x_i) \\ \phi(x_i) & 0 \end{pmatrix}, \\
        & \gamma\left(y_{I2}\right) = \begin{pmatrix} 0 & \Id \\ \Id & 0 \end{pmatrix}, 
        & \gamma\left(y_{I3}\right) = \begin{pmatrix} 0 & \phi(x_j) \\ \phi(x_j) & 0 \end{pmatrix}, \\
        & \gamma\left(y_{I4}\right) = \begin{pmatrix} 0 & \phi(x_j x_i) \\ \phi(x_i x_j) & 0 \end{pmatrix}, 
        & \gamma\left(y_{I5}\right) = \begin{pmatrix} \phi(x_j x_i x_j) & 0 \\ 0 & \phi(x_i) \end{pmatrix}, \\
        & \gamma\left(y_{I6}\right) = \begin{pmatrix} \phi(x_j x_k) & 0 \\ 0 & \Id \end{pmatrix}, \text{ and} 
        & \gamma\left(y_{I7}\right) = \begin{pmatrix} \phi(x_j) & 0 \\ 0 & \phi(x_k) \end{pmatrix} 
    \end{align*}
    for all $I = (i,j,k) \in \mcC$. It is straightforward to show that
    $\gamma$ is a $C\eps$-representation of $\Gamma$, where $C$ is a positive
    constant $\leq 15$. For instance, consider the relation $y_{I5}^2 = e$. To
    show that $\gamma(y_{I5})^2 \approx \Id$, we need to show that $\phi(x_j
    x_i x_j)^2 \approx \Id$.  Write $X \approx_{\eps} Y$ to mean that
    $\norm{X-Y} \leq \eps$. Since $\phi(x_i)^2 \approx_{\eps} \Id$ and
    $\phi(x_j)^2 \approx \Id$, we have $\phi(x_i x_k x_i)^2 \approx_{3 \eps} \Id$.
    We can conclude from this that $\gamma(y_{I5})^2 \approx_{3\eps} \Id$
    (we can do slightly better by averaging over the blocks of
    $\gamma(y_{I5})$, but we ignore this to simplify the analysis). We can
    similarly show that $\gamma(y_{Ij})^2 \approx_{3\eps} \Id$ for all $1 \leq
    j \leq 7$, and that the linear relations in Equation
    \eqref{E:additionalrelations} hold to within $3 \eps$. 

    This leaves the commuting relations. Consider the relation $y_{I3} y_{I4}
    y_{I5} = e$. We want to show that $\gamma(y_{I3})$, $\gamma(y_{I4})$, and
    $\gamma(y_{I5})$ approximately commute. But since $\gamma(y_{I3})
    \gamma(y_{I4}) \gamma(y_{I5}) \approx_{3\eps} \Id$ and $\gamma(y_{Ij})^2
    \approx_{3\eps} \Id$, we conclude that 
    \begin{equation*}
        \gamma(y_{I4}) \gamma(y_{I5}) \approx_{3\eps} \gamma(y_{I3})^* \approx_{3\eps}
            \gamma(y_{I3}) \approx_{3\eps} \gamma(y_{I5})^* \gamma(y_{I4})^* 
            \approx_{6\eps} \gamma(y_{I5}) \gamma(y_{I4}),
    \end{equation*}
    or in other words, $\gamma(y_{I4}) \gamma(y_{I5}) \approx_{15\eps}
    \gamma(y_{I5}) \gamma(y_{I4})$. The other commuting relations follow
    similarly.
    
    Let $\Psi$ be the lift of $\psi$ sending $x_i \mapsto x_i$. Then $\gamma
    \circ \Psi = \phi \oplus \phi$. Once again, the same construction applies
    when $\psi$ is an exact representation, so $\psi$ is an $fa^*$-embedding
    by Lemma \ref{L:trick}.
\end{proof}
Note that if $j = k$ in a relation $x_i x_j x_i = x_k$, then the system in
Equation \eqref{E:additionalrelations} is precisely the Mermin-Peres magic
square \cite{Me90, Pe90}. The magic square has previously been used by Ji to
show that linear system games can require a (finite but) arbitrarily high
amount of entanglement to play perfectly \cite{Ji13}. 

The proof of Proposition \ref{P:linearplusconjugacy} has several interesting
features:
\begin{rmk}\label{R:embeddingsize}
    Let $G = \Gamma(A,b,\mcC)$ be an $m \times n$ linear-plus-conjugacy group,
    and let $\Gamma' = \Gamma'(A',b')$ be the solution group constructed in the
    proof of Proposition \ref{P:linearplusconjugacy}. Then, accounting for
    Lemma \ref{L:nice}, the system $A'x=b'$ has $11n + 8 c + 1$
    variables and $8n + m + 7c$ equations, where $c = |\mcC|$ is
    the number of conjugacy relations. A presentation for $\Gamma'$ can be
    constructed in polynomial time in $m$, $n$, and $c$. 

    The proofs of Lemma \ref{L:nice} and Proposition
    \ref{P:linearplusconjugacy} show that there is a constant $C > 0$, and a
    lift $\Psi$ of the homomorphism $G \arr \Gamma'$ to the defining free
    groups, such that for any $d$-dimensional $\eps$-representation $\phi$ of
    $G$, there is a $4d$-dimensional $C \eps$-representation $\psi$ of
    $\Gamma'$ with $\psi \circ \Psi = \phi^{\oplus 4}$. Taking into account the
    fact that we have to change the presentation of the group $K$ in the proof
    of Lemma \ref{L:nice}, we can take the constant $C \leq 75$. The lift
    $\Psi$ can be chosen to send the generators of $G$ to generators of
    $\Gamma'$ (although not every generator of $\Gamma'$ will lie in the image
    of $\Psi$). 
\end{rmk}

It is important for our argument that the $fa^*$-embedding in Proposition
\ref{P:linearplusconjugacy} is over $\Z_2$. However, we can go a little further
in what type of groups can be embedded if we drop this requirement.
\begin{defn}\label{D:homogeneous}
    Suppose $A$ is an $m \times n$ matrix over $\Z_2$, and $\mcC \subseteq
    [n] \times [n] \times [n]$. Let 
    \begin{align*}
        \HLPC(A,\mcC) : = \Big\langle x_1,\ldots,x_n :\; & x_n^2 = e \text{ for all } 1 \leq j \leq n, \\
                                               & \prod_{j=1}^n x_j^{A_{ij}} = e \text{ for all } 
                                                    1 \leq i \leq m, \\
                                               & x_j x_k = x_k x_j \text{ if } j,k \in V_i(A)
                                                    \text{ for some } 1 \leq i \leq m, \text{ and } \\ 
                                               & x_i x_j x_i = x_k \text{ for all } (i,j,k) \in \mcC
                                                \Big\rangle.
    \end{align*}
    We say that a group $G$ is a \emph{homogeneous-linear-plus-conjugacy group}
    if it has a presentation of this form. 
\end{defn}
Since $\HLPC(A,\mcC)$ is not presented over $\Z_2$, a
homogeneous-linear-plus-conjugacy group is \emph{not} a linear-plus-conjugacy
group. However, the two types of groups are closely related, as $\HLPC(A,\mcC)
\times \Z_2 = \Gamma(A,0,\mcC)$. 

\begin{defn}\label{D:extended}
    Suppose $A$ is an $m \times n$ matrix over $\Z_2$, $\mcC_0 \subseteq [n]
    \times [n] \times [n]$, $\mcC_1 \subseteq [\ell] \times [n] \times [n]$, and
    $L$ is an $\ell \times \ell$ lower-triangular matrix with non-negative integer
    entries.  Let
    \begin{align*}
        \EHLPC(A,\mcC_0,\mcC_1,L) : = \Big\langle \Gamma_0(A,\mcC_0),\; y_1,\ldots,y_\ell :\; & 
                                           y_i x_j y_i^{-1} = x_k \text{ for all } (i,j,k) \in \mcC_1, \text{ and}\\
                                           & y_i y_j y_i^{-1} = y_j^{L_{ij}} \text{ for all } i>j \text{ with } L_{ij} > 0
                                                \Big\rangle.
    \end{align*}
    We refer to the generators $x_i$ in this presentation as
    \emph{involutary generators}, and to the generators $y_j$ as
    \emph{non-involutary generators}.
    We say that a group $G$ is an \emph{extended
    homogeneous-linear-plus-conjugacy group} if it has a presentation of this form. 
\end{defn}

\begin{prop}\label{P:extended}
    Let $G = \EHLPC(A,\mcC_0,\mcC_1,L)$ as in Definition \ref{D:extended}, where
    $A$ is an $m \times n$ matrix. Then there is an $m \times n'$ matrix $A'$ and a
    set $\mcC' \subset [n'] \times [n'] \times [n']$, where $n \leq n'$,
    such that there is an $fa^*$-embedding $\psi : G \arr \HLPC(A',\mcC')$ with
    $\psi(x_i) = x_i$ for all $1 \leq i \leq n$. 
\end{prop}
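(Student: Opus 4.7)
The plan is to extend $\HLPC(A, \mcC_0)$ to a larger homogeneous-linear-plus-conjugacy group $\HLPC(A', \mcC')$ by adjoining, for each non-involutary generator $y_j$ of $G$, two fresh involutary generators $\sigma_j$ and $\tau_j$, with the intended identification $\psi(y_j) = \sigma_j \tau_j$. All generators $x_i$ and relations of $\HLPC(A, \mcC_0)$ are preserved, and a small number of auxiliary involutions are introduced to rewrite each additional defining relation of $G$ as a short chain of single-involution conjugacies in $\mcC'$.

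For the conjugations $y_i x_j y_i^{-1} = x_k$ (where $(i,j,k) \in \mcC_1$) I follow the standard two-step decomposition used in the proof of Proposition \ref{P:linearplusconjugacy}: introduce an involution $z_{ijk}$ with conjugacy relations $(\tau_i, x_j, z_{ijk})$ and $(\sigma_i, z_{ijk}, x_k)$, whose composition is exactly $\sigma_i \tau_i x_j \tau_i \sigma_i = x_k$. The more delicate Baumslag--Solitar-type relations $y_i y_j y_i^{-1} = y_j^{L}$ (with $L := L_{ij} > 0$) are handled by a dihedral trick: since $y_j = \sigma_j \tau_j$ with $\sigma_j, \tau_j$ involutions, a direct calculation gives $\sigma_j y_j \sigma_j = y_j^{-1}$, so the elements $\alpha_k := y_j^{k-1} \sigma_j = \sigma_j (\tau_j \sigma_j)^{k-1}$ are all involutions and satisfy the conjugation recursion $\alpha_{k+1} = \alpha_k \alpha_{k-1} \alpha_k$ (with $\alpha_1 = \sigma_j$ and $\alpha_2 = \sigma_j \tau_j \sigma_j$). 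Accordingly I introduce fresh involutions $\alpha_k^{(ij)}$ for $2 \le k \le L$ together with the conjugacy relations $(\sigma_j, \tau_j, \alpha_2^{(ij)})$ and $(\alpha_k^{(ij)}, \alpha_{k-1}^{(ij)}, \alpha_{k+1}^{(ij)})$ for $2 \le k < L$ (identifying $\alpha_1^{(ij)}$ with $\sigma_j$). I force $\sigma_i$ and $\tau_i$ each to commute with $\tau_j$ by the conjugacy relations $(\sigma_i, \tau_j, \tau_j)$ and $(\tau_i, \tau_j, \tau_j)$, and I implement $y_i \sigma_j y_i^{-1} = \alpha_L^{(ij)}$ via a new involution $w^{(ij)}$ with relations $(\tau_i, \sigma_j, w^{(ij)})$ and $(\sigma_i, w^{(ij)}, \alpha_L^{(ij)})$. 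A short calculation then gives $y_i y_j y_i^{-1} = \alpha_L^{(ij)} \tau_j = y_j^{L-1} \sigma_j \tau_j = y_j^{L}$ in $\HLPC(A', \mcC')$, so the map $\psi$ with $\psi(x_i) = x_i$ and $\psi(y_j) = \sigma_j \tau_j$ is a well-defined homomorphism from $G$.

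For the $fa^*$-embedding I would apply Lemma \ref{L:trick} with the natural lift $\Psi$ sending $x_i \mapsto x_i$ and $y_j \mapsto \sigma_j \tau_j$. Given a $d$-dimensional $\eps$-representation $\phi$ of $G$ on $H$, the goal is to construct an $Nd$-dimensional $C\eps$-representation $\gamma$ of $\HLPC(A', \mcC')$ with $\phi$ as a direct summand of $\gamma \circ \Psi$. Following the block-matrix pattern of Lemma \ref{L:nice} and Proposition \ref{P:linearplusconjugacy}, I would set $\gamma(x_i) = \phi(x_i) \oplus \phi(x_i)$ on $H \oplus H$, take $\gamma(\tau_j)$ to be the off-diagonal identity swap, and take $\gamma(\sigma_j)$ to be the block-antidiagonal operator with upper-right block $\phi(y_j)$ and lower-left block $\phi(y_j)^*$, so that $\gamma(\sigma_j \tau_j) = \phi(y_j) \oplus \phi(y_j)^*$ and the first block of $\gamma(\Psi(g))$ equals $\phi(g)$ for every word $g$; the auxiliary generators $z_{ijk}, \alpha_k^{(ij)}, w^{(ij)}$ are then defined so that the ``forced'' relations in $\mcC'$ hold exactly. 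The main obstacle I foresee is verifying the remaining relations, most notably $(\sigma_i, z_{ijk}, x_k)$, in the second block: conjugation there is by $\phi(y_i)^* = \phi(y_i^{-1})$ rather than $\phi(y_i)$, and $\phi(y_i^{-1} x_j y_i)$ is not generally equal to $\phi(x_k)$ in $G$. I expect this to be repaired by doubling the construction with the complex-conjugate representation $\overline{\phi}$ (itself a representation of $G$) on $H \oplus H \oplus \overline{H} \oplus \overline{H}$ and arranging the block structure so that the two ``problem'' blocks pair with blocks in which the inverse-direction conjugation is handled symmetrically. Granting this symmetrization, $\gamma$ is a $C\eps$-representation of $\HLPC(A', \mcC')$ with $\phi$ as a direct summand of $\gamma \circ \Psi$ in both the exact and approximate settings, and Lemma \ref{L:trick} delivers the desired $fa^*$-embedding.
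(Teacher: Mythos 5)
Your construction breaks down in a way that your proposed fix does not repair, and it misses the two key devices the paper actually uses.

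First, the paper proceeds \emph{one non-involutary generator at a time}: it introduces a single pair $z,w$ with $y_1 = zw$ and imposes $[z,y_i]=e$ for $i\ge 2$, obtaining a new extended homogeneous-linear-plus-conjugacy group $G'$ with one fewer non-involutary generator, and then iterates. This one-at-a-time induction is not a stylistic choice; it is what makes the approximate representation in Lemma~\ref{L:trick} possible. In the doubled representation $\gamma$, only $\gamma(y_1)$ is ``flipped'' to $\phi(y_1)\oplus\phi(y_1)^*$, while $\gamma(y_i)=\phi(y_i)\oplus\phi(y_i)$ for $i\ge 2$. The lower-triangular hypothesis on $L$ then guarantees that $y_1$ appears \emph{only on the inside} of relations $y_i y_1 y_i^{-1}=y_1^{L_{i1}}$, and taking adjoints turns the block-$1$ approximate relation $\phi(y_i)\phi(y_1)\phi(y_i)^*\approx\phi(y_1)^{L_{i1}}$ into exactly the block-$2$ statement $\phi(y_i)\phi(y_1)^*\phi(y_i)^*\approx(\phi(y_1)^*)^{L_{i1}}$. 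Your ``all at once'' replacement $\psi(y_j)=\sigma_j\tau_j$ destroys this: with $\gamma(y_i)|_2=\phi(y_i)^*$ and $\gamma(y_j)|_2=\phi(y_j)^*$, the block-$2$ relation would require $\phi(y_i)^*\phi(y_j)^*\phi(y_i)\approx(\phi(y_j)^*)^{L_{ij}}$, which is the conjugation by $\phi(y_i)$ in the \emph{wrong direction} and does not follow from $\phi(y_i)\phi(y_j)\phi(y_i)^*\approx\phi(y_j)^{L_{ij}}$. You noticed the analogous failure for the $\mcC_1$ relations, but the $L$ relations have the same problem, and your proposed fix --- quadrupling with the complex-conjugate representation $\overline{\phi}$ --- addresses neither: the obstruction is not complex conjugation but group inversion, and $\overline{\phi}$ fails in block $4$ in exactly the same way $\phi$ fails in block $2$.

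Second, the paper sidesteps the $\mcC_1$ issue entirely by setting $\gamma(x_i)=\phi(x_i)\oplus\Id$ rather than $\phi(x_i)\oplus\phi(x_i)$. With the identity in the second block, every relation involving only involutary generators (linear relations, the relations from $\mcC_0$, and after inductive reduction the new conjugacies replacing $\mcC_1$) is trivially satisfied there, with no need to symmetrize or double further. You should adopt both changes: induct on the number of non-involutary generators using the triangular structure of $L$, and trivialize the involutary generators in the new summand.
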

\begin{proof}
    Suppose $G$ has $\ell$ non-involutary generators, and let
    \begin{equation*}
        G' = \langle G, z, w : z^2 = w^2 = e, y_1 = zw, zy_i = y_i z \text{ for } i=2,\ldots,\ell \rangle.
    \end{equation*}
    We claim that the natural morphism $\psi : G \arr G'$ is an
    $fa^*$-embedding.  Indeed, let $\Psi : \mcF(S) \arr \mcF(S \cup \{z,w\})$
    be the natural inclusion, where $S = \{x_1,\ldots,x_n,y_1,\ldots,y_\ell\}$.
    Given an $\eps$-representation $\phi$ of $G$, define an approximate
    representation $\gamma$ of $G'$ by
    \begin{align*}
       \gamma(x_i) & = \begin{pmatrix} \phi(x_i) & 0 \\ 0 & \Id \end{pmatrix}, & 
       \gamma(z) & = \begin{pmatrix} 0 & \Id \\ \Id & 0 \end{pmatrix}, \\
       \gamma(w) & = \begin{pmatrix} 0 & \phi(y_1)^* \\ \phi(y_1) & 0 \end{pmatrix}, &
       \gamma(y_1) & = \begin{pmatrix} \phi(y_1) & 0 \\ 0 & \phi(y_1)^* \end{pmatrix},  \text{ and} 
    \end{align*}
    \begin{equation*}
        \gamma(y_i) = \begin{pmatrix} \phi(y_i) & 0 \\ 0 & \phi(y_i) \end{pmatrix} \text{ for }
            i = 2,\ldots,\ell.
    \end{equation*}
    Because $L$ is lower-triangular, $G'$ has no defining relations of the form
    $y_1 y_i y_1^{-1} = y_i^{L_{1i}}$. Suppose $L_{i1} > 0$, so that $\phi(y_i) \phi(y_1)
    \phi(y_i)^* \approx_{\eps} \phi(y_1)^{L_{i1}}$, where once again $X \approx_{\eps} Y$
    means that $\norm{X - Y} \leq \eps$. Then $\phi(y_i) \phi(y_1)^*
    \phi(y_i)^* \approx_{\eps} \phi(y_1)^{-L_{i1}}$, so $\psi(y_i) \psi(y_1)
    \psi(y_i)^* \approx_{\eps} \psi(y_1)^{L_{i1}}$. It is easy to see that
    the remaining defining relations of $G'$ hold to within $\eps$, so $\psi$
    is an $\eps$-representation of $G'$. Since $\phi$ is a direct summand of
    $\gamma \circ \Psi$, we can apply Lemma \ref{L:trick} with $N=2$ and $C=1$
    to see that $\psi$ is an $fa$-embedding. The same construction
    for exact representations shows that $\psi$ is an $fa^*$-embedding.

    Next, observe that $G'$ is an extended homogeneous-linear-plus-conjugacy group with $\ell-1$
    non-involutary generators. Indeed, suppose $(1,j,k) \in \mcC_1$. Then the
    defining relation $y_1 x_j y_1^{-1} = x_k$ is equivalent to the relation $z
    w x_j w z = x_k$. By adding an
    ancilla variable $Z_{jk}$ with $Z_{jk}^2 = e$, we can replace this relation
    with the two conjugacy relations $w x_j w = Z_{jk}$ and $z Z_{jk} z = x_k$.
    Similarly, suppose $L_{i1} > 0$. Then the relation $y_i y_1 y_i = y_1^{L_{i1}}$ is
    equivalent to the relation $y_i w y_i^{-1} = w (zw)^{L_{i1}-1}$. Once
    again, we can replace this relation with a sequence of conjugacy relations
    by adding ancilla variables. For instance, if $L_{i1}=3$, then we would add
    ancilla variables $W_{i0}$ and $W_{i1}$ with $W_{i0}^2 = W_{i1}^2 = e$, and
    conjugacy relations $z w z = W_{i0}$, $w W_{i0} w = W_{i1}$, and $y_i w
    y_{i}^{-1} = W_{i1}$. After making these replacements, the only relation
    containing $y_1$ is $y_1 = zw$, so we can remove $y_1$ from the set of
    generators. The commuting relations added in $G'$ are equivalent to $y_i z
    y_i^{-1} = z$ for all $2 \leq i \leq \ell$, so $G'$ is an extended
    homogeneous-linear-plus-conjugacy group. The additional variables (including the ancilla)
    are involutary generators, so $G'$ has $\ell-1$ non-involutary generators. 

    Iterating this construction, we get a sequence of $fa^*$-embeddings terminating
    in a homogeneous-linear-plus-conjugacy group, as desired. 
\end{proof}
The reason the above argument does not apply for groups over $\Z_2$ is that,
if we set $\gamma(J) = \phi(J) \oplus \Id$, then $\gamma(J)$ would not commute with 
$\gamma(z)$ and $\gamma(w)$, while if we set $\gamma(J) = \phi(J) \oplus \phi(J)$,
then any linear relations containing $J$ would not be satisfied. 
\begin{rmk}\label{R:embeddingsize2}
    The above proof shows that, in Proposition \ref{P:extended}, we can take
    \begin{equation*}
        n' = n + 2 \ell + \binom{\ell}{2} + |\mcC_1| + \operatorname{sum}(L)
    \end{equation*}
    and
    \begin{equation*}
        |\mcC'| = |\mcC_0| + 2 |\mcC_1| + 2 \binom{\ell}{2} + \operatorname{sum}(L) + \#(L),
    \end{equation*}
    where $\ell$ is the number of non-involutary generators, $\operatorname{sum}(L)$ is
    the sum of the entries of $L$, and $\#(L)$ is the number of non-zero entries
    of $L$. The matrix $A'$ and set $\mcC'$ can be constructed in polynomial time
    in $m$, $n$, $\ell$, $|\mcC_0|$, $|\mcC_1|$, and $\operatorname{sum}(L)$. 
\end{rmk}

\section{Proof of Theorem \ref{T:main}}

The point of this section is to prove the following proposition, and hence
finish the proof of Theorem \ref{T:main}.
\begin{prop}\label{P:main}
    There is a solution group $\Gamma$ for which $J$ is trivial in
    finite-dimensional representations, but non-trivial in finite-dimensional
    approximate representations.
\end{prop}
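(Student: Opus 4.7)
By Proposition~\ref{P:linearplusconjugacy}, it suffices to produce a linear-plus-conjugacy group $G$ in which $J_G$ is trivial in every finite-dimensional unitary representation of $G$ but satisfies $\ell^{fa}(J_G) > 0$. For then the $fa^*$-embedding $\psi : G \to \Gamma$ over $\Z_2$ furnished by that proposition sends $J_G \mapsto J_\Gamma$; the $fa$-embedding property gives $\ell^{fa}(J_\Gamma) > 0$, while any finite-dimensional representation $\gamma$ of $\Gamma$ restricts via $\psi$ to a finite-dimensional representation $\gamma\circ\psi$ of $G$ under which $J_G$ is trivial, forcing $\gamma(J_\Gamma) = \Id$.

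To obtain extra freedom I would further reduce, using Proposition~\ref{P:extended}, to producing an extended homogeneous-linear-plus-conjugacy group $H = \EHLPC(A,\mcC_0,\mcC_1,L)$ together with a central involutary generator $g$ of $H$ that is trivial in every finite-dimensional representation of $H$ and satisfies $\ell^{fa}(g) > 0$. Proposition~\ref{P:extended} $fa^*$-embeds $H$ into some $\HLPC(A',\mcC')$ sending the involutary generator $g$ to a distinguished involutary generator $g'$, and both properties of $g$ transfer to $g'$: the finite-dimensional triviality because any finite-dimensional representation of $\HLPC(A',\mcC')$ composed with the embedding is a finite-dimensional representation of $H$, and the $fa$-nontriviality from the $fa$-embedding property. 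Since $\HLPC(A',\mcC') \times \Z_2 = \Gamma(A',0,\mcC')$ is already a solution group, adjoining the single linear relation $g' = J$ (one new row in the system, with one non-zero entry and the corresponding entry of $b$ equal to $1$) produces a linear-plus-conjugacy group in which $J$ is identified with $g'$ and therefore inherits the gap. One then applies Proposition~\ref{P:linearplusconjugacy} to land inside a solution group.

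The remaining and principal task is to construct the pair $(H,g)$, and this is the main obstacle. The heuristic is that the involutary generators in an EHLPC presentation are highly rigid in any exact finite-dimensional unitary representation—forced to be $\pm 1$ diagonalizable, to commute along the contexts prescribed by the linear relations, and to be related by the prescribed conjugacies—so that a suitably engineered central commutator-like element $g$ must collapse to the identity in every exact finite-dimensional representation. In approximate unitary representations, by contrast, Weyl/clock-shift constructions on $\C^d$ with $d$ large leave enough slack that the same $g$ can be realized as $-\Id$ with arbitrarily small defect on all defining relations. Executing this plan concretely—combining, presumably, magic-square-style gadgets (as already appear in the proof of Proposition~\ref{P:linearplusconjugacy}) with Heisenberg-type commutation constraints, and verifying that the resulting group simultaneously enforces the exact collapse and admits the approximate non-collapse—is where the substantive new content of the proof lies and forms the technical heart of the remainder of the argument.
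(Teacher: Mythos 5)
Your first reduction is correct and matches the paper: it does suffice to produce a linear-plus-conjugacy group $G$ with $J_G$ trivial in finite-dimensional representations but $\ell^{fa}(J_G)>0$, and the transfer of both properties along the $fa^*$-embedding over $\Z_2$ from Proposition~\ref{P:linearplusconjugacy} works exactly as you say. But the argument stops before the substantive content, and the plan you sketch for filling the gap is both harder than necessary and not what the paper does.

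First, you require a \emph{central} involutary generator $g$ that already witnesses the gap. This is a strictly stronger target than you need, and it is not how the paper proceeds. The paper's device (Lemma~\ref{L:homogextension}) is a $\Z_2$-HNN-extension trick: given any group $G$ with an involutary generator $a$ (not assumed central) that is non-trivial in approximate representations, form $\widehat{G} = \langle G,t : t^2=e,\ tat=Ja\rangle_{\Z_2}$, so that $J_{\widehat G}=[t,a]$; a tensor-power/amplification argument then shows $J_{\widehat G}$ is non-trivial in approximate representations, and automatically trivial in any finite-dimensional representation where $a$ is. This removes the centrality demand and is the key structural lemma you are missing. Your alternative of adjoining $g'=J$ to $\Gamma(A',0,\mcC')$ is delicate precisely because if $g'$ is not already central, that identification imposes new commutation relations and could collapse the group further; and making $g'$ central from the outset is exactly the obstacle you have pushed into the unconstructed $(H,g)$.

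Second, the construction you gesture at (Weyl/clock-shift, Heisenberg-type, magic-square gadgets) is a guess that is not borne out. The paper's witness is an explicit finitely presented group $K=\langle x,y,a,b : a^2=b^2=e,\ ab=ba,\ yay^{-1}=a,\ yby^{-1}=ab,\ xyx^{-1}=y^2\rangle$, recognizable as an EHLPC group. Its hyperlinearity is established via soficity (it is an HNN extension of the amenable group $\Z\ltimes(\Z_2\times\Z_2)$), and the finite-dimensional triviality of $a$ follows from Mal'cev's theorem reducing to finite quotients, where a short order-argument forces $\phi(a)=\Id$. None of this is of the spectral/Heisenberg flavor you propose. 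In short: your high-level reductions are sound, but the two essential ingredients—the $\Z_2$-HNN-extension lemma and the concrete hyperlinear, non-residually-finite group $K$—are absent, and the heuristic you offer in their place does not supply them.
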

For the proof of Proposition \ref{P:main}, it is convenient to work with sofic
groups. We do not need to know the definition of soficity, just that the class
of sofic groups has the following properties:
\begin{enumerate}[(1)]
    \item Amenable groups are sofic.
    \item Sofic groups are hyperlinear. 
    \item If $H$ is an amenable subgroup of a sofic group $G$, and $\alpha : H \arr G$ is
        injective homomorphism, then the HNN extension of $G$ by $\alpha$ is sofic.
\end{enumerate}
An expository treatment of sofic groups can be found in \cite{CL15}. In
particular, the last ``closure property'' can be found in \cite[Section
II.4]{CL15}. 

We need one more general-purpose lemma before proceeding to the proof. 
\begin{lemma}\label{L:homogextension}
    Suppose $G = \langle S : R \rangle$ is a finitely-presented group, where
    $R$ contains the relation $a^2 = e$ for some $a \in S$. Let
    \begin{equation*}
        \widehat{G} := \langle G , t : t^2 = e, tat = Ja \rangle_{\Z_2},
    \end{equation*}
    where $J, t \not\in S$. If $a$ is non-trivial in
    approximate representations of $G$, then $J$ is non-trivial in
    approximate representations of $\widehat{G}$. 
\end{lemma}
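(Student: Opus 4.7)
The plan is to amplify an $\eps$-representation $\rho$ of $G$ in which $\rho(a)$ is far from $\Id$ into an approximate representation $\phi$ of $\widehat{G}$ on $H^{\otimes k}\otimes\C^2$ (for suitable $k$) in which $\phi(J)=\Id\otimes\sigma_z$, so that $\norm{\phi(J)-\Id}=\sqrt{2}$ independently of $\eps$. By hypothesis there is some $\delta>0$ such that, for every $\eps>0$, $G$ admits an $\eps$-representation $\rho$ on some $H$ with $\norm{\rho(a)-\Id}>\delta$. Applying Lemma \ref{L:stability1} to $\rho(a)$ in a diagonalizing basis, followed by Lemma \ref{L:smallchanges}, I may assume (at the cost of a constant factor in $\eps$) that $\rho(a)^2=\Id$ holds exactly; writing $\rho(a)=P_+-P_-$ with eigenspace dimensions $d_\pm$, I then have $d_-/d \geq \delta^2/4$ (after shrinking $\delta$).

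The natural ansatz on $H\otimes\C^2$ takes $\phi(s)=\rho(s)\otimes\Id$ for $s\in S$, $\phi(J)=\Id\otimes\sigma_z$, and $\phi(t)=\Id\otimes\ket{0}\bra{0}+T_-\otimes\ket{1}\bra{1}$ for some unitary operator $T_-$ on $H$. With this ansatz the presentation-over-$\Z_2$ relations ($J^2=e$ and $[J,\cdot]=e$) and $t^2=e$ hold exactly, the relations of $G$ hold within $\eps$ (inherited from $\rho$), and the relation $tat=Ja$ reduces to $T_-^2=\Id$ together with $T_-\rho(a)T_-=-\rho(a)$. The main obstacle is that such an anti-commuting involution $T_-$ exists only when the $\pm 1$ eigenspaces of $\rho(a)$ have equal dimension, which a generic $\eps$-representation does not provide.

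To overcome this I would replace $\rho$ by its tensor power $\rho^{\otimes k}$, which by Lemma \ref{L:sumandproduct} is a $k\eps$-representation of $G$, while $\rho^{\otimes k}(a)=\rho(a)^{\otimes k}$ remains an exact involution. A direct eigenvalue count yields the imbalance
\[
(d_+^{(k)}-d_-^{(k)})/d^k = \bigl((d_+-d_-)/d\bigr)^k \leq (1-\delta^2/2)^k,
\]
so for any $\eta>0$ the choice $k=O(\log(1/\eta)/\delta^2)$ forces this imbalance below $\eta^2$. On the now nearly-balanced eigenspaces of $\rho^{\otimes k}(a)$ on $H^{\otimes k}$, I construct $T_-$ as a partial swap: pick a $d_-^{(k)}$-dimensional subspace $K$ of the $(+1)$-eigenspace, a unitary $V\colon K\to$ (the $(-1)$-eigenspace), and let $T_-$ act as $V$ on $K$, as $V^*$ on the $(-1)$-eigenspace, and as the identity on $K^\perp$ inside the $(+1)$-eigenspace. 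Then $T_-^2=\Id$, and $T_-\rho^{\otimes k}(a)T_-+\rho^{\otimes k}(a)$ vanishes off $K^\perp$ and equals $2\Id$ there, giving normalized Hilbert-Schmidt norm $2\sqrt{(d_+^{(k)}-d_-^{(k)})/d^k}\leq 2\eta$.

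Taking $\eps=\eta/k$ ensures $\rho^{\otimes k}$ is an $\eta$-representation of $G$, so on $H^{\otimes k}\otimes\C^2$ the relations of $G$ hold in $\phi$ within $\eta$, the $J$- and $t$-relations hold exactly by construction, and the relation $tat=Ja$ holds within $\sqrt{2}\eta$ (the factor $\sqrt{2}$ coming from restricting the anti-commutation error to the $\ket{1}\bra{1}$-block, which occupies half the dimension of $H^{\otimes k}\otimes\C^2$). Thus $\phi$ is an $O(\eta)$-representation of $\widehat{G}$ with $\norm{\phi(J)-\Id}=\sqrt{2}$ independent of $\eta$, showing that $J$ is non-trivial in approximate representations of $\widehat{G}$, in fact with $\ell^{fa}(J)\geq\sqrt{2}$.
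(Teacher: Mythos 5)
Your overall plan—reducing the relation $tat = Ja$ to the existence of an approximate unitary involution $T_-$ anti-conjugating $\rho(a)$, and then using the tensor-power trick to balance the $\pm 1$ eigenspaces of $\rho(a)$—is the same idea as the paper, which constructs $\psi(t)$ directly on $H^{\otimes k}$ as a partial swap with $\psi(J) = -\Id$; your version tensors with $\C^2$ and sets $\phi(J) = \Id\otimes\sigma_z$, which is a harmless variant that simply dilutes $\norm{\phi(J)-\Id}$ from $2$ to $\sqrt{2}$. The norm bookkeeping for $\phi(tat)-\phi(Ja)$ and the formula $(d_+^{(k)}-d_-^{(k)})/d^k = ((d_+-d_-)/d)^k$ are both correct.

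However there is a real gap in the estimate
\begin{equation*}
\bigl((d_+-d_-)/d\bigr)^k \leq (1-\delta^2/2)^k.
\end{equation*}
Your hypothesis gives only a lower bound $d_-/d > \delta^2/4$, hence an upper bound $(d_+-d_-)/d < 1 - \delta^2/2$; it gives no lower bound on $(d_+-d_-)/d$, which can be arbitrarily close to $-1$. Indeed, nothing in the definition of ``non-trivial in approximate representations'' prevents the witnessing $\eps$-representations from having $\rho(a) = -\Id$ (already for $G = \Z_2$ this occurs), and in that case $|(d_+^{(k)}-d_-^{(k)})|/d^k = 1$ for every $k$, no $T_-$ exists, and your construction fails. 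The paper closes precisely this hole by first replacing $\phi$ with the direct sum of $\phi$ and $\dim H$ copies of the trivial representation, which forces $\tr(\gamma(a)) \geq 0$ while keeping $\norm{\gamma(a)-\Id} \geq \delta/\sqrt{2}$; only then does it tensor-power. You should add that balancing step (or otherwise argue that you may assume $d_+ \geq d_-$ with $\widetilde{\tr}(\rho(a)) \in [0, 1-c\delta^2]$) before invoking the tensor-power estimate. With that correction the argument goes through.
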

Note that $\widehat{G}$ is the ``$\Z_2$-HNN extension'' of $G \times
\Z_2$, where $J$ is the generator of the $\Z_2$ factor, by the order-two
automorphism sending $a \mapsto Ja$ and $J \mapsto J$. 
\begin{proof}
    For the purposes of this proof, if $X$ is a linear operator
    on a finite-dimensional Hilbert space $H$, let $\widetilde{\tr}(X) :=
    \tr(X) / \dim H$. Suppose $\phi$ is an $\eps$-representation of $G$
    with $\phi(a)^2 = 1$ and $\tr(\phi(a)) \geq 0$. Because the eigenvalues
    of $\phi(a)$ belong to $\{\pm 1\}$, we can choose a basis so that
    $\phi(a) = \Id_{d_0} \oplus (-\Id)_{d_0} \oplus \Id_{d_1}$, where $d_1 =
    \tr(\phi(a))$. Define an approximate representation $\psi$ of $\widehat{G}$ by 
    \begin{equation*}
        \psi(x) = \phi(x) \text{ for all } x \in S,
        \psi(J) = -\Id, \text{ and } 
        \psi(t) = \begin{pmatrix} 0 & \Id_{d_0} & 0 \\
                                   \Id_{d_0} & 0 & 0 \\ 
                                   0 & 0 & \Id_{d_1} \end{pmatrix}.
    \end{equation*}
    Clearly $\norm{\psi(r) - \Id} = \norm{\phi(r) - \Id} \leq \eps$ for all
    relations $r \in R$, $\psi([J,s]) = \Id$ for all $s \in S \cup \{t\}$, and
    $\psi(t)^2 = \psi(J)^2 = \Id$. For the remaining relation, 
    \begin{equation*}
        \norm{\psi(t a t) - \psi(J a)} = \norm{0_{2d_0} \oplus 2\Id_{d_1}} 
            = 2\sqrt{\frac{d_1}{2d_0 + d_1}} = 2\sqrt{\widetilde{\tr}(\phi(a))}.
    \end{equation*}
    So $\psi$ will be a $\max\left(\eps, 2\sqrt{\widetilde{\tr}(\phi(a))}\right)$-representation 
    with $\norm{\psi(J)-\Id} = 2$. 
         
    To make $\widetilde{\tr}(\phi(a))$ small, we can use the tensor-power
    trick as in Section II.2 of \cite{CL15}. Suppose $a$ is non-trivial
    in approximate representations of $G$. By Lemmas \ref{L:smallchanges} and \ref{L:stability1}, there
    is a constant $\delta > 0$, such that for all $\eps > 0$, there is an
    $\eps$-representation $\phi$ of $G$ with $\norm{\phi(a) - \Id} > \delta$
    and $\phi(a)^2 = \Id$.  Given $\eps > 0$, find an integer $k$ such that
    \begin{equation*}
        \left(1 - \frac{\delta^2}{4} \right)^{k} \leq \frac{\eps^2}{4},
    \end{equation*}
    and let $\phi$ be an $\frac{\eps}{k}$-representation with $\norm{\phi(a)
    - \Id} > \delta$ and $\phi(a)^2 = \Id$. Suppose $\phi$ has dimension $d$,
    and let $\gamma$ be the direct sum of $\phi$ with $d$ copies of the trivial
    representation. Then $\gamma$ is an $\frac{\eps}{k}$-representation of
    $G$ by Lemma \ref{L:sumandproduct}, and furthermore $\gamma(a)^2 = \Id$, 
    $\tr(\gamma(a)) = d + \tr(\phi(a)) \geq 0$, and
    \begin{equation*}
        \norm{\gamma(a) - \Id} = \norm{ \Id_{d} \oplus \phi(a) - \Id_{2d} }
            = \frac{1}{\sqrt{2}} \norm{ \phi(a) - \Id}
            > \frac{\delta}{\sqrt{2}}.
    \end{equation*}
    Since $\gamma(a)$ is self-adjoint,
    \begin{equation*}
        \norm{\gamma(a) - \Id}^2 = 2 - 2 \widetilde{\tr}(\gamma(a)), 
    \end{equation*}
    so we conclude that
    \begin{equation*}
        0 \leq \widetilde{\tr}(\gamma(a)) \leq 1 - \frac{\delta^2}{4}. 
    \end{equation*}
    Since $\widetilde{\tr}(X^{\otimes k}) = \widetilde{\tr}(X)^k$, Lemma
    \ref{L:sumandproduct} implies that $\gamma^{\otimes k}$ is an
    $\eps$-representation of $\widehat{G}$ with 
    \begin{equation*}
        0 \leq \widetilde{\tr}(\gamma^{\otimes k}(a)) \leq \left(1 - \frac{\delta^2}{4}\right)^k \leq \frac{\eps^2}{4}.
    \end{equation*}
    Applying the argument of the first paragraph to $\gamma^{\otimes k}$, we get an
    $\eps$-representation $\psi$ of $G$ with $\norm{\psi(J)-\Id} = 2$. This
    shows that $J$ is non-trivial in approximate representations of $\widehat{G}$. 
\end{proof}
We are now ready to prove Proposition \ref{P:main}.  Note that any hyperlinear
but non-residually-finite group has an element which is trivial in
finite-dimensional representations, but non-trivial in approximate
representations. To prove Proposition \ref{P:main}, we show that
\begin{equation*}\label{L:K1}
    K = \langle x,y,a,b : a^2 = b^2 = e, ab = ba, y a y^{-1} = a, y b
                y^{-1} = ab, x y x^{-1} = y^2\rangle
\end{equation*}
is an extended homogeneous-linear-plus-conjugacy group which is hyperlinear
but non-residually finite.  Indeed, to see that $K$ has a presentation as in
Definition \ref{D:extended}, we can introduce a third variable $c$ with $c^2 =
e$ and $c = ab$. Then $K$ is equivalent to the extended
homogeneous-linear-plus-conjugacy group with three involutary generators
$a,b,c$, one linear relation $abc = e$ (along with the corresponding commuting
relations), two non-involutary generators $x$ and $y$, and three conjugacy
relations $y a y^{-1} = a$, $y b y^{-1} = c$, and $x y x^{-1} = y^2$. 
For the remainder of this section, $K$ will refer to this group. 
\begin{lemma}\label{L:main1}
    $K$ is sofic, and the element $a \in K$ is non-trivial.
\end{lemma}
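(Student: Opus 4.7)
The plan is to realize $K$ as an HNN extension of an amenable group along an amenable cyclic subgroup, then invoke properties (1) and (3) of the sofic class together with Britton's lemma.

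First I would analyze the subgroup $N := \langle a,b,y\rangle \leq K$. The involutions $a,b$ commute, so they span a copy of $\Z_2 \times \Z_2$, and the relations $yay^{-1}=a$, $yby^{-1}=ab$ let $y$ act on this $\Z_2^2$ by the automorphism $\sigma$ fixing $a$ and sending $b\mapsto ab$. A direct check gives $\sigma^2(b) = \sigma(ab) = a \cdot ab = b$, so $\sigma^2 = \mathrm{id}$. Thus the abstract group with the defining relations of $N$ is the semidirect product $N_0 := \Z_2^2 \rtimes_\sigma \Z$ with $y$ generating the $\Z$-factor. One checks $N \cong N_0$ in the standard way: $N_0$ satisfies the relations of $N$ by construction, giving a surjection $N_0 \to N$, while the quotient map $N \to \Z$ killing $a$ and $b$ (well-defined from the presentation) provides the inverse using the semidirect-product normal form $v \cdot y^n$. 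In particular $y$ has infinite order in $N$, and $N$ is virtually abelian (the index-two subgroup $\langle a,b,y^2\rangle$ is abelian since $\sigma^2=\mathrm{id}$), hence amenable, hence sofic by property~(1).

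Second, since $y^2 \in N$ is central of infinite order, the map $\alpha : \langle y \rangle \to N$, $y \mapsto y^2$, is an injective homomorphism from the amenable subgroup $\langle y\rangle \cong \Z$ into $N$. The HNN extension $N *_\alpha$ with stable letter $x$ satisfying $xyx^{-1}=y^2$ has, upon expanding the presentation of $N$, exactly the presentation of $K$. Property~(3) applied to this HNN extension then yields that $K$ is sofic. Moreover, Britton's lemma asserts that the base group embeds into the HNN extension, so $N \hookrightarrow K$, and since $a$ is a non-trivial involution in $\Z_2^2 \leq N$, it is non-trivial in $K$.

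The only step requiring care is verifying the semidirect-product structure of $N$, i.e.\ that no unexpected relations collapse $\langle a,b\rangle$ or $\langle y\rangle$ inside $N$; this is routine given both the explicit model $N_0$ and the quotient map $N\to\Z$ killing $a,b$. Everything else is an immediate application of the stated closure properties and the standard fact that the base group embeds into any HNN extension.
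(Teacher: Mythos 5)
Your proposal takes the same approach as the paper: identify the subgroup generated by $a,b,y$ as $\Z\ltimes(\Z_2\times\Z_2)$, realize $K$ as the HNN extension of this amenable base by $y\mapsto y^2$, and invoke the closure of soficity under HNN extensions with amenable associated subgroups plus Britton's lemma for non-triviality of $a$.

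One cautionary note on the exposition, though: you phrase the key step as ``one checks $N\cong N_0$,'' where $N$ is the subgroup of $K$ generated by $a,b,y$. As written this is circular, because establishing that the subgroup $N\leq K$ has no collapse (in particular, that $a\neq e$ in $K$) is exactly the content of the lemma, and your proposed certificate --- the quotient $N\to\Z$ killing $a,b$ --- only controls the $\Z$-direction of the normal form $v\cdot y^n$, not whether $v\in\Z_2^2$ survives. The clean order of argument, which is what the paper does, is: (i) show the \emph{abstractly presented} group $K_1 = \langle y,a,b : a^2=b^2=e,\,ab=ba,\,yay^{-1}=a,\,yby^{-1}=ab\rangle$ is isomorphic to $\Z\ltimes\Z_2^2$ (a self-contained computation making no reference to $K$), (ii) observe that the presentation of $K$ is literally the presentation of the HNN extension of $K_1$ along $\langle y\rangle\xrightarrow{\ y\mapsto y^2\ }K_1$, and only then (iii) use Britton's lemma to conclude that $K_1\hookrightarrow K$, so that in particular $a\neq e$ and $N\cong K_1$. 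Your final paragraph gestures at this concern, but as stated the ``routine'' verification you defer to is precisely what Britton's lemma is needed for, so the logical dependency has to be reversed. With that reordering the argument is correct and matches the paper's proof.
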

\begin{proof}
    $K_1 := \langle y,a,b: a^2 = b^2 = e, ab = ba, y a y^{-1} = a, yby^{-1} = ab \rangle$
    is isomorphic to $\Z \ltimes (\Z_2 \times \Z_2)$, and in particular is
    solvable (hence amenable). The group $K$ is the HNN extension of $K_1$ by
    the injective endomorphism of $\langle y \rangle \iso \Z$ sending $y
    \mapsto y^{2}$. Hence $K$ is sofic by properties (1) and (3) of sofic groups above. 
    In addition, the natural morphism $K_1 \arr K$ is injective. Since $a$ is
    clearly non-trivial in $K_1$, we conclude that $a$ is non-trivial in $K$. 
\end{proof}

The following lemma comes from discussions with Tobias Fritz.
\begin{lemma}\label{L:main2}
    The element $a \in K$ is trivial in all finite-dimensional representations
    of $K$.
\end{lemma}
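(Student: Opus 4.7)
The plan is to exploit the two kinds of relations in $K$ separately: first, use the Baumslag--Solitar-style relation $xyx^{-1}=y^2$ to constrain the spectrum of $\rho(y)$, and then use the semidirect-product relations among $y,a,b$ to force $V_-:=\ker(\rho(a)+I)=0$. For the spectral constraint, $\rho(x)\rho(y)\rho(x)^{-1}=\rho(y)^2$ shows $\rho(y)$ and $\rho(y)^2$ are conjugate, so they share the multiset $E$ of eigenvalues (with algebraic multiplicity). Squaring thus sends $E$ into itself; moreover, a multiplicity-balance argument (summing $\operatorname{mult}_E(\mu)=\sum_{\lambda^2=\mu}\operatorname{mult}_E(\lambda)$ over the set $C$ of periodic points of the squaring map shows that any pre-periodic point would over-count its eventual target in $C$) rules out pre-periodic orbits. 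Hence every eigenvalue $\lambda$ satisfies $\lambda^{2^k}=\lambda$ for some $k\ge 1$, and so $\operatorname{ord}(\lambda)$ divides the odd integer $2^k-1$.

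Next, $\rho(a)$ is an involution commuting with both $\rho(y)$ and $\rho(b)$ (from $yay^{-1}=a$ and $ab=ba$), so $V = V_+ \oplus V_-$ with $\rho(a)=\pm I$ on each summand, and both summands are invariant under $\rho(K_1)$ where $K_1=\langle a,b,y\rangle$. It suffices to show $V_-=0$. The key group-theoretic identity in $K$ is $byb^{-1}=ya$: from $yby^{-1}=ab$ one gets $y^{-1}by=ab$, hence $by=y\cdot ab=yab$, and then $byb=yab\cdot b=ya$. Passing to the representation, $\rho(b)\rho(y)\rho(b)^{-1}=\rho(y)\rho(a)$, which restricted to $V_-$ is the anticommutation $\rho(b)\rho(y)=-\rho(y)\rho(b)$. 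Consequently $\rho(y)|_{V_-}$ is conjugate (via $\rho(b)|_{V_-}$) to $-\rho(y)|_{V_-}$, so the eigenvalue multiset of $\rho(y)|_{V_-}$ is stable under $\lambda\mapsto -\lambda$.

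This produces the contradiction: if $V_-\neq 0$, pick an eigenvalue $\lambda$ of $\rho(y)|_{V_-}$; then $-\lambda$ is also an eigenvalue, and both are eigenvalues of $\rho(y)$ on $V$, hence both have odd order by the first step. But if $m:=\operatorname{ord}(\lambda)$ is odd, then $(-\lambda)^m=-1\ne 1$ while $(-\lambda)^{2m}=1$, so $\operatorname{ord}(-\lambda)$ must be even --- impossible. Hence $V_-=0$ and $\rho(a)=I$. The step most likely to need care is the ``no pre-periodic orbits'' multiplicity argument, together with checking that the eigenvalue comparisons do not secretly require $\rho(y)$ to be semisimple; both are fine because the arguments rest on equality of characteristic polynomials of conjugate operators, which holds Jordan-block by Jordan-block.
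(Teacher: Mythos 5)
Your proof is correct, and it takes a genuinely different route than the paper's. The paper first invokes Mal'cev's theorem (finitely generated linear groups are residually finite) to reduce the claim to representations in \emph{finite} groups $H$; there, the order $k$ of $\phi(x)$ is finite, so $\phi(y) = \phi(y)^{2^k}$ immediately forces $|\phi(y)|$ to divide $2^k-1$ and hence be odd, after which the contradiction comes from the fact that conjugation by $\phi(y)$ swaps $\phi(b)$ and $\phi(ab)$ while $\phi(y)^{\text{odd}}=e$. You instead stay entirely inside a fixed finite-dimensional representation and extract the oddness constraint from the eigenvalue multiset $E$ of $\rho(y)$: since $\rho(y)$ and $\rho(y)^2$ are conjugate, $E$ is stable under squaring, and your multiplicity-balance argument correctly rules out pre-periodic orbits (no element of the support can feed into a cycle from outside without inflating some cycle multiplicity), so every eigenvalue is an odd-order root of unity. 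Your derivation of $byb^{-1}=ya$ is a valid rearrangement of the relation $yby^{-1}=ab$ using $[y,a]=e$ and $a^2=b^2=e$, and restricting to $V_-=\ker(\rho(a)+I)$ --- which is $\rho(y)$- and $\rho(b)$-invariant --- gives the anticommutation $\rho(b)\rho(y)\rho(b)^{-1}=-\rho(y)$ there; if $V_-\neq 0$ this produces an eigenvalue $\lambda$ with $-\lambda$ also an eigenvalue of $\rho(y)$, contradicting that both must have odd order. Your approach buys self-containedness (no appeal to Mal'cev or residual finiteness of linear groups) at the cost of a longer spectral argument, while the paper's route is shorter once the external theorem is cited. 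One small point worth stating explicitly in a polished write-up: $\rho(y)$ is invertible, so no eigenvalue is $0$, which is needed both for $\lambda\neq -\lambda$ and for $\lambda^{2^k-1}=1$ to make sense.
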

\begin{proof}
    By a theorem of Mal'cev \cite{Ma65}, it suffices to show that $a$ is
    trivial in finite representations, rather than finite-dimensional
    representations. So let $\phi : G \arr H$ be a homomorphism from $G$ to a
    finite group $H$. Now the order $k$ of $\phi(x)$ is finite, so $\phi(y) =
    \phi(x)^k \phi(y) \phi(x)^{-k} = \phi(y)^{2^k}$. It follows that the order $m
    = |\phi(y)|$ of $\phi(y)$ divides $2^k-1$, and in particular is odd. Since
    $\phi(y) \phi(b) \phi(y)^{-1} = \phi(ab)$ and $\phi(y) \phi(ab)
    \phi(y)^{-1} = \phi(b)$, we conclude that $\phi(b) = \phi(y)^m \phi(b)
    \phi(y)^{-m} = \phi(ab)$. Consequently $\phi(a) = \Id$ as desired. 
\end{proof}

\begin{proof}[Proof of Proposition \ref{P:main}]
    By Proposition \ref{P:extended}, there is an $fa$-embedding of $K$ to a
    homogeneous-linear-plus-conjugacy group $G = \HLPC(A,\mcC)$, in which $a
    \in K$ is mapped to a generator $x_i$ of $G$. Let 
    \begin{equation*}
        \widehat{G} = \langle G, t : t^2 = e, t x_i t = J x_i \rangle_{\Z_2}.
    \end{equation*}
    The relation $t x_i t = J x_i$ can be replaced with the relations $t x_i t
    = Z$ and $Z x_i = J$, where $Z$ is an ancilla variable with $Z^2 = e$. With
    this presentation, $\widehat{G}$ is a linear-plus-conjugacy group. By Proposition
    \ref{P:linearplusconjugacy}, there is an $fa$-embedding over $\Z_2$ of
    $\widehat{G}$ to a solution group $\Gamma$. 
    
    By Lemma \ref{L:main1}, $a$ is non-trivial in approximate representations of
    $K$, and hence $x_i$ is non-trivial in approximate representations of $G$. 
    By Lemma \ref{L:homogextension}, $J_{\widehat{G}}$ is non-trivial in
    approximate representations of $\widehat{G}$, and we conclude that
    $J_{\Gamma}$ is non-trivial in approximate representations of $\Gamma$.

    Finally, there is a morphism from $K$ to $\widehat{G}$ which sends $a$ to
    $x_i$, so $x_i$ will be trivial in all finite-dimensional representations
    of $\widehat{G}$ by Lemma \ref{L:main2}. But since $J_{\widehat{G}} =
    [t,x_i]$, this means that $J_{\widehat{G}}$ (and hence $J_{\Gamma}$) is
    trivial in all finite-dimensional representations of $\widehat{G}$. 
\end{proof}

\begin{proof}[Proof of Theorem \ref{T:main}]
    Let $\Gamma$ be the solution group from Proposition \ref{P:main}, and let
    $\mcG$ be the associated game. Since $J$ is trivial in finite-dimensional
    representations, Theorem \ref{T:solutiongroup} implies that $\mcG$ does not
    have a perfect strategy in $C_{qs}$. But since $J$ is non-trivial in
    approximate representations, Proposition \ref{P:reptostrat} implies that
    $\mcG$ has a perfect strategy in $C_{qa}$.
\end{proof}

\begin{rmk}
    By Remarks \ref{R:embeddingsize} and \ref{R:embeddingsize2}, the linear
    system constructed in the proof of Theorem \ref{T:main} will have $235$
    variables and $184$ equations. 
\end{rmk}

\section{Proofs of Theorems \ref{T:main2} and \ref{T:main3}}

To prove Theorem \ref{T:main2}, we want to find a hyperlinear group with an
undecidable word problem, which $fa$-embeds in a solution group. For Theorem
\ref{T:main3}, we want to find a family of residually finite groups with arbitrarily hard
(albeit computable) word problems, which $fin$-embed in solution groups.
Fortunately, such groups are provided by Kharlampovich \cite{Kha81} and
Kharlampovich, Myasnikov, and Sapir \cite{KMS17}. Since the presentations
are rather complicated, we do not repeat them here. Instead, we summarize
some points of the construction from \cite{KMS17} in the following theorem.

It is helpful to use the following notation: given $S_0 \subseteq S_1$, let
$\mcN(S_0,S_1)$ denote the normal subgroup generated by $S_0$ in the free group $\mcF(S_1)$.
Note that if $S_1 \subseteq S$, then $\mcN(S_0,S_1)$ is a (not necessarily
normal) subgroup of $\mcF(S)$ in a natural way. Also, if $x,y$ are group
elements, recall that $[x,y] = xyx^{-1}y^{-1}$, and $x^y = y x
y^{-1}$.\footnote{This is the reverse of the convention in \cite{KMS17}, where $[x,y] =
x^{-1} y^{-1} x y$ and $x^y = y^{-1} x y$.}
\begin{theorem}[\cite{KMS17}, see also \cite{Kha81}]\label{T:kha}
    Let $X \subseteq \mbN$ be recursively enumerable. Then there is a
    finitely-presented solvable group $K_X = \langle S : R \rangle$ with
    the following properties:
    \begin{enumerate}[(1)]
        \item The set $S$ is divided into three subsets $L_i$, $i=0,1,2$. 
        \item The relations in $R$ come in three types:
            \begin{enumerate}[(a)]
                \item $R$ contains the relations $x^2 = e$ for all $x \in L_0 \cup L_1$.
                \item $R$ also contains commuting relations of the form
                    $xy = yx$, for certain pairs $x,y \in S$.
                \item Every other relation $r \in R$ belongs to some normal subgroup
                    $\mcN(S_0,S_1)$, where $S_1 \subseteq S$ and $S_0 \subseteq
                    (L_0 \cup L_1) \cap S_1$ are such that the image of
                    $\mcN(S_0,S_1)$ in $K_X$ is abelian.
            \end{enumerate}
        \item The image of $\mcN(L_0,S)$ in $K_X$ is abelian.
        \item There are elements $z_0,z_1 \in L_0$, $A_1,A_2  \in L_1$, and $a,a' \in L_2$, such that
            $n \in X$ if and only if 
            \begin{equation*}
                [A_2,[A_1,w(2^n)]] = [A_2,[A_1,z_0]]
            \end{equation*}
            in $K_X$, where $w(m)$ is defined by
            \begin{equation*}
                w(m) := \begin{cases} 
                                z_1 & m = 0 \\
                                w(m-1) w(m-1)^{a^{-1}} w(m-1)^{a} w(m-1)^{a'} & m \geq 1 \\
                                \end{cases}.
            \end{equation*}
        \item If $X$ is recursive, then $K_X$ is residually finite.
    \end{enumerate}
\end{theorem}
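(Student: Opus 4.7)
The plan is to follow the Kharlampovich--Myasnikov--Sapir strategy: encode a machine enumerating $X$ inside a carefully layered solvable group. The core design decision is the doubly-exponential word $w(m)$ in (4): because $w(m)$ is a product of four conjugates of $w(m-1)$, its length grows like $4^m$, so $w(2^n)$ has effective length $4^{2^n}$. This lets a single iterated commutator $[A_2,[A_1,w(2^n)]]$ encode an exponentially long computation. I would design the action of $A_1,A_2,a,a'$ on the innermost layer generated by $L_0$ so that each step of the machine corresponds to one application of the recursive rule for $w$, and so that the commutator equation in (4) reduces to the halting condition of the machine on input $n$. The symmetric use of $a$ and $a^{-1}$ in the recursion lets the substitution simulate both forward and backward transitions, which is what makes the encoding genuinely bidirectional and hence faithful.

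Next, I would organize $K_X$ as a tower of extensions reflecting the partition $S = L_0 \cup L_1 \cup L_2$. The innermost layer, $\mcN(L_0,S)$, is required by (3) to have abelian image, giving the base of the solvable series; $L_1$ adds another abelian quotient; and $L_2$ supplies the generators whose conjugation implements the recursion defining $w(m)$. Condition (2c), demanding that every nontrivial relation sits in a normal subgroup whose image is abelian, is precisely what allows the entire group to be built by iterated extensions of abelian groups, i.e.\ to be solvable. The involutary relations in (2a) and the commuting relations in (2b) are the additional ``skeleton'' arranged so that Proposition \ref{P:extended} and the preceding embedding machinery can later be applied to $K_X$ to land inside a solution group.

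The main obstacle is to arrange the relations so that (i) the group really is solvable, i.e.\ that (2c) and (3) hold, (ii) the simulation of the machine is faithful so that (4) is an if and only if, and (iii) for recursive $X$, $K_X$ is residually finite. Part (iii) is the hardest: the natural way to encode an arbitrary recursively enumerable set produces non-residually-finite groups (and this is essentially what is happening when $X$ fails to be recursive). For recursive $X$, \cite{KMS17} show that one can arrange the iterated extensions using only finitely generated abelian sections whose arithmetic is decidable, so that residual finiteness is preserved stage by stage; I would invoke their detailed construction (building on \cite{Kha81}) rather than reprove it. The careful quantitative control of the word problem complexity in terms of the complexity of $X$, together with the simultaneous bookkeeping required to keep all the relations in the form dictated by (2), is the technical heart of the argument.
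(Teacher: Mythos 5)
The paper does not prove Theorem~\ref{T:kha}; it imports the result from \cite{KMS17} and \cite{Kha81}, explicitly stating ``Since the presentations are rather complicated, we do not repeat them here. Instead, we summarize some points of the construction from \cite{KMS17}.'' Your proposal likewise defers the construction to those references, so the approaches agree; your narrative sketch is a reasonable gloss, though it cannot be checked against a proof the paper does not supply. One point worth sharpening in your sketch: the recursive definition of $w(m)$ matters downstream not because the commutator ``encodes an exponentially long computation'' via a doubly-exponential-length word, but because the recursion furnishes a compact straight-line description of $w(2^n)$ --- only $O(2^n)$ ancilla generators and relations are needed to express it (see the proof of Proposition~\ref{P:kha}) --- and it is this compactness, not the raw length, that yields the $\exp(O(n))$ size bound required for Theorem~\ref{T:main3}.
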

Note that there is some overlap between relations of type (2b) and (2c).
Indeed, if $[x,y]=e$ is a relation, then the image of $\mcN(\{x\},\{x,y\})$ in
$K_X$ is equal to $\langle x \rangle$, and in particular is abelian. Since
$[x,y]$ belongs to $\mcN(\{x\},\{x,y\})$, any relation $[x,y] = e$ of type (2b)
with $x \in L_0 \cup L_1$ can also be regarded as a relation of type (2c).

To see that property (4) of the theorem holds from the description in
\cite{KMS17}, it is helpful to note that, by properties (1), (2a), and (3) of the
theorem, $w(m)$ is an involution for all $m \geq 1$.

\begin{lemma}\label{L:kha}
    Suppose $K = \langle S : R \rangle$ is a finitely-presented group
    satisfying properties (1) and (2) of Theorem \ref{T:kha}.  Then $K$ is an
    extended homogeneous-linear-plus-conjugacy group (as in Definition
    \ref{D:extended}). 

    Furthermore, if $S_0 \subseteq S_1 \subseteq S$ are two subsets such that
    $S_0 \subseteq L_0 \cup L_1$, and the image of $\mcN(S_0,S_1)$ in $K$ is
    abelian, then for every $w \in \mcN(S_0,S_1)$, there is a presentation of
    $K$ as an extended homogeneous-linear-plus-conjugacy group in which $w$ is
    equal in $K$ to one of the involutary generators $x_i$. 
\end{lemma}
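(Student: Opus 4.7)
The plan is to exhibit an EHLPC presentation of $K$ by rewriting each relation of $R$ in one of the allowed shapes, adding ancilla generators as needed. Take $L_0\cup L_1$ as the initial involutary generators $x_i$ and the elements of $L_2$ (in some fixed order) as the initial non-involutary generators $y_1,\ldots,y_\ell$; type (2a) relations are then automatic. A commutation relation of type (2b), $xy=yx$, is encoded as follows: if $x,y$ are both involutary, place $(x,y,y)$ in $\mcC_0$ (yielding the conjugacy relation $xyx=y$); if $y=y_i$ is non-involutary and $x=x_j$ involutary, place $(i,j,j)$ in $\mcC_1$; if $x=y_j$ and $y=y_i$ are both non-involutary with $i>j$, set $L_{i,j}:=1$. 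Because (2b) between non-involutary generators is symmetric, any fixed ordering of $L_2$ yields a lower-triangular $L$.

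For a relation of type (2c), $r\in\mcN(S_0,S_1)$ with $S_0\subseteq (L_0\cup L_1)\cap S_1$ and the image of $\mcN(S_0,S_1)$ in $K$ abelian, write $r=\prod_{k=1}^m (g_k s_k g_k^{-1})^{\epsilon_k}$ with $s_k\in S_0$ and $g_k\in\mcF(S_1)$. Each $u_k:=g_k s_k g_k^{-1}$ is an involution in $K$ (since $s_k\in L_0\cup L_1$), so $r=u_1 u_2\cdots u_m$ in $K$. Build each $u_k$ letter-by-letter along $g_k$: introduce fresh involutary ancillas $s_k=u_k^{(0)},u_k^{(1)},\ldots,u_k^{(|g_k|)}=u_k$, encoding each step $u_k^{(j)}=t_j u_k^{(j-1)} t_j^{-1}$ as a conjugacy relation in $\mcC_0$ (when $t_j$ is involutary) or $\mcC_1$ (when $t_j$ is non-involutary), with inverse letters absorbed by flipping the orientation of the same relation. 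Then record $r=e$ by adding a new row to $A$ expressing $u_1+\cdots+u_m=0$. The definition of $\Gamma_0$ automatically forces $u_1,\ldots,u_m$ to commute pairwise: by the abelian-image hypothesis, these commutations genuinely hold in $K$, so every relation added is valid in $K$ and the original $r=e$ is derivable.

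For the ``furthermore'' part, apply the same ancilla construction to $w$ to produce involutary generators $u_1,\ldots,u_m$ with $w=u_1 u_2\cdots u_m$ in $K$, then add one more fresh involutary generator $\tilde w$ together with the linear equation $\tilde w+u_1+\cdots+u_m=0$. Inside $K$ this forces $\tilde w=w$ (using $\tilde w^2=e$ and the fact that $w$ is an involution in the abelian image), and the commutations $[\tilde w,u_k]=e$ demanded by the new linear equation hold in $K$ because $\tilde w=w$ itself lies in the abelian image of $\mcN(S_0,S_1)$.

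The main difficulty is verifying that every relation introduced in the new presentation actually holds in $K$, so that the EHLPC group we construct is $K$ itself rather than a proper quotient. The conjugacy relations defining the ancilla chains are tautological, and the lower-triangularity of $L$ together with the typing of $\mcC_0$ and $\mcC_1$ is pure bookkeeping; the nontrivial content is the commutations bundled into each new linear equation, which is precisely where the hypothesis that the image of $\mcN(S_0,S_1)$ in $K$ is abelian is essential.
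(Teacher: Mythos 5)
Your proof is correct and follows essentially the same approach as the paper: classify the relations by type, encode (2b) commutations as conjugacy relations with the appropriate typing into $\mcC_0$, $\mcC_1$, or $L$, and reduce (2c) relations to a linear equation among ancilla involutions built by conjugacy chains, using the abelian-image hypothesis to justify the extra commutation relations. The only difference is organizational: where the paper runs an induction on word length in $\mcF(S_1)$, you invoke the explicit normal-closure form $w = \prod_k (g_k s_k g_k^{-1})^{\epsilon_k}$ and unroll each conjugate letter-by-letter; this is the same construction and, if anything, sidesteps the slight informality in the paper's product case of the induction.
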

\begin{proof}
    The generating set of $K$ is split into involutary generators $L_0 \cup
    L_1$ and non-involutary generators $L_2$.  Since the order on
    non-involutary generators matters in Definition \ref{D:extended}, choose an
    arbitrary enumeration $y_1,\ldots,y_k$ of $L_2$. According to property (2)
    of Theorem \ref{T:kha}, the defining relations for $K$ (aside from the
    involutary relations on $L_0 \cup L_1$) fall into two types: (2b) and (2c).
    Both types of relations can be rewritten as linear and conjugacy relations
    of the types allowed in Definition \ref{D:extended}. 
    Indeed, commuting relations (relations of type (2b)) can be regarded as
    conjugacy relations (note that for relations $y_i y_j = y_j y_i$, we can
    choose either $y_i y_j y_i^{-1} = y_j$ or $y_j y_i y_j^{-1} = y_i$
    depending on whether $i > j$ or $i < j$).  

    This leaves relations of type (2c). For this, we first prove the second
    part of the lemma. Suppose that the image of $\mcN(S_0,S_1)$ is
    abelian in $K$, where $S_0 \subset L_0 \cup L_1$. We claim that for any
    non-trivial element $w \in \mcN(S_0,S_1)$, there is a finite set of
    generators $S_w$ and relations $R_w \subset \mcF(L_0 \cup L_1 \cup L_2 \cup
    S_w)$ such that 
    \begin{enumerate}[(i)]
        \item $R_w$ consists of linear and conjugacy relations as in Definition
            \ref{D:extended}, 
        \item the relations 
            \begin{equation*}
                \widetilde{R}_w := R_w \cup \{ s^2 = e : s \in L_0 \cup L_1 \cup S_w\}
            \end{equation*}
            imply that $w$ is equal to an element of $L_0 \cup S_w$, and
        \item the added generators $S_w$ and relations $R_w$ do not change the
            group, i.e. the inclusion
            \begin{equation*}
                K \arr \langle K, S_w : R_w \cup \{ s^2 = e : s \in S_w \} \rangle
            \end{equation*}
            is an isomorphism.
    \end{enumerate}
    To prove the claim, we use induction on the length of $w$ in $\mcF(S_1)$.
    The claim is trivially true if $w \in S_0^{\pm}$. Suppose $w = z x z^{-1}$,
    where $x \in \mcN(S_0)$ has length less than $w$, and $z \in
    S_1$. By induction, there is a set of ancilla variables
    $S_{x}$ and relations $R_x$ satisfying properties (i)-(iii) for $x$. In
    particular, the relations $\widetilde{R}_x$ imply that $x$ is equal to some
    $X \in S_0 \cup S_x$. Then we can set $S_w := S_x \cup \{W\}$, where $W$ is
    a new indeterminate, and $R_w := R_x \cup \{W = z X z\}$ or $R_x \cup \{W =
    z X z^{-1}\}$ depending on whether $z \in L_0 \cup L_1$ or $z \in L_2$. If
    $w = z^{-1} x z$, then we do the same thing, but using $z W z^{-1} = X$ in
    place of $W = z X z^{-1}$.  Finally, suppose that $w = x_1 \cdots x_n$,
    where each $x_i \in \mcN(S_0,S_1)$ has smaller length than $w$. By induction, there
    are sets $S_{x_i}$ and relations $R_{x_i}$ implying that $x_i$ is equal to
    some $X_i \in L_0 \cup S_{x_i}$.  We then set $S_w := \bigcup S_{x_i} \cup
    \{W\}$, where $W$ is again a new indeterminate, and 
    \begin{equation*}
        R_w := \bigcup R_{x_i} \cup \{W X_1 \cdots X_n = e = [W,X_i] = [X_i, X_j] \text{ for all } 1 \leq i,j \leq n\}. 
    \end{equation*}
    Since the image of $\mcN(S_0,S_1)$ in $K$ is abelian, adding the relations
    $R_w$ does not change $K$. This proves the claim. 

    Now suppose that $K$ has a defining relation in $\mcN(S_0,S_1)$. If $r = z x z^{-1}$
    for some $x \in \mcN(S_0,S_1)$ and $z \in S_1^{\pm}$, then $r$ can be replaced
    with the simpler relation $x$. So we can assume without loss of generality
    that $r = x_1 \cdots x_n$, where each $x_i \in \mcN(S_0,S_1)$. By the
    claim, we can add ancilla variables and relations such that each $x_i$ is
    equal to an involutary generator $X_i$ in $K$, and the relation $r$ can be
    replaced with the linear relation $X_1 \cdots X_n = e$. We conclude that
    $K$ is an extended homogeneous-linear-plus-conjugacy group. The claim also
    immediately implies the second part of the lemma. 
\end{proof}

We now come to the main result of this section.
\begin{prop}\label{P:kha}
    Let $X \subseteq \mbN$ be a recursively enumerable set. Then there
    is a family of solution groups $\Gamma_n = \Gamma\left(A^{(n)},
    b^{(n)}\right)$, $n \geq 1$, such that
    \begin{enumerate}[(a)]
        \item $A^{(n)} x = b^{(n)}$ is an $\exp(O(n)) \times \exp(O(n))$ linear
            system;
        \item the function $n \mapsto \left(A^{(n)},b^{(n)}\right)$ is computable
            in $\exp(O(n))$-time;
        \item $J_{\Gamma_n}$ is non-trivial in $\Gamma_n$ if and only if
            $n \in X$; 
        \item if $J_{\Gamma_n}$ is non-trivial in $\Gamma_n$, then $J_{\Gamma_n}$ is
            non-trivial in approximate representations; and
        \item if $X$ is recursive and $J_{\Gamma}$ is non-trivial in 
            $\Gamma_n$, then $J_{\Gamma_n}$ is non-trivial in finite-dimensional
            representations.
    \end{enumerate}
\end{prop}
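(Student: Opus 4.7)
The plan is to package Kharlampovich's group $K_X$ from Theorem~\ref{T:kha} inside a solution group using the chain of $fa^*$-embeddings developed in the previous sections. Invoking Theorem~\ref{T:kha} on $X$, extract for each $n$ the element
\[
g_n := [A_2,[A_1,w(2^n)]] \cdot [A_2,[A_1,z_0]]^{-1} \in K_X,
\]
which by property~(4) satisfies $g_n = e$ in $K_X$ iff $n \in X$. Since $w(2^n)$ and $z_0$ lie in $\mcN(L_0, S)$, so does $g_n$; property~(3) then implies $g_n$ has abelian image in $K_X$, and combined with the fact that $L_0$ consists of involutions, $g_n$ itself is an involution.

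Next, I apply the second clause of Lemma~\ref{L:kha} with $S_0 = L_0$, $S_1 = S$, and $w = g_n$: this uses $\exp(O(n))$ ancilla variables to climb through the nested commutator structure that builds $g_n$, and yields a presentation of $K_X$ as an extended homogeneous-linear-plus-conjugacy group in which $g_n$ is equal to an involutary generator $x_{i_n}$. Proposition~\ref{P:extended} then gives an $fa^*$-embedding $K_X \hookrightarrow G_n$ into a homogeneous-linear-plus-conjugacy group, carrying $x_{i_n}$ to an involutary generator. Form
\[
\widehat{G}_n := \langle G_n,\, t : t^2 = e,\, t x_{i_n} t = J x_{i_n} \rangle_{\Z_2},
\]
rewrite the added relation as $t x_{i_n} t = Z$ together with $Z x_{i_n} = J$ via an ancilla $Z$ with $Z^2 = e$ (making $\widehat{G}_n$ a linear-plus-conjugacy group), and apply Proposition~\ref{P:linearplusconjugacy} to $fa^*$-embed it over $\Z_2$ into the target solution group $\Gamma_n = \Gamma(A^{(n)}, b^{(n)})$.

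Properties (a) and (b) follow from size bookkeeping: $w(2^n)$ has length $\exp(O(n))$, so the ancilla introduced by Lemma~\ref{L:kha} total $\exp(O(n))$, and Remarks~\ref{R:embeddingsize} and~\ref{R:embeddingsize2} then show the final system has $\exp(O(n))$ variables and equations, uniformly computable in the same time. For (c)--(e), one chases the equivalence $g_n = e \iff n \in X$ through the chain. Since $K_X$ is sofic (solvable, hence amenable, hence sofic), non-triviality of $g_n$ in $K_X$ is equivalent to non-triviality in approximate representations; the $fa^*$-embeddings preserve this, and Lemma~\ref{L:homogextension} carries non-triviality of $x_{i_n}$ in approximate representations of $G_n$ to non-triviality of $J$ in approximate representations of $\widehat{G}_n$, hence of $\Gamma_n$. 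For (e), property~(5) of Theorem~\ref{T:kha} gives residual finiteness of $K_X$ when $X$ is recursive, so non-triviality of $g_n$ admits a finite-dimensional witness; this extends to a finite-dimensional representation of $\widehat{G}_n$ in which $J \mapsto -\Id$ after a direct-sum trick balancing the $\pm 1$ eigenspaces of $\phi(x_{i_n})$.

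The step I expect to be most delicate is that last extension in (e): given a finite-dimensional representation $\phi$ of $K_X$ (and hence of $G_n$) with $\phi(x_{i_n})$ non-trivial, one must produce an involutive $t$ satisfying $t \phi(x_{i_n}) t = -\phi(x_{i_n})$, which requires $\phi(x_{i_n})$ to have equal $+1$ and $-1$ multiplicities. This should be achievable by direct-summing $\phi$ with a suitable variant so as to equalize the eigenspace dimensions, the exact-representation analog of the tensor-power trick in the proof of Lemma~\ref{L:homogextension}. A secondary source of care is the size bookkeeping in (a) and (b), where one must count the ancilla introduced at each level of the recursive definition of $w(m)$ to ensure the exponential bound is not blown up to a doubly-exponential one.
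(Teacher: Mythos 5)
Your construction of the groups $\Gamma_n$ follows the paper's proof exactly for parts (a)--(d): you define $g_n = c(n)$ from property (4) of Theorem~\ref{T:kha}, invoke the second clause of Lemma~\ref{L:kha} with $S_0 = L_0$, $S_1 = S$ to present $K_X$ as an extended homogeneous-linear-plus-conjugacy group with $g_n$ identified with an involutary generator, push through Propositions~\ref{P:extended} and~\ref{P:linearplusconjugacy} after inserting the stable letter $t$, and use the soficity of the solvable group $K_X$ together with Lemma~\ref{L:homogextension} for the approximate-representation direction. The size bookkeeping in (a)--(b) is also the right idea, though the phrase ``$w(2^n)$ has length $\exp(O(n))$'' is misleading --- the fully expanded word has length doubly exponential in $n$; the $\exp(O(n))$ bound comes from counting a constant number of ancillas per level of the recursion defining $w(m)$ (as you do acknowledge at the end).

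The genuine gap is in part (e), which is precisely the point you flag as delicate. The paper disposes of it with Lemma~\ref{L:finitefreeprod}: it recognises $\widehat{G}_n$ as the amalgamated free product of $G_n \times \Z_2$ and the finite group $H \cong \Z_2 \ltimes (\Z_2\times\Z_2)$ over $\langle x_{i_n}, J\rangle \cong \Z_2\times\Z_2$, applies Baumslag's theorem that an amalgam of residually finite groups over a finite subgroup is residually finite (after passing to $(G_n\times\Z_2)^{fin}$), and concludes that $J$ survives in some finite-dimensional representation. Your proposed alternative --- ``direct-summing $\phi$ with a suitable variant so as to equalize the eigenspace dimensions'' --- is not an argument as written: there is no operation analogous to the tensor-power trick that makes $\widetilde{\tr}\,\phi(x_{i_n})$ \emph{exactly} zero for an arbitrary finite-dimensional representation $\phi$ of $G_n$; tensor powers only make it small, and you have no handle on what ``variant'' would reverse the eigenspace multiplicities while still representing $G_n$. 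What \emph{does} work, if you want to avoid Baumslag, is to abandon $\phi$ entirely and instead use residual finiteness of $K_X$ (property (5)) together with the $fin$-embedding $K_X \hookrightarrow G_n$ to find a \emph{finite} quotient $Q$ of $G_n$ in which $x_{i_n}$ is non-trivial, then take the regular representation of $Q$: the trace of any non-identity element in the regular representation of a finite group is $0$, so the $\pm 1$ eigenspaces of $x_{i_n}$ are automatically balanced, and the required $t$ and $J\mapsto -\Id$ can be defined directly. That observation closes your gap, but you neither state it nor give another mechanism, so as submitted part (e) is incomplete.
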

Before giving the proof, we need the following exact version of Lemma
\ref{L:homogextension}.
\begin{lemma}\label{L:finitefreeprod}
    Suppose $G = \langle S : R \rangle$ is a finitely-presented group, where $R$
    contains the relation $a^2 = e$ for some $a \in S$. Let
    \begin{equation*}
        \widehat{G} := \langle G, t : t^2 = e, tat = Ja \rangle_{\Z_2},
    \end{equation*}
    where $J,t \not\in S$. If $a$ is non-trivial in finite-dimensional representations
    of $G$, then $J$ is non-trivial in finite-dimensional representations of $\widehat{G}$.
\end{lemma}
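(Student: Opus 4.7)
The plan is to construct a finite-dimensional representation $\psi$ of $\widehat{G}$ with $\psi(J) = -\Id$. The relation $tat = Ja$ forces $\psi(t) \psi(a) \psi(t) = -\psi(a)$, so $\psi(t)$ must be a self-adjoint unitary anticommuting with $\psi(a)$. Such an operator exists if and only if the $+1$ and $-1$ eigenspaces of $\psi(a)$ have equal dimension, that is, $\tr(\psi(a)) = 0$. The approximate version (Lemma \ref{L:homogextension}) uses the tensor-power trick to make $\widetilde{\tr}(\phi(a))$ arbitrarily small; for the exact version we need it to be exactly zero, so a different construction is required.

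The key observation is that Mal'cev's theorem lets us upgrade the hypothesis from finite-dimensional representations to \emph{finite} representations. If $\phi : G \arr \mcU(H)$ is a finite-dimensional representation with $\phi(a) \neq \Id$, then the image $\phi(G)$ is a finitely-generated subgroup of $GL(H)$, hence residually finite by Mal'cev, so there is a finite quotient $Q$ of $G$ in which the image of $a$ is non-trivial. Take $\rho : G \arr Q \arr \mcU(L^2(Q))$ to be the composition with the left regular representation. Since $a$ has non-trivial image in $Q$ and satisfies $a^2 = e$, the operator $\rho(a)$ is a fixed-point-free involutive permutation, so $\tr(\rho(a)) = 0$ and its $\pm 1$ eigenspaces have equal dimension $|Q|/2$.

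With $\rho$ in hand, choose any unitary $U$ from the $-1$ eigenspace to the $+1$ eigenspace of $\rho(a)$, and let $T$ be the off-diagonal self-adjoint unitary with blocks $U$ and $U^*$ in the corresponding decomposition, so that $T^2 = \Id$ and $T \rho(a) T = -\rho(a)$. Setting $\psi|_G = \rho$, $\psi(t) = T$, and $\psi(J) = -\Id$, all the relations of $\widehat{G}$ are immediate: the relations of $G$ hold by the choice of $\rho$; $t^2 = J^2 = e$ and centrality of $J$ hold by construction; and $\psi(t)\psi(a)\psi(t) = -\rho(a) = \psi(J)\psi(a)$. The only substantive step is recognizing that Mal'cev's theorem converts the hypothesis ``non-trivial in finite-dimensional representations'' into ``non-trivial in finite representations''; after that, the exact-trace-zero requirement that doomed a direct adaptation of the approximate argument is supplied for free by the regular representation.
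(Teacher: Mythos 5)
Your proof is correct, and it takes a genuinely different route from the paper. The paper identifies $\widehat{G}$ as the amalgamated free product of $\widetilde{G} = G \times \Z_2$ and the finite group $H = \langle t, a : t^2 = a^2 = e, tat = aJ\rangle_{\Z_2} \iso \Z_2 \ltimes (\Z_2\times\Z_2)$ over $\langle a, J\rangle \iso \Z_2\times\Z_2$, replaces $\widetilde{G}$ by $\widetilde{G}^{fin}$, and then invokes Baumslag's theorem that an amalgamated free product of residually finite groups over a finite subgroup is residually finite. You instead build a finite-dimensional representation of $\widehat{G}$ by hand: Mal'cev converts the hypothesis into the existence of a finite quotient $Q$ of $G$ in which $a$ has non-trivial image, the left regular representation $\rho$ of $Q$ then has $\tr(\rho(a)) = 0$ because left multiplication by a non-identity involution is a fixed-point-free permutation, and the balanced $\pm1$ eigenspaces of $\rho(a)$ admit an explicit off-diagonal self-adjoint involution $T$ with $T\rho(a)T = -\rho(a)$, so $\psi|_G = \rho$, $\psi(t)=T$, $\psi(J)=-\Id$ is the desired representation. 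Your argument is more elementary, since it avoids Baumslag's structure theorem entirely, and it parallels the approximate case (Lemma~\ref{L:homogextension}) more transparently: there the obstruction was making $\widetilde{\tr}(\phi(a))$ small via the tensor-power trick, and you point out that the regular representation of a finite quotient makes it exactly zero for free. The paper's route packages all of this into an abstract residual-finiteness theorem; yours exposes the concrete construction.
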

\begin{proof}
    Suppose $a$ is non-trivial in finite-dimensional representations of $G$.
    A theorem of Baumslag states that the free product of two residually finite
    groups amalgamated over a finite subgroup is residually finite \cite{Ba63}. 
    Let $\widetilde{G} := G \times \Z_2$, where the generator of the $\Z_2$ factor
    is denoted by $J$, and let $H = \langle t, a : t^2 = a^2 = e, tat = aJ \rangle_{\Z_2}
    \iso \Z_2 \ltimes \Z_2 \times \Z_2$.  Then $\widehat{G}$ is isomorphic to amalgamated
    free product of $\widetilde{G}$ and $H$ over $\langle a,J \rangle \iso \Z_2
    \times \Z_2$, a finite group. While $\widetilde{G}$ is not necessarily
    residually finite, the group $\widetilde{G}^{fin}$ is residually finite by definition,
    and there is natural map from $\widehat{G}$ to the amalgamated free product
    of $\widetilde{G}^{fin}$ and $H$ over $\Z_2 \times \Z_2$. The image of
    $J_{\widetilde{G}}$ is non-trivial in $\widetilde{G}^{fin}$, and hence
    in the amalgamated product of $\widetilde{G}^{fin}$ and $H$. So $J$
    is non-trivial in finite-dimensional representations of $\widehat{G}$ by
    Baumslag's result.
\end{proof}

\begin{proof}[Proof of Proposition \ref{P:kha}]
    Given a recursively enumerable subset $X \subseteq \mbN$, let $K_X =
    \langle S : R \rangle$ be the associated group from Theorem \ref{T:kha}.
    Using the notation from property (4) of Theorem \ref{T:kha}, let $c(n) =
    [A_2,[A_1,w(2^n)]] [A_2,[A_1,z_0]]^{-1}$, so that $c(n) = e$ in $K_X$ if
    and only if $n \in X$. Since $c(n)$ belongs to $\mcN(L_0,S)$, Lemma
    \ref{L:kha} and property (3) of Theorem \ref{T:kha} implies that $K_X$ has
    a presentation $\langle S_n : R_n \rangle$ as an extended
    homogeneous-linear-plus-conjugacy group, in which $c(n)$ is equal to some
    involutary generator in $S_n$. Since the presentation $\langle S : R\rangle$
    is fixed, the size of $\langle S_n : R_n \rangle$ depends only on the number
    of ancilla generators and relations needed to set $c(n)$ equal to one of
    the involutary generators. Inspection of the argument from Lemma
    \ref{L:kha} reveals that we need to add $4m$ ancilla generators and
    relations to set $w(m)$ to an involutary generator. Thus $S_n$ and $R_n$
    will have size $O(2^n)$, and the function $n \mapsto (S_n,R_n)$ can be
    computed in $O(2^n)$-time. 

    By Proposition \ref{P:extended}, there is an $fa^*$-embedding from $\langle
    S_n : R_n \rangle$ to a homogeneous-linear-plus-conjugacy group $G_n$,
    in which $c(n)$ is mapped to some generator $x_i$. As in the proof of
    Proposition \ref{P:main}, let 
    \begin{equation*}
        \widehat{G}_n = \langle G_n, t : t^2 = e, t x_i t = J x_i \rangle_{\Z_2}. 
    \end{equation*}
    Then $\widehat{G}_n$ is a linear-plus-conjugacy group, and by Proposition
    \ref{P:linearplusconjugacy}, there is an $fa^*$-embedding of $\widehat{G}_n$
    in a solution group $\Gamma_n = \Gamma(A^{(n)}, b^{(n)})$. By Remarks
    \ref{R:embeddingsize} and \ref{R:embeddingsize2}, $A^{(n)}$ and $b^{(n)}$
    can be constructed in time polynomial in $|S_n|$ and $|R_n|$, so
    $A^{(n)}$ and $b^{(n)}$ satisfy parts (a) and (b) of the proposition.

    Suppose $c(n)$ is non-trivial. Since $K_X$ is solvable, it is hyperlinear,
    so $c(n)$ is non-trivial in approximate representations. By Lemma
    \ref{L:homogextension}, $J_{\Gamma_n}$ will be non-trivial in approximate
    representations. If $X$ is recursive, then $K_X$ will be residually finite
    by property (5) of Theorem \ref{T:kha}, and hence $J_{\Gamma_n}$ will be
    non-trivial in finite-dimensional representations by Lemma \ref{L:finitefreeprod}
    (this uses the fact that $fa^*$-embeddings are also $fin$-embeddings).
    On the other hand, if $c(n)$ is trivial then $J_{\Gamma_n}$ will be trivial.
    Hence parts (c)-(e) of the proposition follow from property (4) of Theorem
    \ref{T:kha}. 
\end{proof}

\begin{proof}[Proof of Theorem \ref{T:main2}]
    Let $X \subseteq \mbN$ be a recursively enumerable but non-recursive set,
    and take the family $\{\mcG_n : n \in \mbN \}$ of games associated to the
    solution groups $\{ \Gamma_n : n \in \mbN \}$ constructed in Proposition
    \ref{P:kha}. By Theorem \ref{T:solutiongroup2} and part (c) of Proposition
    \ref{P:kha}, $\mcG_n$ will have a perfect strategy in $C_{qc}$ if and only
    if $n \in X$. By Proposition \ref{P:reptostrat} and part (d) of Proposition
    \ref{P:kha}, $\mcG_n$ will have a perfect strategy in $C_{qc}$ if and only
    if it has a perfect strategy in $C_{qa}$. Because the function $n \mapsto
    \mcG_n$ is computable by part (b) of Proposition \ref{P:kha}, it is undecidable
    to determine if  the games in this family have perfect srategies in
    $C_{qa}$.
\end{proof}

\begin{proof}[Proof of Theorem \ref{T:main3}]
    Given a computable function $f(n)$, let $X \subseteq \mbN$ be a 
    recursive subset such that for any Turing machine accepting $X$,
    the running time over inputs $n \leq N$ is at least $f(N)$ when
    $N$ is sufficiently large.\footnote{Often when talking about the
    running time, we look at the maximum running time over inputs
    of size $\leq N$, rather than value $\leq N$. However, thinking of
    the running time in terms of the values of the inputs does not change
    the fact that such sets $X$ exist.} Once again, we can take
    the family of games $\{\mcG_n : n \in \mbN\}$ associated to the solution groups
    $\{\Gamma_n : n \in \mbN\}$ from Proposition \ref{P:kha}. Then part (a) of Theorem \ref{T:main3}
    follows from parts (a) and (b) of Proposition \ref{P:kha}, while parts (b)
    and (c) of Theorem \ref{T:main3} follow from parts (c) and (e) of Proposition
    \ref{P:kha}, as well as Theorems \ref{T:solutiongroup} and
    \ref{T:solutiongroup2}.
\end{proof}

\begin{proof}[Proof of Corollary \ref{C:main3}]
    Suppose there is an algorithm to decide if a linear system game has a
    perfect strategy in $C_q$. Let $g(n)$ be the running time of this algorithm
    on games coming from linear systems with at most $n$ rows and columns.
    Note that $g(n)$ is an increasing function.  Let $f(n)$ be any computable
    function such that 
    \begin{equation*}
        f(n) > g\left(2^{n^2}\right) + 2^{n^2}
    \end{equation*}
    for all $n \geq 1$. Let $\mcG_n$ be the family of games associated
    to $f(n)$ as in Theorem \ref{T:main3}. Then there is a constant $C$
    such that $\mcG_n$ has size $\leq 2^{Cn}$ for all $n \geq 1$, and the
    function $n \mapsto \mcG_n$ is computable in time $2^{Cn}$. Plugging
    $\mcG_n$ into the algorithm to decide whether a linear system game
    has a perfect strategy in $C_q$, we get an algorithm for the language
    \begin{equation*}
        X = \{n \in \mbN : \mcG_n \text{ has a perfect strategy in } C_q \}
    \end{equation*}
    with running time at most $g(2^{CN}) + 2^{CN}$ on inputs $n \leq N$.
    But by part (b) of Theorem \ref{T:main3}, when $N$ is sufficiently large
    the maximum running time on inputs $n \leq N$ for any algorithm for $X$
    must be at least $f(N)$. Since $N^2$ will eventually be larger than $C N$,
    we get a contradiction. Thus there is no algorithm to decide if a linear
    system game has a perfect strategy in $C_q$.
\end{proof}

\bibliographystyle{amsalpha}
\bibliography{approx}

\end{document}